%
\documentclass[runningheads,orivec,envcountsame]{llncs}
\usepackage[T1]{fontenc}
%
\usepackage{graphicx}
%
\usepackage{hyperref}
\usepackage{color}

\usepackage{amsthm}

\usepackage{amssymb}
\usepackage{xspace}
\usepackage{relsize}
\usepackage{scalefnt}
\usepackage{todonotes}
\usepackage{mathtools}
\usepackage{booktabs}
\usepackage{comment}
\usepackage{enumerate}
\newcommand{\logicFont}[1]{\protect\ensuremath{\mathrm{#1}}\xspace}
\newcommand{\problemmaFont}[1]{\protect\ensuremath{\mathsf{#1}}\xspace}

\newcommand{\classFont}[1]{\protect\ensuremath{\mathsf{#1}}\xspace}

\newcommand{\existso}{\ensuremath{\exists^1}\xspace}
\newcommand{\forallo}{\ensuremath{\forall^1}\xspace}
\newcommand{\wcn}{\ensuremath{\mathop{\dot\sim}}\xspace}
\newcommand{\bn}{\ensuremath{\mathop{\sim}}\xspace}
\newcommand\ScaleExists[1]{\vcenter{\hbox{\scalefont{#1}$\exists$}}}
\DeclareMathOperator*\bigexists{%
  \vphantom\sum
  \mathchoice{\ScaleExists{2}}{\ScaleExists{1.4}}{\ScaleExists{1}}{\ScaleExists{0.75}}}

\newcommand{\dblsetminus}{\mathbin{{\setminus}\mspace{-5mu}{\setminus}}}
\newcommand{\vvee}{\raisebox{1pt}{\ensuremath{\,\mathop{\mathsmaller{\mathsmaller{\dblsetminus\hspace{-0.23ex}/}}}\,}}}



\newcommand{\PL}{\logicFont{PL}}



\newcommand{\pind}{\perp\!\!\!\perp}
\newcommand{\cpind}{\perp\!\!\!\perp_{\mathrm{c}}}

\newcommand {\pci}[3] {#2~\!\!\pind_{#1}\!\!~#3}
\newcommand {\pmi}[2] {#1~\!\!\pind\!\!~#2}

\newcommand{\SUM}{\mathrm{SUM}}
\newcommand{\tuple}[1] {\protect\ensuremath{\mathbf{#1}}}
\newcommand{\x}{\tuple x}
\newcommand{\y}{\tuple y}
\newcommand{\z}{\tuple z}


\newcommand{\SAT}{\problemmaFont{SAT}}

\newcommand{\MC}{\problemmaFont{MC}}

\newcommand{\VAL}{\problemmaFont{VAL}}


\newcommand{\NEXPTIME}{\classFont{NEXPTIME}}
\newcommand{\EXPTIME}{\classFont{EXPTIME}}

\newcommand{\PSPACE}{\classFont{PSPACE}}

\newcommand{\EXPSPACE}{\classFont{EXPSPACE}}
\newcommand{\twoEXPSPACE}{\classFont{2\text-EXPSPACE}}
\newcommand{\threeEXPSPACE}{\classFont{3\text-EXPSPACE}}

\newcommand{\AEXPTIME}{\classFont{AEXPTIME}}

\newcommand{\RE}{\classFont{RE}}
\newcommand{\coRE}{\classFont{coRE}}
\newcommand{\bbR}{\mathbb{R}}


\newcommand{\var}{\operatorname{Var}}
\newcommand{\Fr}{\operatorname{Fr}}
\newcommand{\ar}{\operatorname{Ar}}

\newcommand{\A}{\mathcal{A}}

\newcommand{\X}{\mathbb{X}}

\newcommand{\Y}{\mathbb{Y}}

\def\dep{=\!\!}
\newcommand{\supp}{\mathrm{supp}}

\newcommand{\entropyteam}[2]{\eH_{#2}(#1)}
\newcommand{\eH}{\mathrm{H}}
\newcommand{\entropy}[1]{\eH(#1)}

\newcommand{\entropyatom}[2]{\entropy{#1} = \entropy{#2}}

\newcommand{\ddfn}{\Coloneqq}
\newcommand{\dfn}{\coloneqq}

\newcommand{\probinclogic}{\FO (\approx)}

\newcommand{\poly}{\mathrm{poly}}

\def\dep{=\!\!}
\newcommand{\FO}{{\mathrm{FO}}}
\newcommand{\FOPT}{{\mathrm{FOPT}(\leq_{c}^\delta)}}

\newcommand{\ESO}{{\mathrm{ESO}}}
\newcommand{\SO}{{\mathrm{SO}}}

\newcommand{\ESOr}{\ESO_{\mathbb{R}}(\SUM,+,\times)}

\newcommand{\f}{\phi}

%
%
%
%
%
%
%

%
\title{
Logics with probabilistic team semantics\\ and the Boolean negation
}
%
%
\author{Miika Hannula\inst{1}\orcidID{0000-0002-9637-6664} \and
Minna Hirvonen\inst{1}\orcidID{0000-0002-2701-9620} \and
Juha Kontinen\inst{1}\orcidID{0000-0003-0115-5154} \and
Yasir Mahmood\inst{2}\orcidID{0000-0002-5651-5391} \and
Arne Meier\inst{3}\orcidID{0000-0002-8061-5376}\and
Jonni Virtema\inst{4}\orcidID{0000-0002-1582-3718}}
\authorrunning{M. Hannula et al.}
%
\institute{University of Helsinki, Department of Mathematics and Statistics, FI \and
Paderborn University, DICE group, Department of Computer Science, DE\and
Leibniz Universit\"at Hannover, Institut f\"ur Theoretische Informatik, Hannover, DE\and
University of Sheffield, Department of Computer Science, Sheffield, UK\\
\email{\{miika.hannula,minna.hirvonen,juha.kontinen\}@helsinki.fi}, \email{yasir.mahmood@uni-paderborn.de}, \email{meier@thi.uni-hannover.de}, \email{j.t.virtema@sheffield.ac.uk}}

\begin{document}
\maketitle              
\begin{abstract}
We study the expressivity and the complexity of various logics in probabilistic team semantics with the Boolean negation. 
In particular, we study the extension of probabilistic independence logic with the Boolean negation, and a recently introduced logic FOPT. 
We give a comprehensive picture of the relative expressivity of these logics together with the most studied logics in probabilistic team semantics setting, as well as relating their expressivity to a numerical variant of second-order logic. 
In addition, we introduce novel entropy atoms and show that the extension of first-order logic by entropy atoms subsumes probabilistic independence logic. 
Finally, we obtain some results on the complexity of model checking, validity, and satisfiability of our logics.

\keywords{Probabilistic Team Semantics \and Model Checking \and Satisfiability \and Validity \and Computational Complexity \and Expressivity of Logics}
\end{abstract}
\section{Introduction}

Probabilistic team semantics is a novel framework for the logical analysis of probabilistic and quantitative dependencies.
Team semantics, as a semantic framework for logics involving qualitative dependencies and independencies, was introduced by Hodges \cite{hodges97} and popularised by V\"a\"an\"anen \cite{vaananen07} via his dependence logic. 
Team semantics defines truth in reference to collections of assignments, called \emph{teams}, and is particularly suitable for the formal analysis of properties, such as the functional dependence between variables, that arise only in the presence of multiple assignments. 
The idea of generalising team semantics to the probabilistic setting can be traced back to the works of Galliani \cite{galliani08} and Hyttinen et al.~\cite{Hyttinen15b}, however the beginning of a more systematic study of the topic dates back to works of Durand~et~al.~\cite{DHKMV18}.

In \emph{probabilistic team semantics} the basic semantic units are probability distributions (i.e., \emph{probabilistic teams}). 
This shift from set-based to distribution-based semantics allows probabilistic notions of dependency, such as conditional probabilistic independence, to be embedded in the framework\footnote{In \cite{Li22a} Li recently introduced \emph{first-order theory of random variables with probabilistic independence (FOTPI)} whose variables are interpreted by discrete distributions over the unit interval. The paper shows that true arithmetic is interpretable in FOTPI whereas probabilistic independence logic is by our results far less complex.}. 
The expressivity and complexity of non-probabilistic team-based logics can be related to fragments of (existential) second-order logic and have been studied extensively (see, e.g., \cite{GallianiH13,DurandKRV22,HannulaH22}). 
Team-based logics, by definition, are usually not closed under Boolean negation, so adding it can greatly increase the complexity and expressivity of these logics \cite{KontinenN11,HannulaKVV18}. 
Some expressivity and complexity  results have also been obtained for logics in  probabilistic team semantics (see below). 
However, richer semantic and computational frameworks are sometimes needed to characterise these logics.

\emph{Metafinite Model Theory}, introduced by Gr\"adel and Gurevich \cite{GradelG98}, generalises the approach of \emph{Finite Model Theory} by shifting to two-sorted structures, which extend finite structures by another (often infinite) numerical domain 
and weight functions bridging the two sorts. 
A particularly important subclass of metafinite structures are the so-called \emph{$\bbR$-structures}, which extend finite structures with the real arithmetic on the second sort. 
\emph{Blum-Shub-Smale machines} (BSS machines for short) \cite{blum1989} are essentially register machines with registers that can store arbitrary real numbers and compute rational functions over reals in a single time step. 
Interestingly, Boolean languages which are decidable by a non-deterministic polynomial-time BSS machine coincide with those languages which are PTIME-reducible to the true existential sentences of real arithmetic (i.e., the complexity class $\exists\bbR$) \cite{BurgisserC06,SchaeferS17}.

Recent works have established fascinating connections between second-order logics over $\bbR$-structures, complexity classes using the BSS-model of computation, and logics using probabilistic team semantics. 
In \cite{HannulaKBV20}, Hannula et al.\ establish that the expressivity and complexity of probabilistic independence logic coincide with a particular fragment of existential second-order logic over $\bbR$-structures and NP on BSS-machines. 
In \cite{HannulaV22}, Hannula and Virtema focus on probabilistic inclusion logic, which is shown to be tractable (when restricted to Boolean inputs), and relate it to linear programming.

In this paper, we focus on the expressivity and model checking complexity of probabilistic team-based logics that have access to Boolean negation. 
We also study the connections between probabilistic independence logic and a logic called $\FOPT$, which is defined via a computationally simpler probabilistic semantics~\cite{HHK22}. 
The logic $\FOPT$ is the probabilistic variant of a certain team-based logic that can define exactly those dependencies that are first-order definable \cite{kontinen_yang_2022}. 
We also introduce novel entropy atoms and relate the extension of first-order logic with these atoms to probabilistic independence logic.
This version of the paper includes the proofs omitted from the conference version \cite{jelia23}. 

See Figure \ref{fig:landscape} for our expressivity results and Table \ref{tab:results} for our complexity results.

\begin{figure}[t]
	\centering
\resizebox{\linewidth}{!}{\begin{tikzpicture}
		\node at (-1,4) {\underline{formulas}:};

		\node (SOR+xlog) at (0,3.5) {$\SO_{\mathbb R}(+,\times,\log)$};
		\node (FOindneg) at (1.7,2.5) {$\FO(\cpind,\sim)=\SO_{\mathbb R}(+,\times)$};
		\node (FOH) at (-1.7,2.5) {$\FO(\eH)$};
		\node (FOind) at (0,1) {$\FO(\cpind)$};
		
		\node (FOapprox) at (0,0) {$\FO(\approx)$};
		\node (FOPT) at (1.7,0) {$\FOPT$};		

		\node (FO) at (1,-1) {$\FO$};

		\node at (5,4) {\underline{sentences}:};

		\node (sSOR+xlog) at (6,3.5) {$\SO_{\mathbb R}(+,\times,\log)$};
		\node[label={0:{\sffamily\footnotesize [Thm.~\ref{thm_indSO}]}}] (sFOindneg) at (7.5,2.5) {$\FO(\cpind,\sim)=\SO_{\mathbb R}(+,\times)$};
		\node (sFOH) at (4.5,2.5) {$\FO(\eH)$};
		\node (sFOind) at (6,1) {$\FO(\cpind)$};		
		\node (sFOapprox) at (6,0) {$\FO(\approx)$};
		\node[label={0:{\sffamily\footnotesize [Thm.~\ref{thm:FOPT}]}}] (sFO) at (6,-1) {$\FO=\FOPT$};

		\foreach \f/\t in {FOind/FOH,FOH/SOR+xlog,FOindneg/SOR+xlog,sFOapprox/sFOind,sFOind/sFOH,sFOind/sFOindneg,sFOH/sSOR+xlog,sFOindneg/sSOR+xlog}{
			\path[->] (\f) edge (\t);
		}
		
		\foreach \f/\t in {FO/FOapprox,FO/FOPT,FOapprox/FOind,FOind/FOindneg,FOPT/FOindneg,sFO/sFOapprox}{
			\path[->>] (\f) edge (\t);
		}
		
		\path (FOPT) edge node[midway, above, sloped, yshift=-.5mm] {\sffamily\footnotesize Thm.~\ref{thm:openforms-fopt-fpind-neg}} (FOindneg);
		\path (FOapprox) edge node[xshift=-1.5mm,anchor=east] {{\sffamily\footnotesize \cite[Thm.~10]{HannulaHKKV19}}} (FOind);
		\path (FOind) edge node[midway, above, sloped, yshift=-.5mm] {{\sffamily\footnotesize Cor.~\ref{cor:FO-ind-eh}}} (FOH);
		\path (FOind) edge node[midway, above, sloped, yshift=-.5mm] {{\sffamily\footnotesize Prop.~\ref{prop:FO-ind-strict-FO-ind-sim}}} (FOindneg);
		\path (FOH) edge node[midway, below, sloped, xshift=.5mm] {{\sffamily\footnotesize Thm.~\ref{thm:H-ESO-H-neg-SO}}} (SOR+xlog);		
		
		\foreach \l/\x/\y in {{AC^0}/11/-1,P/11/0,{\exists\mathbb R}/11/1}{
			\node at (\x,\y) {$\mathsf\l$};
		}
		
	\end{tikzpicture}}
	\caption{Landscape of relevant logics as well as relation to some complexity classes. Note that for the complexity classes, finite ordered structures are required. Single arrows indicate inclusions and double arrows indicate strict inclusions.}\label{fig:landscape}
\end{figure}
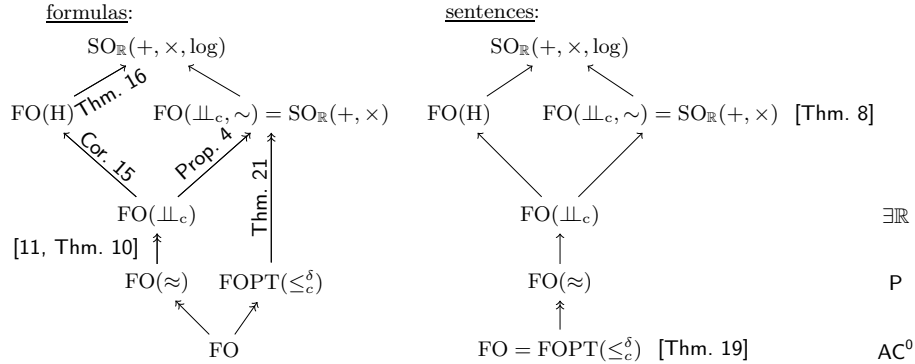

\begin{table}
	\centering
	\resizebox{\linewidth}{!}{%
	\begin{tabular}{cccc}\toprule
	Logic & $\MC$ for sentences& $\SAT$ & $\VAL$\\\midrule
	$\FOPT$ & 
	$\PSPACE$ {\small(Cor.~\ref{cor:sent-mcfopt-pspace})}&
	$\RE$ \cite[Thm.~5.2]{HHK22}&$\coRE$ \cite[Thm.~5.2]{HHK22}
	\\
	$\FO(\cpind)$ & 
	$\in\EXPSPACE$ and $\NEXPTIME$-hard {\small(Thm.~\ref{thm:MCFO-ind})}& 
	$\RE$ \small{(Thm.~\ref{thm:ind-sim-val-sat})} & 
	$\coRE$ {\small(Thm.~\ref{thm:ind-sim-val-sat})}
	\\
	$\FO(\sim)$ &
	$\AEXPTIME[\poly]$ \cite[Prop.~5.16, Lem.~5.21]{DBLP:phd/dnb/Luck20}& 
	$\RE$~\cite[Thm.~5.6]{DBLP:phd/dnb/Luck20} & 
	$\coRE$~\cite[Thm.~5.6]{DBLP:phd/dnb/Luck20}
	\\
	$\probinclogic$ & 
	$\in\EXPTIME$, $\PSPACE$-hard {\small(Thm.~\ref{thm:MC-inc})}&
	$\RE$ \small{(Thm.~\ref{thm:ind-sim-val-sat})} & 
	$\coRE$ {\small(Thm.~\ref{thm:ind-sim-val-sat})}
	\\
	$\FO(\sim,\cpind)$ & 
	$\in\threeEXPSPACE$, $\AEXPTIME[\poly]$-hard {\small(Thm.~\ref{thm:sent-mc-sim-ind-expspace-aexp-poly})} & 
	$\RE$ \small{(Thm.~\ref{thm:ind-sim-val-sat})} & 
	$\coRE$ {\small(Thm.~\ref{thm:ind-sim-val-sat})}
	\\\bottomrule		
	\end{tabular}}
	
	\caption{Overview of our results. Unless otherwise noted, the results are completeness results. Satisfiability and Validity are considered for finite structures.}\label{tab:results}
\end{table}

\section{Preliminaries}

We assume the reader is familiar with the basics in complexity theory~\cite{0072413}.
In this work, we will encounter complexity classes $\PSPACE$, $\EXPTIME$, $\NEXPTIME$, $\EXPSPACE$ and the class $\AEXPTIME[\poly]$ together with the notion of completeness under the usual polynomial time many to one reductions. 
A bit more formally for the latter complexity class which is more uncommon than the others, $\AEXPTIME[\poly]$ consists of all languages that can be decided by alternating Turing machines within an exponential runtime of $O(2^{n^{O(1)}})$ and polynomially many alternations between universal and existential states. 
There exist problems in propositional team logic with generalized dependence atoms that are complete for this class~\cite{DBLP:journals/corr/HannulaKLV16}. 
It is also known that truth evaluation of alternating dependency quantified boolean formulae (ADQBF) is complete for this class~\cite{DBLP:journals/corr/HannulaKLV16}. 



\subsection{Probabilistic team semantics}

We denote first-order variables by $x,y,z$ and tuples of first-order variables by $\tuple x,\tuple y,\tuple z$. For the length of the tuple $\tuple x$, we write $|\tuple x|$. The set of variables that appear in the tuple $\tuple x$ is denoted by $\var(\tuple x)$. A vocabulary $\tau$ is a finite set of relation, function, and constant symbols, denoted by $R$, $f$, and $c$, respectively. Each relation symbol $R$ and function symbol $f$ has a prescribed arity, denoted by $\ar(R)$ and $\ar(f)$.

Let $\tau$ be a finite relational vocabulary such that $\{=\}\subseteq\tau$. For a finite $\tau$-structure $\A$ and a finite set of variables  $D$, an \textit{assignment} of $\A$ for $D$ is a function $s\colon D\to A$. A \textit{team} $X$ of $\A$ over $D$ is a finite set of assignments $s\colon D\to A$.

A \textit{probabilistic team} $\X$ is a function $\X\colon X\to\mathbb{R}_{\geq 0}$, where $\mathbb{R}_{\geq 0}$ is the set of non-negative real numbers. The value $\X(s)$ is called the \textit{weight} of assignment $s$. Since zero-weights are allowed, we may, when useful, assume that $X$ is maximal, i.e., it contains all assignments $s\colon D\to A$. The \textit{support} of $\X$ is defined as $\supp(\X)\coloneqq\{s\in X\mid\X(s)\neq 0\}$. A team $\X$ is \textit{nonempty} if $\supp(\X)\neq\varnothing$.

These teams are called probabilistic because we usually consider teams that are probability distributions, i.e., functions $\X\colon X\to\mathbb{R}_{\geq 0}$ for which $\sum_{s\in X}\X(s)=1$.\footnote{In some sources, the term probabilistic team only refers to teams that are distributions, and the functions  $\X\colon X\to\mathbb{R}_{\geq 0}$ that are not distributions are called \textit{weighted teams}.}  
In this setting, the weight of an assignment can be thought of as the probability that the values of the variables are as in the assignment.
If $\X$ is a probability distribution, we also write $\X\colon X\to [0,1]$. 

For a set of variables  $V$, the restriction of the assignment $s$ to $V$ is denoted by $s\restriction{V}$. 
The \textit{restriction of a team} $X$ to $V$ is $X\restriction{V}=\{s\restriction{V}\mid s\in X\}$, and the \textit{restriction of a probabilistic team} $\X$ to $V$ is $\X\restriction{V}\colon X\restriction{V}\to \mathbb{R}_{\geq 0}$ where
\[
(\X\restriction{V})(s)=\sum_{\substack{s'\restriction{V}=s,\\ s'\in X}}\X(s').
\]

If $\phi$ is a first-order formula, then $\X_\phi$ is the restriction of the team $\X$ to those assignments in $X$ that satisfy the formula $\phi$. The weight $|\X_\phi|$ is defined analogously as the sum of the weights of the assignments in $X$ that satisfy $\phi$, e.g.,
\[
|\X_{\tuple x=\tuple a}|=\sum_{\substack{s\in X,\\ s(\tuple x)=\tuple a}}\X(s).
\]

For a variable $x$ and $a\in A$, we denote by $s(a/x)$, the modified assignment $s(a/x)\colon D\cup\{x\}\to A$ such that $s(a/x)(y)=a$ if $y=x$, and $s(a/x)(y)=s(y)$ otherwise. For a set $B\subseteq A$, the modified team $X(B/x)$ is defined as the set $X(B/x)\coloneqq\{s(a/x)\mid a\in B, s\in X\}$. 

Let $\X\colon X\to\mathbb{R}_{\geq 0}$ be any probabilistic team. Then the probabilistic team $\X(B/x)$ is a function $\X(B/x)\colon X(B/x)\to\mathbb{R}_{\geq 0}$ defined as
\[
\X(B/x)(s(a/x))=\sum_{\substack{t\in X,\\ t(a/x)=s(a/x)}}\X(t)\cdot\frac{1}{|B|}.
\]
If $x$ is a fresh variable, the summation can be dropped and the right-hand side of the equation becomes  $\X(s)\cdot\frac{1}{|B|}$. For singletons $B=\{a\}$, we write $X(a/x)$ and $\X(a/x)$ instead of $X(\{a\}/x)$ and $\X(\{a\}/x)$. 

Let then $\X\colon X\to[0,1]$ be a distribution. Denote by $p_B$ the set of all probability distributions $d\colon B\to[0,1]$, and let $F$ be a function $F\colon X\to p_B$. 
Then the probabilistic team $\X(F/x)$ is a function $\X(F/x)\colon X(B/x)\to[0,1]$ defined as
\[
\X(F/x)(s(a/x))=\sum_{\substack{t\in X,\\ t(a/x)=s(a/x)}}\X(t)\cdot F(t)(a)
\]
for all $a\in B$ and $s\in X$. If $x$ is a fresh variable, the summation can again be dropped and the right-hand side of the equation becomes $\X(s)\cdot F(s)(a)$.

Let $\X\colon X\to[0,1]$ and $\mathbb{Y}\colon Y\to[0,1]$ be probabilistic teams with common variable and value domains, and let $k\in[0,1]$. The $k$-scaled union of $\X$ and $\mathbb{Y}$, denoted by $\mathbb{X}\sqcup_k\mathbb{Y}$, is the probabilistic team $\mathbb{X}\sqcup_k\mathbb{Y}\colon Y\to[0,1]$ defined as
\[
\mathbb{X}\sqcup_k\mathbb{Y}(s)\coloneqq
\begin{cases}
k\cdot\X(s)+(1-k)\cdot\mathbb{Y}(s)\qquad &\text{if } s\in X\cap Y,\\
k\cdot\X(s) &\text{if }s\in X\setminus Y,\\
(1-k)\cdot\mathbb{Y}(s) &\text{if }s\in Y\setminus X.
\end{cases}
\]

\section{Probabilistic independence logic with Boolean negation}

In this section, we define probabilistic independence logic with Boolean negation, denoted by $\FO(\cpind,\sim)$. The logic extends first-order logic with \textit{probabilistic independence atom} $\pci{\tuple x}{\tuple y}{\tuple z}$ which states that the tuples $\tuple y$ and $\tuple z$ are independent given the tuple $\tuple x$.
The syntax for the logic $\FO(\cpind,\sim)$ over a vocabulary $\tau$ is as follows:
\[
\phi\Coloneqq  R(\tuple x)\mid \neg R(\tuple x) \mid \pci{\tuple x}{\tuple y}{\tuple z} \mid \bn\phi \mid (\phi\wedge\phi) \mid (\phi\lor\phi) \mid \exists x\phi \mid \forall x\phi,
\]
where $x$ is a first-order variable, $\tuple x$, $\tuple y$, and $\tuple z$ are tuples of first-order variables, and $R\in\tau$.

Let $\psi$ be a first-order formula. We denote by $\psi^\neg$ the formula which is obtained from $\neg\psi$ by pushing the negation in front of atomic formulas. We also use the shorthand notations $\psi \to \phi \dfn (\psi^\neg \lor (\psi \land \phi))$ and $\psi \leftrightarrow \phi\coloneqq\psi \to \phi\wedge\phi \to \psi$.

Let $\X\colon X\to[0,1]$ be a probability distribution. The semantics for the logic is defined as follows:
\begin{itemize}
\item[] $\A\models_{\X}R(\tuple x)$ iff $\A\models_{s}R(\tuple x)$ for all $s\in \supp(\X)$.
\item[] $\A\models_{\X}\neg R(\tuple x)$ iff $\A\models_{s}\neg R(\tuple x)$ for all $s\in \supp(\X)$.
\item[] $\A\models_{\X}\pci{\tuple x}{\tuple y}{\tuple z}$ iff $|\X_{\tuple x\tuple y=s(\tuple x\tuple y)}|\cdot|\X_{\tuple x\tuple z=s(\tuple x\tuple z)}|=|\X_{\tuple x\tuple y\tuple z=s(\tuple x\tuple y\tuple z)}|\cdot|\X_{\tuple x=s(\tuple x)}|$ for all $s\colon\var(\tuple x\tuple y\tuple z)\to A$.
\item[] $\A\models_{\X}\bn\phi$ iff $\A\not\models_{\X}\phi$.
\item[] $\A\models_{\X}\phi\wedge\psi$ iff $\A\models_{\X}\phi$ and $\A\models_{\X}\psi$.
\item[] $\A\models_{\X}\phi\lor\psi$ iff $\A\models_{\mathbb{Y}}\phi$ and $\A\models_{\mathbb{Z}}\psi$ for some $\mathbb{Y},\mathbb{Z},k$ such that $\mathbb{Y}\sqcup_k\mathbb{Z}=\X$.
\item[] $\A\models_{\X}\exists x\phi$ iff $\A\models_{\X(F/x)}\phi$ for some $F\colon X\to p_A$.
\item[] $\A\models_{\X}\forall x\phi$ iff $\A\models_{\X(A/x)}\phi$. 
\end{itemize}

The satisfaction relation $\models_s$ above refers to the Tarski semantics of first-order logic. For a sentence $\phi$, we write $\A\models\phi$ if $\A\models_{\X_\emptyset}\phi$, where $\X_\emptyset$ is the distribution that maps the empty assignment to 1.

The logic also has the following useful property called \textit{locality}. Denote by $\Fr(\phi)$ the set of the free variables of a formula $\phi$.
\begin{proposition}[\textbf{Locality}, {\cite[Prop.~12]{DHKMV18}}]\label{locality2}
Let $\phi$ be any $\FO(\cpind,\sim)[\tau]$-formula. Then for any set of variables $V$, any $\tau$-structure $\A$, and any probabilistic team $\X\colon X\to[0,1]$ such that $\Fr(\phi)\subseteq V\subseteq D$, 
\[
\A\models_{\X}\phi \iff \A\models_{\X\restriction{V}}\phi.
\]
\end{proposition}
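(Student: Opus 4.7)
The plan is to prove locality by structural induction on $\phi$, with the key technical content living in the quantifier and disjunction cases; the atomic, Boolean-negation, and conjunction cases are essentially immediate. Throughout, fix $\A$, $V \supseteq \Fr(\phi)$, and $\X \colon X \to [0,1]$ with $\mathrm{Dom}(\X) = D \supseteq V$.

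\textbf{Base cases.} For $R(\tuple x)$ and $\neg R(\tuple x)$, note that $\var(\tuple x) \subseteq \Fr(\phi) \subseteq V$, and the map $s \mapsto s\restriction V$ restricts to a surjection $\supp(\X) \twoheadrightarrow \supp(\X\restriction V)$ that preserves the value of $\tuple x$; hence $\A \models_s R(\tuple x)$ for all $s \in \supp(\X)$ iff $\A \models_{s'} R(\tuple x)$ for all $s' \in \supp(\X\restriction V)$. For $\pci{\tuple x}{\tuple y}{\tuple z}$, every marginal weight $|\X_{\tuple w = s(\tuple w)}|$ with $\var(\tuple w) \subseteq \var(\tuple x\tuple y\tuple z) \subseteq V$ is by definition $\sum_{t \in X,\, t(\tuple w) = s(\tuple w)} \X(t)$, which equals the corresponding marginal computed from $\X\restriction V$; so the product identity defining the atom is invariant under restriction.

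\textbf{Boolean cases.} The Boolean negation $\bn\phi$ is immediate from the induction hypothesis, since the semantics reduces to $\A \not\models_\X \phi$ and $\Fr(\bn\phi) = \Fr(\phi)$. Conjunction is likewise immediate. For disjunction $\phi \lor \psi$, the main observation is that the scaled-union operation commutes with restriction: if $\X = \Y \sqcup_k \mathbb{Z}$, then $\X\restriction V = (\Y\restriction V) \sqcup_k (\mathbb{Z}\restriction V)$, which follows by summing the defining cases of $\sqcup_k$ over each fibre of the restriction map. Conversely, any decomposition $\X\restriction V = \Y' \sqcup_k \mathbb{Z}'$ can be lifted to a decomposition of $\X$ by distributing each $\X(s)$ between $\Y$ and $\mathbb{Z}$ in the same ratio as $\Y'(s\restriction V) : \mathbb{Z}'(s\restriction V)$ (using $\X\restriction V(s\restriction V) = k\Y'(s\restriction V) + (1-k)\mathbb{Z}'(s\restriction V)$; degenerate zero-weight fibres are handled trivially). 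Applying the induction hypothesis to $\Y, \mathbb{Z}, \Y', \mathbb{Z}'$ then yields the equivalence.

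\textbf{Quantifier cases.} For $\forall x\phi$, one checks directly from the definitions that $(\X(A/x))\restriction (V \cup \{x\}) = (\X\restriction V)(A/x)$: both sides assign to $s(a/x)$ (for $s$ defined on $V$ and $a \in A$) the weight $\tfrac{1}{|A|}(\X\restriction V)(s)$. Since $\Fr(\phi) \subseteq (V \setminus \{x\}) \cup \{x\} \subseteq V \cup \{x\}$, the induction hypothesis applies to $\phi$ with the variable set $V \cup \{x\}$, giving the equivalence. The existential case is the genuine obstacle: here, given $F \colon X \to p_A$ witnessing $\A \models_\X \exists x\phi$, we must produce an $F' \colon X\restriction V \to p_A$ such that $(\X(F/x))\restriction(V\cup\{x\})$ agrees with $(\X\restriction V)(F'/x)$. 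The right definition is the conditional average
\[
F'(s')(a) \;\coloneqq\; \frac{1}{(\X\restriction V)(s')}\sum_{\substack{t \in X \\ t\restriction V = s'}} \X(t)\, F(t)(a)
\]
whenever $(\X\restriction V)(s') > 0$, and arbitrary otherwise; one verifies that $F'(s')$ is a probability distribution and that the two probabilistic teams above coincide by a direct rearrangement of sums. The converse direction (lifting an $F'$ on $X\restriction V$ to an $F$ on $X$) is even easier: set $F(t) \coloneqq F'(t\restriction V)$. Applying the induction hypothesis to $\phi$ with variable set $V \cup \{x\}$ closes the case.

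\textbf{Main obstacle.} The only subtlety is the existential case, where one must simultaneously respect that $F$ is indexed by assignments in $X$ while the restricted team is indexed by assignments in $X\restriction V$; the conditional-average construction above is the natural fix, and checking that the resulting modified teams coincide is a routine (if slightly tedious) double-sum rearrangement.
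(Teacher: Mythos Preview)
Your proof is correct and follows the standard structural-induction route; the conditional-average construction in the existential case and the proportional lifting in the disjunction case are exactly the right moves. Note, however, that the paper does not actually prove this proposition at all: it is merely cited from \cite[Prop.~12]{DHKMV18}, where locality is established for $\FO(\cpind)$, and the extension to the additional connective $\sim$ is tacitly taken for granted (since, as you observe, that case is immediate from the induction hypothesis). So there is no ``paper's own proof'' to compare against here; your argument is an independent (and essentially complete) write-up of what the cited result presumably does, plus the trivial new case.
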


In addition to probabilistic conditional independence atoms, we may also consider other atoms. If $\tuple x$ and $ \tuple y$ are tuples of variables, then $\dep(\tuple x,\tuple y)$ is a \textit{dependence atom}. If $\tuple x$ and $ \tuple y$ are also of the same length, $\tuple x\approx\tuple y$ is a \textit{marginal identity atom}. The semantics for these atoms are defined as follows:
\begin{itemize}
\item[] $\A\models_{\X}\dep(\tuple x,\tuple y)$ iff for all $s,s'\in\supp(\X)$, $s(\tuple x)=s'(\tuple x)$ implies $s(\tuple y)=s'(\tuple y)$,
\item[] $\A\models_{\X}\tuple x\approx\tuple y$ iff $|\X_{\tuple x=\tuple a}|=|\X_{\tuple y=\tuple a}|$ for all $\tuple a\in A^{|\tuple x|}$.
\end{itemize}
We write $\FO(\dep(\cdot))$ and $\FO(\approx)$ for first-order logic with dependence atoms or marginal identity atoms, respectively. Analogously, for $C\subseteq\{\dep(\cdot),\approx,\cpind,\sim\}$, we write $\FO(C)$ for the logic with access to the atoms (or the Boolean negation) from $C$.

For two logics $L$ and $L'$ over probabilistic team semantics, we write $L\leq L'$ if for any formula $\phi\in L$, there is a formula $\psi\in L'$ such that $\A\models_\X\phi \iff \A\models_\X\psi$ for all $\A$ and $\X$. The equality $\equiv$ and strict inequality $<$ are defined from the above relation in the usual way. The next two propositions follow from the fact that dependence atoms and marginal identity atoms can be expressed with probabilistic independence atoms.
\begin{proposition}[{\cite[Prop.~24]{DurandHKMV18}}]\label{prop_depcpind}
$\FO(\dep(\cdot))\leq \FO(\cpind)$.
\end{proposition}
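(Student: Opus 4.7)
The plan is to exhibit a translation $\phi \mapsto \phi^*$ from $\FO(\dep(\dots))$ into $\FO(\cpind)$ that leaves all first-order connectives, quantifiers, and literals unchanged and replaces every dependence atom by
\[
\dep(\tuple x,\tuple y)^* \dfn \pci{\tuple x}{\tuple y}{\tuple y}.
\]
I then prove by structural induction on $\phi$ that for every $\tau$-structure $\A$ and every probabilistic team $\X$ with $\Fr(\phi)\subseteq\Dom(\X)$ one has $\A\models_{\X}\phi \iff \A\models_{\X}\phi^*$.

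The heart of the argument is the base case for the dependence atom. Unfolding the semantics of $\pci{\tuple x}{\tuple y}{\tuple y}$ (taking $\tuple z = \tuple y$ in the definition), the required identity reduces to
\[
|\X_{\tuple x\tuple y = s(\tuple x\tuple y)}|^{\,2} \;=\; |\X_{\tuple x\tuple y = s(\tuple x\tuple y)}|\cdot|\X_{\tuple x = s(\tuple x)}|
\]
for every $s\colon\var(\tuple x\tuple y)\to A$. For each such $s$ this is equivalent to the disjunction: either $|\X_{\tuple x\tuple y = s(\tuple x\tuple y)}| = 0$, i.e.\ no assignment in $\supp(\X)$ realises the pair $s(\tuple x\tuple y)$, or $|\X_{\tuple x\tuple y = s(\tuple x\tuple y)}| = |\X_{\tuple x = s(\tuple x)}|$, i.e.\ the only $\tuple y$-value occurring together with $s(\tuple x)$ in $\supp(\X)$ is $s(\tuple y)$. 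Ranging over all $s$, this is exactly the condition that $\tuple y$ is determined by $\tuple x$ on $\supp(\X)$, which is the defining condition of $\dep(\tuple x,\tuple y)$.

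The inductive step is routine: the translation commutes with $\wedge$, $\vee$, $\exists$, $\forall$, and the connectives/quantifiers receive the same probabilistic-team semantics in both logics, so the induction hypothesis transfers directly. By locality (Proposition~\ref{locality2}) we may freely restrict teams to $\Fr(\phi)$ when convenient, avoiding any bookkeeping issues with fresh variables.

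The main (very modest) obstacle is simply verifying the equivalence $\dep(\tuple x,\tuple y) \equiv \pci{\tuple x}{\tuple y}{\tuple y}$ on every probabilistic team, including those whose support is thinned out by scaled unions or existential quantification; but because the characterisation above is stated uniformly in $\X$ and refers only to marginal weights, no difficulty arises. Hence every $\FO(\dep(\dots))$-formula $\phi$ is equivalent to the $\FO(\cpind)$-formula $\phi^*$, giving $\FO(\dep(\dots))\leq\FO(\cpind)$.
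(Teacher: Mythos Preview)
Your proof is correct and matches the paper's approach: the paper does not spell out a proof but cites the result as following from the fact that dependence atoms can be expressed via probabilistic independence atoms, and elsewhere (in the proof of Theorem~\ref{thm:MCFO-ind}) explicitly uses the same translation $\dep(\tuple x,\tuple y)\equiv \pci{\tuple x}{\tuple y}{\tuple y}$ that you verify. Your semantic unpacking of $\pci{\tuple x}{\tuple y}{\tuple y}$ also coincides with the case handled in item~4 of the proof of Theorem~\ref{thm_ind<SO}.
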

\begin{proposition}[{\cite[Thm.~10]{HannulaHKKV19}}]\label{prop_approxcpind}
$\FO(\approx)\leq \FO(\cpind)$.
\end{proposition}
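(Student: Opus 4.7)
The plan is to proceed by structural induction on the $\FO(\approx)$-formula, where the only non-trivial step is the translation of a marginal identity atom $\tuple x \approx \tuple y$ into an equivalent $\FO(\cpind)$-formula; all other connectives and quantifiers carry over verbatim. The key idea for the atom is to build a \emph{coupling witness}: introduce a fresh tuple $\tuple u$ of the same length as $\tuple x$, a fresh splitting variable $z$, and auxiliary variables $z_1,z_2$ serving as two distinct indices, and force via a marginal independence atom that the $\tuple u$-marginal in the extended team agrees with both the $\tuple x$- and the $\tuple y$-marginal in the original team.

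Concretely, I would aim to show that $\tuple x \approx \tuple y$ is equivalent to
\[
\exists z_1 \exists z_2 \exists z \exists \tuple u \Bigl( z_1 \neq z_2 \;\land\; \bigl((z = z_1 \land \tuple u = \tuple x) \lor (z = z_2 \land \tuple u = \tuple y)\bigr) \;\land\; \pmi{\tuple u}{z} \Bigr).
\]
For the forward direction, given $\A \models_{\X} \tuple x \approx \tuple y$ with $|A| \geq 2$, pick two distinct elements $a_0,a_1 \in A$ and witness the existentials by: point masses $F_1(s) \mapsto a_0$ for $z_1$ and $F_2(s) \mapsto a_1$ for $z_2$; the uniform two-point distribution $F_3(s)(a_0) = F_3(s)(a_1) = 1/2$ for $z$; and, for $\tuple u$, the function that deterministically sets $\tuple u := \tuple x$ when $z = a_0$ and $\tuple u := \tuple y$ when $z = a_1$. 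The disjunction is then witnessed by the $1/2$-split along the value of $z$, and the independence atom $\pmi{\tuple u}{z}$ is verified by a short computation that uses exactly the hypothesis $|\X_{\tuple x = \tuple a}| = |\X_{\tuple y = \tuple a}|$.

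For the backward direction, assume the displayed formula holds in $\X$ via some witness team $\mathbb{X}'$. The constraint $z_1 \neq z_2$ forces the two branches of the disjunction to sit on disjoint supports, so the split $\mathbb{X}' = \mathbb{Y} \sqcup_k \mathbb{Z}$ witnessing the disjunction is cleanly indexed by the value of $z$: on $\supp(\mathbb{Y})$ one has $\tuple u = \tuple x$ and $z = z_1$, and on $\supp(\mathbb{Z})$ one has $\tuple u = \tuple y$ and $z = z_2$. Expanding the independence atom $\pmi{\tuple u}{z}$ on $\mathbb{X}'$, replacing $\tuple u$ by $\tuple x$ on $\supp(\mathbb{Y})$ and by $\tuple y$ on $\supp(\mathbb{Z})$, and then applying locality (Proposition~\ref{locality2}) to descend from $\mathbb{X}'$ back to $\X$, yields $|\X_{\tuple x = \tuple a}| = |\X_{\tuple y = \tuple a}|$ for every $\tuple a$.

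The main technical obstacle I anticipate is this backward computation: the formula does not directly force $z_1$ and $z_2$ to be constants on $\mathbb{X}'$, so one must sum over their possible values and argue that the independence atom still yields marginal identity after summing; the disjointness provided by $z_1 \neq z_2$ is what makes this work. The degenerate one-element universe must be handled separately, since the constraint $z_1 \neq z_2$ is then unsatisfiable, but marginal identity is vacuously true there, so the overall translation can be taken to be a top-level disjunction of the encoding above with the first-order sentence $\forall x \forall y\, x = y$.
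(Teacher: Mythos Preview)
The paper does not prove this proposition itself; it simply cites \cite[Thm.~10]{HannulaHKKV19}. There is thus no paper-side argument to compare against, but your proposed translation has a genuine gap in the backward direction.

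Nothing in your formula prevents the split witnessing the disjunction from being degenerate. Take $A=\{0,1\}$, single variables $x,y$, and let $\X$ be uniform on the two assignments $(x,y)=(0,0)$ and $(x,y)=(0,1)$, so that $|\X_{x=0}|=1$ while $|\X_{y=0}|=|\X_{y=1}|=\tfrac{1}{2}$; hence $x\not\approx y$. Witness the existentials deterministically by $z_1\mapsto 0$, $z_2\mapsto 1$, $z\mapsto 0$, and $u\mapsto s(x)$. In the resulting team $\X'$ one has $z_1\neq z_2$ on all of $\supp(\X')$; the disjunction is satisfied with $k=1$ and $\Y=\X'$ (every assignment in $\supp(\X')$ satisfies $z=z_1\land u=x$, and the right disjunct may be witnessed by any distribution since its scaling factor is $0$); and $\pmi{u}{z}$ holds trivially because $z$ is constant. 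So your formula is satisfied although $x\not\approx y$. Your disjointness-of-supports observation does not rescue the argument, since one of the two parts of the split may simply carry no weight.

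To repair the translation you need, at minimum, two further ingredients: a mechanism that forces $z$ to take both index values with positive weight (e.g., obtaining $z$ from a universal quantifier rather than an existential one, together with constancy atoms for $z_1,z_2$), and an independence constraint such as $\pmi{z}{\tuple x\tuple y}$ guaranteeing that conditioning on a value of $z$ does not distort the $\tuple x$- or $\tuple y$-marginal inherited from $\X$. Without the latter, even when both branches carry positive weight, the $\tuple x$-marginal of $\Y$ need not coincide with the $\tuple x$-marginal of $\X$, and your final appeal to locality does not yield $|\X_{\tuple x=\tuple a}|=|\X_{\tuple y=\tuple a}|$. The translation in \cite{HannulaHKKV19} incorporates devices of this kind; your formula is missing them.
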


On the other hand, omitting the Boolean negation strictly decreases the expressivity:
\begin{proposition}\label{prop:FO-ind-strict-FO-ind-sim}
$\FO(\cpind) < \FO(\cpind,\sim)$.
\end{proposition}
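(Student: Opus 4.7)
The plan is to prove strictness by showing that every $\FO(\cpind)$ formula defines a topologically closed set of probabilistic teams, a property that Boolean negation typically destroys. The containment $\FO(\cpind)\leq\FO(\cpind,\sim)$ is immediate from syntactic inclusion, so the task reduces to producing a distinguishing formula.

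Fix a finite $\tau$-structure $\A$ with $|A|\geq 2$ and a finite variable domain $D$; the set of probability distributions $\X\colon A^D\to[0,1]$ forms a compact simplex $\Delta_{D,A}\subseteq\mathbb{R}^{A^D}$ in the standard topology. I claim, by induction on $\phi\in\FO(\cpind)$ with $\Fr(\phi)\subseteq D$, that $S_\phi\dfn\{\X\in\Delta_{D,A}:\A\models_\X\phi\}$ is closed. Literals translate to linear equations $\X(s)=0$ on falsifying assignments, and independence atoms $\pci{\tuple x}{\tuple y}{\tuple z}$ to polynomial (quadratic) equations in the entries of $\X$, both yielding closed algebraic sets. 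Conjunction is intersection of closed sets. For disjunction, continuity of the scaled-union map $(k,\Y,\Z)\mapsto\Y\sqcup_k\Z$ on the compact space $[0,1]\times\Delta_{D,A}^2$ makes $S_{\phi\lor\psi}$ the continuous image of the closed compact set $[0,1]\times S_\phi\times S_\psi$, hence closed. For $\exists x$ one parameterizes over the compact product of simplices of functions $F\colon A^D\to p_A$, uses continuity of $(\X,F)\mapsto\X(F/x)$, and projects the closed compact set $\{(\X,F):\X(F/x)\in S_\phi\}$ onto its first coordinate. The case $\forall x$ is a continuous substitution $\X\mapsto\X(A/x)$, whose preimage of a closed set is closed.

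To finish, the witness $\bn\pci{\emptyset}{x}{x}$ in $\FO(\cpind,\sim)$ does the job: by locality (Proposition~\ref{locality2}) and a direct unfolding, $\pci{\emptyset}{x}{x}$ holds on $\X$ iff the marginal of $x$ is a Dirac delta, so $\bn\pci{\emptyset}{x}{x}$ defines the preimage of the complement of the finite set of unit vectors under the continuous surjective marginal projection $\Delta_{D,A}\to\Delta_{\{x\},A}$. This preimage is a non-empty proper open subset of the connected simplex $\Delta_{D,A}$ and hence cannot also be closed; by the claim the formula is not definable in $\FO(\cpind)$. I expect the main obstacle to be the $\exists x$ case of the induction, where closedness of the projection onto the $\X$-coordinate relies on compactness of the parameter space of functions $F$ and not merely on continuity of substitution; once this is in hand, the remaining inductive cases and the topological separation are routine.
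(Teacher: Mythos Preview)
Your proof is correct and follows essentially the same approach as the paper: the paper cites \cite[Theorems~4.1 and~6.5]{HannulaKBV20} for the fact that every $\FO(\cpind)$-formula defines a closed set of probabilistic teams (which you instead prove directly by the compactness/continuity induction), and then exhibits a negated independence atom---the paper uses $\bn\,\pci{y}{x}{z}$ whereas you use the simpler $\bn\,\pci{\emptyset}{x}{x}$---whose extension is open but not closed. Your argument is more self-contained, but the underlying strategy and the separating property are identical.
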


\begin{proof}
By Theorems 4.1 and 6.5 of \cite{HannulaKBV20}, over a fixed universe size, any open formula of $\FO(\cpind)$ defines a closed subset of $\mathbb{R}^n$ for a suitable $n$ depending on the size of the universe and the number of free variables. 
Now, clearly, this cannot be true for all of the formulas of $\FO(\cpind,\sim)$ as it contains the Boolean negation, e.g., the formula $\sim x \pind_y z$.
\end{proof}

\section{Metafinite logics}

In this section, we consider logics over $\mathbb{R}$-structures. These structures extend finite relational structures with real numbers $\mathbb{R}$ as a second domain and add functions that map tuples from the finite domain to $\mathbb{R}$.

\begin{definition}[$\mathbb{R}$-structures]
Let $\tau$ and $\sigma$ be finite vocabularies such that $\tau$ is relational and $\sigma$ is functional. 
An \emph{$\mathbb{R}$-structure of vocabulary $\tau\cup\sigma$} is a tuple $\mathcal{A}=(A,\mathbb{R},F)$ where the reduct of $\mathcal{A}$ to $\tau$ is a finite relational structure, and $F$ is a set that contains functions $f^{\mathcal{A}}\colon A^{\ar(f)}\to \mathbb{R}$ for each function symbol $f\in\sigma$. Additionally,
\begin{enumerate}[(i)]
\item for any $S\subseteq\mathbb{R}$, if each $f^{\mathcal{A}}$ is a function from $A^{\ar(f)}$ to $S$, $\mathcal{A}$ is called an \emph{$S$-structure},
\item if each $f^{\mathcal{A}}$ is a distribution, $\mathcal{A}$ is called a \emph{$d[0,1]$-structure}.
\end{enumerate}  
\end{definition}

Next, we will define certain metafinite logics which are variants of functional second-order logic with numerical terms. The numerical $\sigma$-terms $i$ are defined as follows:
\[
i\ddfn f(\tuple x) \mid i\times i \mid i+i \mid \SUM_{\tuple y}i \mid \log i,
\]
where $f\in\sigma$ and $\tuple x$ and $\tuple y$ are first-order variables such that $|\tuple x|=\ar(f)$. The interpretation of a numerical term $i$ in the structure $\A$ under an assignment $s$ is denoted by $[i]_s^\A$. We define
\[
[\SUM_{\tuple y}i]_s^\A:=\sum_{\tuple a\in A^{|\tuple y|}}[i]_{s(\tuple a/\tuple y)}^{\A}.
\]
The interpretations of the rest of the numerical terms are defined in the obvious way.

Suppose that $\{=\}\subseteq\tau$, and let $O\subseteq\{+,\times,\SUM,\log\}$. The syntax for the logic $\SO_{\mathbb{R}}(O)$ is defined as follows:
\[
\phi\Coloneqq i = j \mid \neg i = j \mid R(\tuple x)\mid \neg R(\tuple x) \mid (\phi\wedge\phi) \mid (\phi\lor\phi) \mid \exists x\phi \mid \forall x\phi \mid \exists f\psi \mid \forall f\psi,
\]
where $i$ and $j$ are numerical $\sigma$-terms constructed using operations from $O$, $R\in\tau$, $x$, $y$, and $\tuple x$ are first-order variables, $f$ is a function variable, and $\psi$ is a $\tau\cup\sigma\cup\{f\}$-formula of $\SO_{\mathbb{R}}(O)$. 

The semantics of $\SO_{\mathbb{R}}(O)$ is defined via $\mathbb{R}$-structures and assignments analogous to first-order logic, except for the interpretations of function variables $f$, which range over functions $A^{\ar(f)}\to \mathbb{R}$. For any $S\subseteq\mathbb{R}$, we define $\SO_{S}(O)$ as the variant of $\SO_{\mathbb{R}}(O)$, where the quantification of function variables ranges over $A^{\ar(f)}\to S$. If the quantification of function variables is restricted to distributions, the resulting logic is denoted by $\SO_{d[0,1]}(O)$. The existential fragment, in which universal quantification over function variables is not allowed, is denoted by $\ESO_{\mathbb{R}}(O)$.

For metafinite logics $L$ and $L'$, we define expressivity comparison relations $L\leq L'$, $L\equiv L'$, and $L< L'$ in the usual way, see e.g. \cite{HannulaKBV20}.

\begin{proposition}\label{SUM+_prop}
$\SO_{\mathbb{R}}(\SUM, \times)\equiv\SO_{\mathbb{R}}(+,\times)$.
\end{proposition}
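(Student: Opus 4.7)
The proposition asserts that the two metafinite logics have the same expressive power over $\mathbb{R}$-structures, and the plan is to prove both inclusions by induction on formula structure, showing how to replace each atomic formula of one logic by an equivalent formula of the other whose only non-native ingredient is a block of existentially quantified function variables. A routine subterm-extraction step, introducing a fresh variable $v$ and a new conjunct $v = i$ for every compound subterm $i$ of a numerical term, reduces the whole translation to the case of handling a single application of the non-native operator at a time.

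For the inclusion $\SO_\mathbb{R}(+,\times) \leq \SO_\mathbb{R}(\SUM,\times)$, the plan is to express $v = i+j$ using only $\SUM$ and $\times$. Assuming $|A|\geq 2$, I would existentially quantify two distinct elements $a_1, a_2 \in A$ and a function $g \colon A \to \mathbb{R}$, requiring $g(a_1) = i$, $g(a_2) = j$, $g(a) = 0$ for every other $a$, and $\SUM_y g(y) = v$. Since the constant $0$ is not primitive in $\SO_\mathbb{R}(\SUM,\times)$, I would define it as the value of a $0$-ary function $c$ satisfying $\SUM_y c = c$; this equation amounts to $|A|\cdot c = c$ and thus forces $c = 0$. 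The singleton universe case can be isolated using the first-order sentence $\forall x \forall y\, x=y$ and treated as a degenerate special case.

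For the inclusion $\SO_\mathbb{R}(\SUM,\times) \leq \SO_\mathbb{R}(+,\times)$, the plan is the classical metafinite cumulative-sum encoding. To express $v = \SUM_{\tuple y}\, t(\tuple x,\tuple y)$, I would existentially quantify a linear order $\leq$ on $A^{|\tuple y|}$, realised as a $\{0,1\}$-valued binary function constrained to satisfy the linear-order axioms, together with a partial-sum function $g$ parameterised by $\tuple x$, imposing $g(\tuple x,\min) = t(\tuple x,\min)$, $g(\tuple x,\tuple a) = g(\tuple x,\tuple b) + t(\tuple x,\tuple a)$ whenever $\tuple b$ is the immediate $\leq$-predecessor of $\tuple a$, and $g(\tuple x,\max) = v$. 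Commutativity and associativity of $+$ ensure that the total is independent of the chosen order. The main obstacle is the bookkeeping around nested or repeated occurrences of $\SUM$; this is handled routinely by translating the innermost $\SUM$ first and then proceeding outwards, carefully renaming bound function variables to prevent clashes between successive applications of the construction.
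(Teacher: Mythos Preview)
Your proposal is correct and follows essentially the same approach as the paper: for $\SUM$ via $+$ both use the cumulative-sum trick over an existentially quantified linear order, and for $+$ via $\SUM$ both encode a two-element sum by storing the two summands at distinct coordinates of an auxiliary function and taking $\SUM$ over that coordinate. The only cosmetic differences are that the paper orders $A$ and passes to the lexicographic order on tuples (rather than ordering $A^{|\tuple y|}$ directly), and that it uses $\min$ and $\max$ of the defined order as the two distinguished elements rather than freshly quantified distinct $a_1,a_2$; your explicit mention of the singleton-universe case is a detail the paper leaves implicit.
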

\begin{proof}
First, note that since the constants 0 and 1 are definable in both logics, we may use them when needed. To show that $\SO_{\mathbb{R}}(\SUM,\times)\leq\SO_{\mathbb{R}}(+,\times)$, it suffices to show that any numerical identity $f(\tuple x)=\SUM_{\tuple y}g(\tuple x,\tuple y)$ can also be expressed in $\SO_{\mathbb{R}}(+,\times)$. Suppose that $|\tuple y|=n$. Since the domain of $\A$ is finite, we may assume that it is linearly ordered: a linear order $\leq_{\text{fin}}$ can be defined with an existentially quantified binary function variable $f$ such that the formulas $f(x,y)=1$ and $f(x,y)=0$ correspond to $x\leq_{\text{fin}}y$ and $x\not\leq_{\text{fin}}y$, respectively.
Then, without loss of generality, we may assume that we have an $n$-ary successor function $S$ defined by the lexicographic order induced by the linear order. Thus, we can existentially quantify a function variable $h$ such that 
\[
\forall\tuple x\tuple z(h(\tuple x,\bold{min})=g(\tuple x,\bold{min})\wedge h(\tuple x,S(\tuple z))=h(\tuple x,\tuple z)+g(\tuple x,S(\tuple z)).
\]
Then $f(\tuple x)=h(\tuple x,\bold{max})$ is as wanted.

To show that $\SO_{\mathbb{R}}(+,\times)\leq\SO_{\mathbb{R}}(\SUM,\times)$, we show that any numerical identity $f(\tuple x\tuple y)=i(\tuple x)+j(\tuple y)$ can be expressed in $\SO_{\mathbb{R}}(\SUM,\times)$. We can existentially quantify a function variable $g$ such that
\begin{align*}
g(\tuple x\tuple y,min)=&i(\tuple x)\wedge g(\tuple x\tuple y,max)=j(\tuple y)\\
&\wedge\forall z( (\neg z=min\wedge \neg z=max)\rightarrow g(\tuple u\tuple v,z)=0).
\end{align*}
Then $f(\tuple x\tuple y)=\SUM_zg(\tuple x\tuple y,z)$ is as wanted. Note that since no universal quantification over function variables was used, the proposition also holds for existential fragments, i.e., $\ESO_{\mathbb{R}}(\SUM,\times)\equiv\ESO_{\mathbb{R}}(+,\times)$.
\end{proof}


\begin{proposition}\label{SOprop}
$\SO_{d[0,1]}(\SUM,\times)\equiv\SO_{\mathbb{R}}(+,\times)$.
\end{proposition}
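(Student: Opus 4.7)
The plan is to first invoke Proposition~\ref{SUM+_prop}, which reduces the task to proving $\SO_{d[0,1]}(\SUM,\times) \equiv \SO_{\mathbb R}(\SUM,\times)$, and then to handle the two inclusions in turn. Throughout I will freely use that the constants $0$ and $1$ are definable (as noted in the proof of Proposition~\ref{SUM+_prop}).

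For the easy direction $\SO_{d[0,1]}(\SUM,\times) \le \SO_{\mathbb R}(\SUM,\times)$, I would translate each distributional function quantifier $\exists f\,\psi$, with $f : A^k\to[0,1]$ intended as a distribution, into $\exists f\,(\alpha_f \wedge \psi')$, where $\alpha_f$ asserts that $f$ is a probability distribution. The sum-to-one condition is written directly as $\SUM_{\tuple x} f(\tuple x) = 1$, and pointwise non-negativity of $f$ is captured by existentially quantifying a fresh same-arity function variable $g$ and asserting $\forall \tuple x\,(g(\tuple x)\cdot g(\tuple x) = f(\tuple x))$, using that a real is non-negative iff it is a square. Universal distributional quantifiers are handled dually as $\forall f\,(\alpha_f \to \psi')$.

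For the harder direction $\SO_{\mathbb R}(\SUM,\times) \le \SO_{d[0,1]}(\SUM,\times)$, my plan is to simulate quantification over arbitrary real-valued functions using distributions via a denominator-clearing encoding. Fixing a distinguished element $a_0\in A$ (assuming $|A|\ge 2$; the case $|A|=1$ can be reduced to a trivial padding), each function variable $f : A^k \to \mathbb R$ will be represented by three quantified distributions $d_+,\, d_- : A^{k+1}\to[0,1]$ and $e : A\to[0,1]$ subject to the intended defining identity
\[
f(\tuple x)\cdot e(a_0) \;=\; d_+(\tuple x,a_0) - d_-(\tuple x,a_0).
\]
Every $f : A^k\to\mathbb R$ admits such a representation: choose $e(a_0) > 0$ small enough that the required values of $d_\pm(\cdot,a_0)$ lie in $[0,1]$ and have aggregate mass at most $1$, then distribute the remaining mass of $d_\pm$ and $e$ over the slots with second coordinate $\ne a_0$. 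Substituting this representation for $f(\tuple x)$ throughout a formula produces rational expressions; multiplying each atomic numerical identity through by a suitable power of $e(a_0)$ clears the denominators and yields an equivalent polynomial identity over distribution values, expressible using only $\SUM$ and $\times$. A quantifier $\exists f\,\phi$ is then translated as $\exists d_+\exists d_-\exists e\,\bigl(\neg(e(a_0) = 0) \wedge \phi'\bigr)$, and $\forall f\,\phi$ dually as $\forall d_+\forall d_-\forall e\,\bigl(e(a_0)=0 \lor \phi'\bigr)$.

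The main obstacle I anticipate is the clean inductive propagation of the encoding through the numerical term constructors $\times$ and $\SUM_{\tuple y}$: after substitution, each subterm becomes a quotient of polynomials in the distribution values, and one must carefully track the accumulated denominators so that the outermost atomic identity can be cleared uniformly in a single step, without introducing spurious solutions from vanishing denominators. A secondary subtlety lies in the universal case, where one must verify that ranging over all triples $(d_+, d_-, e)$ with $e(a_0)\ne 0$ indeed exhausts all $f : A^k\to\mathbb R$, so that the disjunctive guard $e(a_0) = 0 \lor \phi'$ faithfully captures $\forall f\,\phi$.
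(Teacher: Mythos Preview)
Your easy direction is essentially the paper's, and your added square trick for non-negativity is a welcome clarification (the paper only writes the sum-to-one condition).

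For the hard direction you take a genuinely different route: the paper factors the reduction as a chain
\[
\SO_{\mathbb R}(+,\times)\ \leq\ \SO_{\mathbb R_{\geq 0}}(+,\times)\ \leq\ \SO_{[0,1]}(+,\times)\ \leq\ \SO_{d[0,1]}(\SUM,\times),
\]
using the sign split $f=f_+-f_-$, then the bijection $r\mapsto r/(1{-}r)$ from $[0,1)$ onto $\mathbb R_{\geq 0}$, and finally a compression argument (deferred to \cite[Lemma~6.4]{HannulaKBV20}) to turn $[0,1]$-valued functions into distributions. Your single-step encoding $f(\tuple x)\cdot e(a_0)=d_+(\tuple x,a_0)-d_-(\tuple x,a_0)$ collapses the first two steps into one and is conceptually neat.

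There is, however, a real gap in your last step. After substituting the encoding and clearing denominators, the atomic identities are no longer $(\SUM,\times)$-identities: expanding products of differences such as $(d_+-d_-)(d'_+-d'_-)$ produces finite sums and differences of monomials, so what you are left with is a $(\SUM,+,\times)$-identity (after moving negative monomials across the equality). You assert this is ``expressible using only $\SUM$ and $\times$'', but the simulation of $+$ by $\SUM$ in Proposition~\ref{SUM+_prop} quantifies an auxiliary function whose values are prescribed at two slots and zero elsewhere; such a function is almost never a probability distribution, so that trick does not transfer to $\SO_{d[0,1]}(\SUM,\times)$. Getting around this needs a separate scaling/compression argument (quantify a uniform distribution of sufficiently high arity to manufacture a factor $1/|A|^k$, then pack the scaled monomials into a single distribution whose $\SUM$ recovers the desired sum), which is precisely the content of the paper's last step $\SO_{[0,1]}(+,\times)\leq\SO_{d[0,1]}(\SUM,\times)$ and the cited lemma. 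Your ``main obstacle'' paragraph anticipates the bookkeeping of denominators but not this issue; without it the proposal is incomplete.
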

\begin{proof}
Since $1$ is definable in $\SO_{\mathbb{R}}(\SUM,\times)$ and the formula $\SUM_{\tuple x}f(\tuple x)=1$ states that $f$ is a probability distribution, we have that $\SO_{d[0,1]}(\SUM,\times)\leq \SO_{\mathbb{R}}(\SUM,\times)\equiv\SO_{\mathbb{R}}(+,\times)$.

Next, we show that \[\SO_{\mathbb{R}}(+,\times)\leq\SO_{\mathbb{R}_{\geq 0}}(+,\times)\leq\SO_{[0,1]}(+,\times)\leq\SO_{d[0,1]}(\SUM,\times).\] 

To show that $\SO_{\mathbb{R}}(+,\times)\leq\SO_{\mathbb{R}_{\geq 0}}(+,\times)$, let $\phi\in\SO_{\mathbb{R}}(+,\times)$. Note that any function $f\colon A^{\ar(f)}\to\mathbb{R}$ can be expressed as $f_+-f_-$, where $f_+$ and $f_-$ are functions  $A^{\ar(f)}\to\mathbb{R}_{\geq 0}$ such that $f_+(\tuple x)=f(\tuple x)\cdot\chi_{\mathbb{R}_{\geq 0}}(f(\tuple x))$ and $f_-(\tuple x)=f(\tuple x)\cdot\chi_{\mathbb{R}\setminus\mathbb{R}_{\geq 0}}(f(\tuple x))$, where $\chi_S\colon\mathbb{R}\to\{0,1\}$ is the characteristic function of $S\subseteq\mathbb{R}$. Since numerical terms $i(\tuple x)-j(\tuple x)$ can clearly be expressed in $\SO_{\mathbb{R}}(+,\times)$, it suffices to modify $\phi$ as follows: for all quantified function variables $f$, replace each appearance of term $f(\tuple x)$ with $f_+(\tuple x)-f_-(\tuple x)$ and instead of $f$, quantify two function variables $f_+$ and $f_-$.

To show that $\SO_{\mathbb{R}_{\geq 0}}(+,\times)\leq\SO_{[0,1]}(+,\times)$, let $\phi\in\SO_{\mathbb{R}_{\geq 0}}(+,\times)$. Note that any positive real number can be written as a ratio $x/(1-x)$, where $x\in[0,1)$. Since numerical terms of the form $i(\tuple x)/(1-i(\tuple x))$ can clearly be expressed in $\SO_{d[0,1]}(+,\times)$, it suffices to modify $\phi$ as follows: for all quantified function variables $f$, replace each appearance of term $f(\tuple x)$ with $f^*(\tuple x)/(1-f^*(\tuple x))$ and instead of $f$, quantify a function variable $f^*$ such that $f^*(\tuple x)\neq 1$ for all $\tuple x$.

Lastly, to show that $\SO_{[0,1]}(+,\times)\leq\SO_{d[0,1]}(\SUM,\times)$, it suffices to see that for any $\phi\in\SO_{[0,1]}(+,\times)$, we can compress each function term into a fraction of size $1/n^k$, where $n$ is the size of the finite domain and $k$ the maximal arity of any function variable appearing in $\phi$. 
We omit the proof, since it is essentially the same as the one for Lemma 6.4 in \cite{HannulaKBV20}.
\end{proof}

\section{Equi-expressivity of $\FO(\cpind,\sim)$ and $\SO_{\mathbb{R}}(+,\times)$}

In this section, we show that the expressivity of probabilistic independence logic with the Boolean negation coincides with full second-order logic over $\mathbb{R}$-structures.

\begin{theorem}\label{thm_indSO}
$\FO(\cpind,\sim)\equiv \SO_{\mathbb{R}}(+,\times)$.
\end{theorem}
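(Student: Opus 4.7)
The plan is to reduce the theorem to the equivalence $\FO(\cpind,\bn) \equiv \SO_{d[0,1]}(\SUM,\times)$ via Proposition~\ref{SOprop}, and then to prove both inclusions by structural induction, lifting the existing translations between $\FO(\cpind)$ and the existential fragment $\ESO_{\mathbb{R}}(+,\times)$ from \cite{HannulaKBV20} by letting the Boolean negation $\bn$ account for universal quantification over function variables.

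For the inclusion $\FO(\cpind,\bn) \leq \SO_{d[0,1]}(\SUM,\times)$, I would assign to each formula $\phi$ of $\FO(\cpind,\bn)$ an $\SO_{d[0,1]}(\SUM,\times)$-formula $\phi^*(f_\phi)$ with a single free function variable $f_\phi$ that encodes a distribution on $\Fr(\phi)$, such that $\A\models_\X \phi$ iff $\phi^*$ holds in $\A$ under the interpretation $f_\phi\mapsto \X$. The cases for first-order atoms, conditional independence atoms, conjunction, team-semantic disjunction (which introduces existentially quantified witnesses for $\Y$, $\mathbb{Z}$ and the scaling factor $k$), and the two first-order quantifiers proceed exactly as in the translation of $\FO(\cpind)$ into $\ESO_{d[0,1]}(\SUM,\times)$ from \cite{HannulaKBV20}. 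The only new case is $\bn\phi$, which is translated by the Boolean negation of $\phi^*$; since $\SO_{d[0,1]}(\SUM,\times)$ has both existential and universal function quantifiers, pushing negation inwards via de Morgan keeps the result inside the logic.

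For the converse $\SO_{d[0,1]}(\SUM,\times) \leq \FO(\cpind,\bn)$, I would proceed by induction on the function-quantifier prefix, reducing systematically to the existential case. Existential function quantifiers $\exists g\,\chi$ ranging over distributions are simulated via the construction from \cite{HannulaKBV20}: one quantifies a fresh tuple of first-order variables whose joint distribution, pinned down by appropriate probabilistic independence and marginal-identity atoms (available in $\FO(\cpind)$ by Proposition~\ref{prop_approxcpind}), realises $g$. Universal function quantifiers are then treated by the Boolean identity $\forall g\,\chi \equiv \bn\,\exists g\,(\bn\chi)$, which is precisely where the Boolean negation of $\FO(\cpind,\bn)$ is essential. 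Combining this with Propositions~\ref{SUM+_prop} and~\ref{SOprop} closes the chain $\FO(\cpind,\bn)\equiv\SO_{d[0,1]}(\SUM,\times)\equiv\SO_{\mathbb{R}}(+,\times)$.

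The chief technical obstacle is to ensure that the two translations really compose through the Boolean negation. Team-semantic disjunction $\Y \qcup{k} \mathbb{Z}$ is not the de Morgan dual of conjunction under team semantics, so one has to verify carefully that $(\bn\phi)^*$ is equivalent to $\neg(\phi^*)$, and that the first-order simulation of $\exists g$ in the backward direction remains faithful when wrapped in a Boolean negation. In particular, all auxiliary existential witnesses used to encode $g$ must be placed within the scope of the negation so that negating the translation corresponds to genuine universal quantification over distributions rather than over arbitrary first-order objects. Once this bookkeeping is fixed, the inductive translations go through along the lines of \cite{HannulaKBV20}, and the theorem follows.
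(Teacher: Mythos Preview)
Your proposal is correct and follows essentially the same route as the paper: reduce via Proposition~\ref{SOprop} to $\SO_{d[0,1]}(\SUM,\times)$, extend the known $\FO(\cpind)\to\ESO$ translation by mapping $\bn\phi$ to the classical negation of $\phi^*$, and for the converse simulate $\forall g$ as $\bn\exists g\,\bn$, where $\exists g$ over distributions is captured by an independence atom together with first-order existential quantification (the paper packages this as Lemma~\ref{distr_lemma}, writing $\forall^*\tuple y(\pmi{\tuple x}{\tuple y}\to^*\psi)$ with $\forall^*$ and $\to^*$ defined via $\bn$). The only point worth noting is that your ``chief technical obstacle'' concerning team-semantic disjunction is largely a non-issue---the forward translation lives entirely in classical second-order logic, so $(\bn\phi)^*=\neg(\phi^*)$ is unproblematic; the actual work in the paper goes into a normal-form lemma that flattens all numerical identities to shapes $f_i=f_j\times f_k$ and $f_i=\SUM f_j$ before translating back, which you would also need in order to make the ``construction from \cite{HannulaKBV20}'' go through cleanly.
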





We first show that $\FO(\cpind,\sim)\leq\SO_{\mathbb{R}}(+,\times)$. Note that by Proposition \ref{SOprop}, we have $\SO_{d[0,1]}(\SUM,\times)\equiv\SO_{\mathbb{R}}(+,\times)$, so it suffices to show that $\FO(\cpind,\sim)\leq\SO_{d[0,1]}(\SUM,\times)$. We may assume that every independence atom is in the form $\pci{\tuple x}{\tuple y}{\tuple z}$ or $\pci{\tuple x}{\tuple y}{\tuple y}$ where $\tuple x,\tuple y,$ and $\tuple z$ are pairwise disjoint tuples. \cite[Lemma 25]{DHKMV18}

\begin{theorem}\label{thm_ind<SO}
Let formula $\phi(\tuple v)\in\FO(\cpind,\sim)$ be such that its free-variables are from $\tuple v=(v_1,\dots,v_k)$. Then there is a formula $\psi_{\phi}(f)\in\SO_{d[0,1]}(\SUM,\times)$ with exactly one free function variable such that for all structures $\A$ and all probabilistic teams $\X\colon X\to[0,1]$, $\A\models_\X\phi(\tuple v)$ if and only if $(\A,f_\X)\models\psi_\phi(f)$, where $f_\X\colon A^k\to[0,1]$ is a probability distribution such that $f_\X(s(\tuple v))=\X(s)$ for all $s\in X$.
\end{theorem}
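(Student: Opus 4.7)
The plan is to proceed by structural induction on $\phi$, maintaining the invariant that the free function variable $f$ of $\psi_\phi(f)$ encodes the probabilistic team $\X$ on the variables $\tuple v = (v_1,\dots,v_k)$ as the distribution $f_\X\colon A^k \to [0,1]$ with $f_\X(s(\tuple v)) = \X(s)$. Auxiliary function variables will be existentially or universally quantified inside $\psi_\phi$ as needed, and I will freely invoke Proposition~\ref{SOprop} to simulate quantification over arbitrary $[0,1]$- or $\mathbb R$-valued objects within $\SO_{d[0,1]}(\SUM,\times)$ whenever that is convenient.

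For the base cases, a literal $R(\tuple x)$ (resp.\ $\neg R(\tuple x)$) asks that the atom hold on every assignment in $\supp(\X)$, which translates to $\forall \tuple v\,(f(\tuple v) = 0 \vee R(\tuple v\!\restriction\!\tuple x))$ (resp.\ with $\neg R$). For the conditional independence atom $\pci{\tuple x}{\tuple y}{\tuple z}$, each of the four marginals $|\X_{\tuple x = \cdot}|$, $|\X_{\tuple x\tuple y = \cdot}|$, $|\X_{\tuple x\tuple z = \cdot}|$, $|\X_{\tuple x\tuple y\tuple z = \cdot}|$ is directly expressible as a $\SUM$-term over $f$ obtained by summing out the remaining variables of $\tuple v$, so the defining multiplicative identity becomes one universally quantified polynomial equation between such terms.

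For the inductive steps, conjunction is handled by $\psi_{\phi \wedge \psi}(f) = \psi_\phi(f) \wedge \psi_\psi(f)$, and Boolean negation by $\psi_{\sim\phi}(f) = \neg \psi_\phi(f)$, pushed into negation normal form via de Morgan across the connectives and quantifiers of $\SO_{d[0,1]}(\SUM,\times)$; this step relies crucially on the target logic being closed under complementation, i.e.\ having both $\exists$ and $\forall$ over individuals and functions, which is exactly what makes the Boolean negation essentially ``free'' on the metafinite side. For $\exists x\,\phi$ I existentially quantify a function $F$ on $A^{k+1}$ satisfying the distribution constraint $\forall \tuple v\,(\SUM_a F(\tuple v, a) = 1)$, introduce an auxiliary distribution $f'$ on $A^{k+1}$ with $\forall \tuple v \forall a\,(f'(\tuple v, a) = f(\tuple v) \cdot F(\tuple v, a))$, and apply $\psi_\phi(f')$. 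The universal quantifier $\forall x\,\phi$ is analogous with $f'$ forced to be the uniform blow-up of $f$, expressible as $\forall \tuple v\,(\SUM_a f'(\tuple v, a) = f(\tuple v))$ together with a uniformity clause $\forall \tuple v \forall a \forall a'\,(f'(\tuple v, a) = f'(\tuple v, a'))$.

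The main technical obstacle will be the disjunction case $\phi \vee \psi$, which demands a split $\X = \Y \sqcup_k \mathbb Z$ governed by a scalar $k \in [0,1]$: since $\SO_{d[0,1]}(\SUM,\times)$ directly quantifies only over distributions, a naive quantification of $k$ gets stuck, but Proposition~\ref{SOprop} lets me simulate arbitrary $[0,1]$-valued quantification, so I existentially introduce two distributions $g_1, g_2\colon A^k \to [0,1]$ and a scalar $k$, impose $\forall \tuple v\,(f(\tuple v) = k \cdot g_1(\tuple v) + (1-k) \cdot g_2(\tuple v))$, and conjoin $\psi_\phi(g_1) \wedge \psi_\psi(g_2)$. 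Putting the cases together yields the desired $\psi_\phi$, and correctness in both directions follows clause-by-clause from the probabilistic team semantics together with the inductive hypothesis; locality (Proposition~\ref{locality2}) handles the bookkeeping of free variables when the team's variable domain grows under a quantifier.
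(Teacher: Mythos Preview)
Your proposal is correct and follows the same inductive translation as the paper. The only difference is presentational: you lean on Proposition~\ref{SOprop} to quantify non-distribution auxiliaries (the scalar $k$ in the split, the conditional map $F$ in the existential case), whereas the paper stays explicitly inside $\SO_{d[0,1]}$ by encoding these as genuine distributions---for $\exists x$ it simply quantifies a single distribution $g$ on $A^{k+1}$ subject to the marginal constraint $\SUM_x g(\tuple v,x)=f(\tuple v)$, and for $\lor$ it represents $k$ via a two-point distribution $g_0$ and adds the boundary disjuncts $\psi_{\phi_0}(f)\lor\psi_{\phi_1}(f)$ inherited from the $\sim$-free translation of \cite{DHKMV18}.
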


\begin{proof}
Define the formula $\psi_\phi(f)$ as follows:
\begin{enumerate}[1.]
\item If $\phi(\tuple v)=R(v_{i_1},\dots,v_{i_l})$, where $1\leq i_1,\dots,i_l\leq k$, then $\psi_\phi(f)\coloneqq\forall\tuple v(f(\tuple v)=0\lor R(v_{i_1},\dots,v_{i_l}))$.
\item If $\phi(\tuple v)=\neg R(v_{i_1},\dots,v_{i_l})$, where $1\leq i_1,\dots,i_l\leq k$, then $\psi_\phi(f)\coloneqq\forall\tuple v(f(\tuple v)=0\lor \neg R(v_{i_1},\dots,v_{i_l}))$.
\item If $\phi(\tuple v)=\pci{\tuple v_0}{\tuple v_1}{\tuple v_2}$, where $\tuple v_0,\tuple v_1,\tuple v_2$ are disjoint, then 
\begin{align*}
\psi_\phi(f)\coloneqq\forall\tuple v_0\tuple v_1\tuple v_2(&\SUM_{\tuple v\backslash(\tuple v_0\tuple v_1)} f(\tuple v)\times\SUM_{\tuple v\backslash(\tuple v_0\tuple v_2)} f(\tuple v)=\\&\SUM_{\tuple v\backslash(\tuple v_0\tuple v_1)} f(\tuple v)\times\SUM_{\tuple v\backslash\tuple v_0} f(\tuple v)).
\end{align*}
\item If $\phi(\tuple v)=\pci{\tuple v_0}{\tuple v_1}{\tuple v_1}$, where $\tuple v_0,\tuple v_1$ are disjoint, then 
\[
\psi_\phi(f)\coloneqq\forall\tuple v_0\tuple v_1(\SUM_{\tuple v\backslash(\tuple v_0\tuple v_1)} f(\tuple v)=0\lor\SUM_{\tuple v\backslash(\tuple v_0\tuple v_1)} f(\tuple v)=\SUM_{\tuple v\backslash\tuple v_0} f(\tuple v)).
\]
\item If $\phi(\tuple v)=\bn\phi_0(\tuple v)$, then $\psi_\phi(f)\coloneqq\psi_{\phi_0}^\neg(f)$, where $\psi_{\phi_0}^\neg$ is obtained from $\neg\psi_{\phi_0}$ by pushing the negation in front of atomic formulas.
\item If $\phi(\tuple v)=\phi_0(\tuple v)\wedge\phi_1(\tuple v)$, then $\psi_\phi(f)\coloneqq\psi_{\phi_0}(f)\wedge\psi_{\phi_1}(f)$.
\item If $\phi(\tuple v)=\phi_0(\tuple v)\lor\phi_1(\tuple v)$, then 
\begin{align*}
\psi_{\phi}(f)&\coloneqq\psi_{\phi_0}(f)\lor\psi_{\phi_1}(f)\\
&\lor(\exists g_0 g_1 g_2 g_3 (\forall\tuple v\forall x(x=l\lor x=r\lor (g_0(x)=0\wedge g_3(\tuple v,x)=0))\\
&\wedge\forall\tuple v(g_3(\tuple v,l)=g_1(\tuple v)\times g_0(l)\wedge g_3(\tuple v,r)=g_2(\tuple v)\times g_0(r))\\
&\wedge\forall\tuple v(\SUM_x g_3(\tuple v,x)=f(\tuple v))\wedge\psi_{\phi_0}(g_1)\wedge\psi_{\phi_1}(g_2))).
\end{align*}
\item If $\phi(\tuple v)=\exists x\phi_0(\tuple v,x)$, then $\psi_{\phi}(f)\coloneqq\exists g(\forall\tuple v(\SUM_x g(\tuple v,x)=f(\tuple v))\wedge\psi_{\phi_0}(g))$.
\item If $\phi(\tuple v)=\exists x\phi_0(\tuple v,x)$, then 
\[
\psi_{\phi}(f)\coloneqq\exists g(\forall\tuple v(\forall x\forall y(g(\tuple v,x)=g(\tuple v,y))\wedge\SUM_x g(\tuple v,x)=f(\tuple v))\wedge\psi_{\phi_0}(g)).
\]
\end{enumerate}
Since the the above is essentially same as the translation in \cite[Theorem 14]{DHKMV18}, but extended with the Boolean negation (for which the claim follows directly from the semantical clauses), it is easy to show that $\psi_\phi(f)$ satisfies the claim. 
\end{proof}

We now show that $\SO_{\mathbb{R}}(+,\times)\leq\FO(\cpind,\sim,)$. By Propositions \ref{prop_approxcpind} and \ref{SOprop}, $\FO(\cpind,\sim,\approx)\equiv \FO(\cpind,\sim)$ and $\SO_{\mathbb{R}}(+,\times)\equiv\SO_{d[0,1]}(\SUM,\times)$, so it suffices to show that $\SO_{d[0,1]}(\SUM,\times)\leq\FO(\cpind,\sim,\approx)$.

Note that even though we consider $\SO_{d[0,1]}(\SUM,\times)$, where only distributions can be quantified, it may still happen that the interpretation of a numerical term does not belong to the unit interval. 
This may happen if we have a term of the form $\SUM_{\tuple x}i(\tuple y)$ where $\tuple x$ contains a variable that does not appear in $\tuple y$. 
Fortunately, for any formula containing such terms, there is an equivalent formula without them \cite[Lemma 19]{HannulaV22}. Thus, it suffices to consider formulas without such terms.

To prove that $\SO_{d[0,1]}(\SUM,\times)\leq\FO(\cpind,\sim,\approx)$, we construct a useful normal form for $\SO_{d[0,1]}(\SUM,\times)$-sentences. The following lemma is based on similar lemmas from \cite[Lemma, 16]{DHKMV18} and \cite[Lemma, 20]{HannulaV22}.
\begin{lemma}
Every formula $\phi\in\SO_{d[0,1]}(\SUM,\times)$ can be written in the form $\phi^*\coloneqq Q_1f_1\dots Q_nf_n\forall\tuple x\theta$, where $Q\in\{\exists, \forall\}$, $\theta$ is quantifier-free and such that all the numerical identity atoms are in the form $f_i(\tuple u\tuple v)=f_j(\tuple u)\times f_k(\tuple v)$ or $f_i(\tuple u)=\SUM_{\tuple v}f_j(\tuple u\tuple v)$ for distinct $f_i$,$f_j$,$f_k$ such that at most one of them is not quantified.
\end{lemma}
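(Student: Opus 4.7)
The plan is to transform any $\SO_{d[0,1]}(\SUM,\times)$-formula $\phi$ into the stated shape in three stages carried out in sequence: prenexify the second-order quantifiers, Skolemize the first-order existentials, and then flatten the numerical identity atoms. Stage one is routine: after renaming bound variables, the standard prenex equivalences let every function quantifier be pulled past first-order quantifiers and Boolean connectives, yielding $Q_1 f_1\dots Q_n f_n\,\psi$ with $\psi$ first-order.

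For stage two, every subformula $\forall\tuple y\,\exists x\,\chi$ occurring in $\psi$ is replaced by $\exists g\,\forall\tuple y\,\chi'$, where $g$ is a fresh function variable intended to name the witness chosen for $x$. Because function quantifiers of $\SO_{d[0,1]}$ range over distributions only, $g$ cannot directly be a Skolem function $A^{|\tuple y|}\to A$. Instead I would encode such a function by a distribution $g\colon A^{|\tuple y|+1}\to[0,1]$ whose support contains, for every $\tuple y$, a unique pair $(\tuple y,x)$; in $\chi'$ the witness $x$ is then accessed via a fresh first-order universal together with an $\SO_{d[0,1]}(\SUM,\times)$-expressible constraint stating that $g$ carries positive mass at $(\tuple y,x)$. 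After this stage the formula has the form $Q_1 f_1\dots Q_n f_n\,\forall\tuple x\,\theta$ with $\theta$ quantifier-free.

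Stage three rewrites the numerical identities inside $\theta$ by iteratively introducing auxiliary function variables. For every occurrence of a compound subterm $t_1\times t_2$ appearing strictly inside a larger term, let $\tuple w$ list its free first-order variables; the occurrence is replaced by $h(\tuple w)$ for a fresh $h$ quantified existentially at the head, and the conjunct $\forall\tuple w(h(\tuple w)=t_1\times t_2)$ is added. An inner $\SUM_{\tuple v}\,t$ is treated analogously. Iterating from the innermost subterms outward reduces every numerical identity to one of the two required atomic shapes. Any remaining atom whose function symbols are not pairwise distinct is finally normalized by introducing a fresh $g$ together with the constraint $\forall\tuple x(g(\tuple x)=f(\tuple x))$ and using $g$ in place of a repeated occurrence of $f$.

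The main obstacle is that the auxiliary $h$ introduced in stage three carries values that in general do not sum to $1$, whereas $\SO_{d[0,1]}$ only allows quantification over distributions. I would resolve this exactly as in the proof of Proposition~\ref{SOprop}: enlarge the arity of each $h$ by bookkeeping coordinates so that it can be represented by a genuine distribution $h^*$ whose restriction to the intended arguments encodes the required values (via the compression trick already used there) and whose excess mass is deposited on the dummy coordinates. Once this encoding is in place, the flattening rewriting goes through cleanly, the resulting prefix is of the form $Q_1 f_1\dots Q_n f_n\,\forall\tuple x$, and the matrix has only atoms of the two prescribed shapes with pairwise distinct function symbols, at most one of which is not quantified.
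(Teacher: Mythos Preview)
Your stage one is where the real work lives, and it is not routine. Pulling a function quantifier outward across a first-order quantifier of opposite polarity is not a prenex equivalence: $\forall y\,\exists f\,\psi(y,f)$ and $\exists f\,\forall y\,\psi(y,f)$ are inequivalent, and the dual pair $\exists y\,\forall f$ versus $\forall f\,\exists y$ likewise. The standard remedy is to raise the arity, replacing $f$ by $f^*$ with an extra $y$-argument and writing $\exists f^*\,\forall y\,\psi(y,f^*(y,\cdot))$. But in $\SO_{d[0,1]}$ the quantifier $\exists f^*$ ranges only over \emph{distributions} on $A^{\ar(f)+1}$, so the slice $f^*(y,\cdot)$ is not itself a distribution; it sums to at most $\tfrac{1}{|A|}$, and to exactly that only if one additionally forces the $y$-marginal of $f^*$ to be uniform. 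This rescaling then has to be propagated through every numerical identity in the matrix---multiplications pick up an extra factor, the constraint tying $f^*$ back to the intended family $\{f_y\}_y$ must be expressed---and doing this correctly is precisely the content of the paper's inductive case for $\forall y$ (the passage with the starred variables $f_i^*$, the auxiliary uniform distribution $d$, the extra functions $\tuple f_{\mathrm{id}}$, and the rewritten product atoms $f_\alpha$). The paper's $\exists y$ case, by contrast, is the easy direction and is handled with a single selector distribution.

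Your stages two and three are reasonable once stage one is actually carried out, and your final paragraph correctly flags the distribution-encoding issue for the auxiliary $h$'s; but the substantive obstacle of the lemma is hidden behind the word ``routine'' in your first stage.
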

\begin{proof}
We begin by defining a formula $\theta_i$ for each numerical term $i(\tuple x)$ using fresh function symbols $f_i$.
\begin{enumerate}[1.]
\item If $i(\tuple u)=g(\tuple u)$ where $g$ is a function symbol, then $\theta_i$ is defined as $f_i(\tuple u)=\SUM_{\emptyset}g(\tuple u)$.
\item If $i(\tuple u\tuple v)=j(\tuple u)\times k(\tuple v)$, then $\theta_i$ is defined as $\theta_j\wedge\theta_k\wedge f_i(\tuple u\tuple v)=f_j(\tuple u)\times f_k(\tuple v)$.
\item If $i(\tuple u)=\SUM_{\tuple v}j(\tuple u\tuple v)$, then $\theta_i$ is defined as $\theta_j\wedge f_i(\tuple u)=\SUM_{\tuple v}f_j(\tuple u\tuple v)$.
\end{enumerate}
Then the formula $\phi^*$ is defined as follows:
\begin{enumerate}[1.]
\item If $\phi=i(\tuple u)=j(\tuple v)$, then $\phi^*\coloneqq\exists\tuple f(f_i(\tuple u)=f_j(\tuple v)\wedge\theta_i\wedge\theta_j)$ where $\tuple f$ consists of the function symbols $f_k$ for each subterm $k$ of $i$ or $j$. The negated case $\phi=\neg i(\tuple u)=j(\tuple v)$ is analogous; just add negation in front of $f_i(\tuple u)=f_j(\tuple v)$.
\item If $\phi$ is an atom or a negated atom (of the first sort), then  $\phi^*\coloneqq\phi$.
\item If $\phi=\psi_0\circ\psi_1$, where $\circ\in\{\wedge,\lor\}$ and $\psi^*_i=Q_1^i f_1^i\dots Q_{m_i}^if_{m_i}^i\forall\tuple x_i\theta_i$ for $i=0,1$, then $\phi^*\coloneqq Q_1^0 f_1^0\dots Q_{m_0}^0 f_{m_0}^0 Q_1^1 f_1^1\dots Q_{m_1}^1 f_{m_1}^1 \forall\tuple x_0\tuple x_1(\theta_0\circ\theta_1$).
\item If $\phi=\exists y\psi$, where $\psi^*=Q_1f_1\dots Q_mf_m\forall\tuple x\theta$, then 
\[
\phi^*\coloneqq\exists g Q_1f_1\dots Q_mf_m\forall\tuple x\forall\tuple y(g(y)=0\lor\theta).
\]
\item Let $\phi=\forall y\psi$, where $\psi^*=Q_1f_1\dots Q_mf_m\forall\tuple x\theta$. Let $f_{m+1},\dots,f_n$ list all of the free function variables in $\phi$. Then define
\begin{align*}
\phi^*\coloneqq &Q_1 f_1^*\dots Q_mf_m^*\exists f_{m+1}^*\dots\exists f_{n}^*\exists\tuple f_{id}\exists d\forall yy'\forall\tuple x(d(y)=d(y')\wedge\\
&\bigwedge_{i=m+1}^{n}(f_i^*(y,\tuple x)=f_i^*(y',\tuple x)\wedge\SUM_y f_i^*(y,\tuple x)=f_i(\tuple x))\wedge\\
&f_1^*(y,\tuple x)=f_1^*(y',\tuple x)\circ_1 (f_2^*(y,\tuple x)=f_2^*(y',\tuple x)\circ_2\dots\\
&\circ_{m-1}(f_m^*(y,\tuple x)=f_m^*(y',\tuple x)\circ_m\theta^*)\dots)),
\end{align*}
where each $f_i^*$, for $1\leq i\leq n$, is such that $\ar(f_i^*)=\ar(f_i)+1$, $\tuple f_{id}$ introduces a new function symbol for each multiplication in $\theta$,
\[
\circ_i\coloneqq
\begin{cases}
\wedge &\quad\text{if } Q_i=\exists,\\
\to &\quad\text{if } Q_i=\forall,
\end{cases}
\]
and the formula $\theta^*$ is obtained from $\theta$ by replacing all second sort identities $\alpha$ of the form $f_i(\tuple u\tuple v)=f_j(\tuple u)\times f_k(\tuple v)$ with
\[
f_\alpha(y,\tuple u\tuple v)=d(y)\times f_i^*(y,\tuple u\tuple v)\wedge f_\alpha(y,\tuple u\tuple v)=f_j^*(y,\tuple u)\times f_k^*(y,\tuple v)
\]
and $f_i(\tuple u)=\SUM_{\tuple v}f_j(\tuple u\tuple v)$ with $f_i^*(y,\tuple u)=\SUM_{\tuple v}f_j^*(y,\tuple u\tuple v)$.
\item If $\phi=Qf\psi$, where $Q\in\{\exists,\forall\}$ and $\psi^*=Q_1f_1\dots Q_mf_m\forall\tuple x\theta$, then $\phi^*\coloneqq Qf\psi^*$.
\end{enumerate}
It is straightforward to check that $\phi^*$ is as wanted. In (5), instead of quantifying for each $y$ a distribution $f_y$, we quantify a single distribution $f^*$ such that $f^*(y,\tuple x)=\frac{1}{|A|}\cdot f_y(\tuple x)$, where $A$ is the domain of our structure.
\end{proof}
\begin{lemma}\label{distr_lemma}
We use the abbreviations $\forall^*x\phi$ and $\phi\to^*\psi$ for the $\FO(\cpind,\sim,\approx)$-formulas $\bn\exists x\bn\phi$ and $\bn(\phi\wedge\bn\psi)$, respectively. Let $\phi_\exists\coloneqq\exists\tuple y(\pci{}{\tuple x}{\tuple y}\wedge\psi(\tuple x,\tuple y))$ and $\phi_\forall\coloneqq\forall^*\tuple y(\pci{}{\tuple x}{\tuple y}\to^*\psi(\tuple x,\tuple y))$ be $\FO(\cpind,\sim)$-formulas with free variables form $\tuple x=(x_1,\dots,x_n)$. Then for any structure $\mathcal{A}$ and probabilistic team $\X$ over $\{x_1,\dots,x_n\}$,
\begin{enumerate}[(i)]
\item $\mathcal{A}\models_{\X}\phi_\exists$ iff $\mathcal{A}\models_{\X(d/\tuple y)}\psi$ for some distribution $d\colon A^{|\tuple y|}\to[0,1]$,
\item $\mathcal{A}\models_{\X}\phi_\forall$ iff $\mathcal{A}\models_{\X(d/\tuple y)}\psi$ for all distributions $d\colon A^{|\tuple y|}\to[0,1]$.
\end{enumerate}
\end{lemma}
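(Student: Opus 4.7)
The plan is to establish part~(i) by analyzing how the independence atom constrains the existential quantifier, and then deduce part~(ii) from~(i) via Boolean duality.

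For part~(i), I would begin by unfolding the semantics of $\exists \tuple y$. Each quantifier $\exists y_i$ introduces a function assigning a distribution over $A$ to every assignment in the current team. Since the variables in $\tuple y$ are fresh, iterating these choices collapses into a single quantification: the team $\mathbb{Y}$ obtained after $\exists \tuple y$ has the form $\mathbb{Y}(s_0(\tuple b/\tuple y)) = \X(s_0) \cdot G(s_0)(\tuple b)$ for some $G\colon X \to p_{A^{|\tuple y|}}$, and conversely every such $G$ arises from a legitimate sequence of choice functions via conditioning. Thus $\phi_\exists$ holds at $\X$ precisely when there exists $G$ of this form with $\A \models_{\mathbb{Y}} \pci{}{\tuple x}{\tuple y} \wedge \psi(\tuple x,\tuple y)$.

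The key step is to observe that the independence atom $\pci{}{\tuple x}{\tuple y}$ forces $G$ to be essentially constant on the support of $\X$. The atom requires $|\mathbb{Y}_{\tuple x\tuple y = \tuple a\tuple b}| = |\mathbb{Y}_{\tuple x = \tuple a}| \cdot |\mathbb{Y}_{\tuple y = \tuple b}|$ for all $\tuple a,\tuple b$. Because the domain of $\X$ is exactly $\var(\tuple x)$, each assignment $s_0 \in X$ is uniquely determined by $s_0(\tuple x)$, so this equation reduces to $G(s_0)(\tuple b) = d(\tuple b)$ for every $s_0 \in \supp(\X)$, where $d(\tuple b) \coloneqq |\mathbb{Y}_{\tuple y = \tuple b}|$ is a probability distribution over $A^{|\tuple y|}$. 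Consequently $\mathbb{Y}$ and $\X(d/\tuple y)$ agree on the support of the latter; their restrictions to $\Fr(\psi) \subseteq \var(\tuple x \tuple y)$ coincide up to zero-weight assignments, so Proposition~\ref{locality2} delivers $\A \models_{\mathbb{Y}} \psi$ iff $\A \models_{\X(d/\tuple y)} \psi$. This yields the forward direction. For the converse, given $d$, take $G$ to be the constant map with value $d$; the resulting team is precisely $\X(d/\tuple y)$, the independence atom holds by the product form, and $\psi$ holds by hypothesis.

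Part~(ii) follows by unfolding the Boolean shorthands. One has $\phi_\forall = \bn \exists \tuple y\, \bn\bn(\pci{}{\tuple x}{\tuple y} \wedge \bn \psi) = \bn \exists \tuple y (\pci{}{\tuple x}{\tuple y} \wedge \bn \psi)$. Applying~(i) with $\bn\psi$ in place of $\psi$ gives that $\A \models_\X \phi_\forall$ iff there is no distribution $d$ with $\A \not\models_{\X(d/\tuple y)} \psi$, i.e.\ iff $\A \models_{\X(d/\tuple y)} \psi$ for every distribution $d$, as required. The main obstacle I expect is the careful bookkeeping behind the collapse of the iterated choice function into a single distribution on $\supp(\X)$, together with a clean verification that the freedom in $G$ off the support of $\X$ and the switch between $\mathbb{Y}$ and $\X(d/\tuple y)$ are invisible to $\psi$ via locality; once that is in place the remaining arguments are routine.
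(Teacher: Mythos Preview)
Your proposal is correct and follows essentially the same line as the paper: both arguments reduce to the key observation that, for a team $\Y=\X(\tuple F/\tuple y)$ obtained by existential quantification over fresh variables $\tuple y$, the independence atom $\pci{}{\tuple x}{\tuple y}$ holds precisely when $\Y=\X(d/\tuple y)$ for some distribution $d$ on $A^{|\tuple y|}$. The paper establishes this characterization and then declares both (i) and (ii) ``straightforward''; your write-up makes (i) explicit and derives (ii) from (i) by the clean Boolean duality $\phi_\forall\equiv\bn\exists\tuple y(\pci{}{\tuple x}{\tuple y}\wedge\bn\psi)$, which is a perfectly good way to cash out the paper's omitted verification.

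Two minor remarks. First, once you know $G(s_0)=d$ on $\supp(\X)$, you in fact get $\Y=\X(d/\tuple y)$ exactly (zero-weight assignments on both sides have weight zero regardless of $G$), so the appeal to locality is harmless but not needed. Second, your ``collapse'' of the iterated choice functions into a single $G\colon X\to p_{A^{|\tuple y|}}$ is just the standard decomposition of a joint distribution into a marginal and conditionals; the paper keeps the tuple $\tuple F=(F_1,\dots,F_{|\tuple y|})$ explicit and writes the product $F_1(s)(a_1)\cdots F_{|\tuple y|}(s)(a_{|\tuple y|})$, but the content is identical.
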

\begin{proof}
Let $\Y\coloneqq\X(\tuple F/\tuple y)$ for some sequence of functions $\tuple F=(F_1,\dots,F_{|\tuple y|})$ such that $F_i\colon X(A/y_1)\dots(A/y_i)\to p_A$. Now
\[
\mathcal{A}\models_{\Y}\pci{}{\tuple x}{\tuple y} \iff |\Y_{\tuple x\tuple y=s(\tuple x)\tuple a}|=|\Y_{\tuple x=s(\tuple x)}|\cdot|\Y_{\tuple y=\tuple a}| \text{ for all } s\in X,\tuple a\in A^{|\tuple y|}.
\]
Since the variables $\tuple y$ are fresh, the right-hand side becomes $\X(s)\cdot F_1(s)(a_1)\cdot\ldots\cdot F_{|\tuple y|}(s)(a_{|\tuple y|})=\X(s)\cdot|\Y_{\tuple y=\tuple a}|$ for all $s\in X,\tuple a\in A^{|\tuple y|}$, i.e., $\X(\tuple F/\tuple y)=\X(d/\tuple y)$ for some distribution $d\colon A^{|\tuple y|}\to[0,1]$. It is now straightforward to check that the two claims hold.
\end{proof}
\begin{theorem}
Let $\phi(p)\in\SO_{d[0,1]}(\SUM,\times)$ be a formula in the form $\phi^*\coloneqq Q_1f_1\dots Q_nf_n\forall\tuple x\theta$, where $Q\in\{\exists, \forall\}$, $\theta$ is quantifier-free and such that all the numerical identity atoms are in the form $f_i(\tuple u\tuple v)=f_j(\tuple u)\times f_k(\tuple v)$ or $f_i(\tuple u)=\SUM_{\tuple v}f_j(\tuple u\tuple v)$ for distinct $f_i$,$f_j$,$f_k$ from $\{f_1,\dots,f_n,p\}$. 
Then there is a formula $\Phi\in\FO(\cpind,\sim,\approx)$ such that for all structures $\mathcal{A}$ and probabilistic teams $\X\coloneqq p^\mathcal{A}$,
\[
\mathcal{A}\models_\X\Phi \text{ if and only if } (\mathcal{A},p)\models\phi.
\] 
\end{theorem}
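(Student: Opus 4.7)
The plan is to construct $\Phi$ by translating $\phi^* = Q_1 f_1 \cdots Q_n f_n \forall \tuple x\, \theta$ in stages, emulating the $\SO_{d[0,1]}(\SUM,\times)$ quantifier and atom structure through the team-semantic machinery of $\FO(\cpind,\sim,\approx)$. The central devices are Lemma~\ref{distr_lemma}, which lets me replace a distribution quantifier $Q_i f_i$ with a tuple quantifier $Q_i \tuple y_i$ constrained by a probabilistic independence atom, and the marginal identity and independence atoms, which together express exactly the two permitted forms of numerical identity in the matrix.

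For each $i=1,\ldots,n$ I would introduce a fresh tuple $\tuple y_i$ of first-order variables of length $\ar(f_i)$ and translate $\exists f_i$ as $\exists \tuple y_i(\pci{}{\tuple w_i}{\tuple y_i} \wedge \cdots)$ and $\forall f_i$ as $\forall^* \tuple y_i(\pci{}{\tuple w_i}{\tuple y_i} \to^* \cdots)$, with $\tuple w_i$ being the concatenation of $\tuple v$ (which encodes $p$) and $\tuple y_1, \ldots, \tuple y_{i-1}$. By Lemma~\ref{distr_lemma}, after all the function quantifiers are processed the team represents each $f_i$ as an independent distribution $d_i$ on $\tuple y_i$, arbitrary in the existential case and universally quantified over all distributions in the universal case. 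The outer first-order quantifier $\forall \tuple x$ is translated as itself, since in team semantics it yields the uniform multiplication over $A^{|\tuple x|}$ that supplies all concrete assignments needed to test the atoms of $\theta$.

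Inside $\theta$, each numerical identity becomes a team-level atom. A multiplicative identity $f_i(\tuple u\tuple v) = f_j(\tuple u) \times f_k(\tuple v)$ is translated into the conjunction of the independence atom $\pci{}{\tuple y_i^{(1)}}{\tuple y_i^{(2)}}$, where $\tuple y_i^{(1)}, \tuple y_i^{(2)}$ are the subtuples of $\tuple y_i$ at the $\tuple u$- and $\tuple v$-positions, together with the marginal identities $\tuple y_i^{(1)} \approx \tuple y_j$ and $\tuple y_i^{(2)} \approx \tuple y_k$; this captures that $d_i$ factors as the independent product of $d_j$ and $d_k$. A sum identity $f_i(\tuple u) = \SUM_{\tuple v} f_j(\tuple u\tuple v)$ becomes the marginal identity $\tuple y_i \approx \tuple y_j^{(1)}$. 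Whenever $p$ plays the role of some $f_j$ in an atom, $\tuple y_j$ is replaced by $\tuple v$. First-sort atoms are passed through unchanged, and the Boolean structure is handled by combining team-semantic $\wedge$, $\vee$, and the Boolean negation $\sim$.

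The main obstacle is the correct treatment of Boolean connectives inside $\theta$ under the outer $\forall \tuple x$. Each translated numerical atom expresses a universally quantified $\SO$ identity about distributions, so a classical disjunction $\forall \tuple x(\alpha \vee \beta)$ is not in general equivalent to $\forall \tuple x\,\alpha \vee \forall \tuple x\,\beta$, which is what a naive use of $\sim$-simulated classical disjunction at the team level would yield. Following the strategy of \cite[Lem.~16]{DHKMV18} and \cite[Lem.~20]{HannulaV22}, I would resolve this by inserting auxiliary existentially quantified distributions that witness a partition of $A^{|\tuple x|}$ into regions where individual disjuncts hold, and using team-semantic (scaled-union) disjunction to route each region to the matching disjunct. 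Once this bookkeeping is in place, correctness, $\A \models_\X \Phi$ iff $(\A, p) \models \phi$, follows by a routine clause-by-clause semantic check, using Lemma~\ref{distr_lemma} to relate the introduced $d_i$ with the $\SO$-quantified $f_i$.
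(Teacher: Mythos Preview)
Your use of Lemma~\ref{distr_lemma} to simulate the distribution quantifiers $Q_if_i$ by $Q_i^*\tuple y_i$ with an independence constraint is correct and matches the paper. The gap is in your translation of the numerical identity atoms.

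Your translation of $f_i(\tuple u\tuple v)=f_j(\tuple u)\times f_k(\tuple v)$ as $\pci{}{\tuple y_i^{(1)}}{\tuple y_i^{(2)}}\wedge\tuple y_i^{(1)}\approx\tuple y_j\wedge\tuple y_i^{(2)}\approx\tuple y_k$ expresses a \emph{global} property of the distributions: that $d_i$ factors as $d_j\times d_k$ at \emph{every} argument. But in $\theta$ the atom is evaluated at a particular assignment $\tuple a$ to $\tuple x$, and $\tuple u,\tuple v$ are specific subtuples of $\tuple x$; the identity may hold at some $\tuple a$ and fail at others, and the Boolean structure of $\theta$ mixes such local conditions with other atoms. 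Your partitioning idea for disjunction cannot repair this: if the team is split along $\tuple x$-values into $M_0,M_1$, the marginal of each subteam on $\tuple y_i$ is still the full $d_i$ (independence was built in), so your translated atom evaluates identically on every nonempty piece of the partition. The atom translations never ``see'' $\tuple x$, so routing regions of $A^{|\tuple x|}$ to different disjuncts has no effect on them.

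The paper's translation fixes exactly this. The key invariant is that for teams of the form $\Y=\X(M/\tuple x)(f_1/\tuple y_1)\dots(f_n/\tuple y_n)$ one has $\A\models_\Y\Theta\iff(\A,p,f_1,\dots,f_n)\models\theta(\tuple a)$ for all $\tuple a\in M$. For the product atom the paper sets
\[
\Theta\dfn\exists\alpha\beta\big((\alpha=0\leftrightarrow\tuple x_i=\tuple y_i)\wedge(\beta=0\leftrightarrow\tuple x_j\tuple x_k=\tuple y_j\tuple y_k)\wedge\tuple x\alpha\approx\tuple x\beta\big),
\]
which for each $\tuple a\in M$ compares $|\Y_{\tuple x=\tuple a}|\cdot f_i(\tuple a_i)$ with $|\Y_{\tuple x=\tuple a}|\cdot f_j(\tuple a_j)\cdot f_k(\tuple a_k)$. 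Because $\tuple x$ appears in the marginal identity, the translated atom is sensitive to the current support $M$, and the disjunction translation $\exists z(\pci{\tuple x}{z}{z}\wedge((\Theta_0\wedge z{=}0)\lor(\Theta_1\wedge\neg z{=}0)))$ then correctly splits $M$ into $M_0,M_1$ on which $\Theta_0,\Theta_1$ are tested. To make your plan work you would need to rewrite your atom translations in this $\tuple x$-parameterised form; once you do, the rest of your outline goes through.
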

\begin{proof}
Define 
\begin{align*}
\Phi\coloneqq\forall\tuple x&Q_1^*\tuple y_1(\pci{}{\tuple x}{\tuple y_1}\circ_1Q_2^*\tuple y_2(\pci{}{\tuple x\tuple y_1}{\tuple y_2}\circ_2Q_3^*\tuple y_3(\pci{}{\tuple x\tuple y_1\tuple y_2}{\tuple y_3}\circ_3\dots&\\
&Q_n^*\tuple y_n(\pci{}{\tuple x\tuple y_1\dots\tuple y_{n-1}}{\tuple y_n}\circ_n\Theta)\dots))),
\end{align*}
where 
$Q_i^*=\exists$ and $\circ_i=\wedge$, whenever $Q_i=\exists$ and $Q_i^*=\forall^*$ and $\circ_i=\to^*$, whenever $Q_i=\forall$.

By Lemma \ref{distr_lemma}, it suffices to show that for all distributions $f_1,\dots,f_n$, subsets $M\subseteq A^{|\tuple x|}$, and probabilistic teams $\Y\coloneqq\X(M/\tuple x)(f_1/\tuple y_1)\dots(f_n/\tuple y_n)$, we have
\[
\mathcal{A}\models_\Y\Theta \iff (\mathcal{A},p,f_1,\dots,f_n)\models\theta(\tuple a) \text{ for all } \tuple a\in M.
\]
The claim is shown by induction on the structure of the formula $\Theta$. 
\begin{enumerate}[1.]
\item If $\theta$ is an atom or a negated atom (of the first sort), then clearly we may let $\Theta\coloneqq\theta$.
\item Let $\theta=f_i(\tuple x_i)=f_j(\tuple x_j)\times f_k(\tuple x_k)$. Then define
\[
\Theta\coloneqq\exists\alpha\beta((\alpha=0\leftrightarrow\tuple x_i=\tuple y_i)\wedge(\beta=0\leftrightarrow\tuple x_j\tuple x_k=\tuple y_j\tuple y_k)\wedge\tuple x\alpha\approx\tuple x\beta).
\]
Assume first that $(\mathcal{A},p,f_1,\dots,f_n)\models\theta(\tuple a)$ for a given $\tuple a\in M$. Then $f_i(\tuple a_i)=f_j(\tuple a_j)\times f_k(\tuple a_k)$. Define functions $F_\alpha,F_\beta\colon Y\to\{0,1\}$ such that $F_\alpha(s)=0$ iff $s(\tuple x_i)=s(\tuple y_i)$, and $F_\beta(s)=0$ iff $s(\tuple x_j\tuple x_k)=s(\tuple y_j\tuple y_k)$. Let $\mathbb{Z}\coloneqq\Y(F_\alpha/\alpha)(F_\beta/\beta)$. It suffices to show that $\mathcal{A}\models_\mathbb{Z}\tuple x\alpha\approx\tuple x\beta$. Now, by the definition of $\mathbb{Z}$, we have $|\mathbb{Z}_{\tuple x\alpha=\tuple a0}|=|\mathbb{Z}_{\tuple x\tuple y_i=\tuple a\tuple a_i}|=|\mathbb{Y}_{\tuple x=\tuple a}|\cdot f_i(\tuple a_i)$ and $|\mathbb{Z}_{\tuple x\beta=\tuple a0}|=|\mathbb{Z}_{\tuple x\tuple y_j\tuple y_k=\tuple a\tuple a_j\tuple a_k}|=|\mathbb{Y}_{\tuple x=\tuple a}|\cdot f_j(\tuple a_j)\cdot f_k(\tuple a_k)$. Since $f_i(\tuple a_i)=f_j(\tuple a_j)\times f_k(\tuple a_k)$, we obtain $|\mathbb{Z}_{\tuple x\alpha=\tuple a0}|=|\mathbb{Z}_{\tuple x\beta=\tuple a0}|$ and $|\mathbb{Z}_{\tuple x\alpha=\tuple a1}|=|\mathbb{Y}_{\tuple x=\tuple a}|\cdot (1-f_i(\tuple a_i))=|\mathbb{Z}_{\tuple x\beta=\tuple a1}|$. Hence, $\mathcal{A}\models_\Y\Theta$.

Assume then that $\mathcal{A}\models_\Y\Theta$, and define $\mathbb{Z}$ as the extension of $\Y$ such that $\mathbb{Z}_{\alpha=0}=\mathbb{Z}_{\tuple x_i=\tuple y_i}$ and $\mathbb{Z}_{\beta=0}=\mathbb{Z}_{\tuple x_j\tuple x_k=\tuple y_j\tuple y_k}$. Then $|\mathbb{Y}_{\tuple x=\tuple a}|\cdot f_i(\tuple a_i)=|\mathbb{Z}_{\tuple x\tuple y_i=\tuple a\tuple a_i}|=|\mathbb{Z}_{\tuple x\tuple x_i=\tuple a\tuple y_i}|=|\mathbb{Z}_{\tuple x\alpha=\tuple a0}|=|\mathbb{Z}_{\tuple x\beta=\tuple a0}|=|\mathbb{Z}_{\tuple x\tuple x_j\tuple x_k=\tuple a\tuple y_j\tuple y_k}|=|\mathbb{Z}_{\tuple x\tuple y_j\tuple y_k=\tuple a\tuple a_j\tuple a_k}|=|\mathbb{Y}_{\tuple x=\tuple a}|\cdot f_j(\tuple a_j)\cdot f_k(\tuple a_k)$ for all $\tuple a\in M$. Hence, $(\mathcal{A},p,f_1,\dots,f_n)\models\theta(\tuple a)$ for all $\tuple a\in M$.

The negated case $\neg f_i(\tuple x_i)=f_j(\tuple x_j)\times f_k(\tuple x_k)$ is analogous; just add $\bn$ in front of the existential quantification.
\item Let $\theta=f_i(\tuple x_i)=\SUM_{\tuple x_k}f_j(\tuple x_k\tuple x_j)$. Then define
\[
\Theta\coloneqq\exists\alpha\beta((\alpha=0\leftrightarrow\tuple x_i=\tuple y_i)\wedge(\beta=0\leftrightarrow\tuple x_j=\tuple y_j)\wedge\tuple x\alpha\approx\tuple x\beta).
\]
The negated case $\neg f_i(\tuple x_i)=\SUM_{\tuple x_k}f_j(\tuple x_k\tuple x_j)$ is analogous; just add $\bn$ in front of the existential quantification. The proof is similar to the previous one, so it is omitted.
\item If $\theta=\theta_0\wedge\theta_1$, then $\Theta=\Theta_0\wedge\Theta_1$. The claim directly follows from semantics of conjunction. 
\item Let $\theta=\theta_0\lor\theta_1$. Then define
\[
\Theta\coloneqq\exists z(\pci{\tuple x}{z}{z}\wedge((\Theta_0\wedge z=0)\lor(\Theta_1\wedge \neg z=0))).
\]
Assume first that $(\mathcal{A},p,f_1,\dots,f_n)\models\theta(\tuple a)$ for all $\tuple a\in M$. Then there are $M_0,M_1$ such that $M_0\cup M_1=M$, $M_0\cap M_1=\emptyset$, and $(\mathcal{A},p,f_1,\dots,f_n)\models\theta_i(\tuple a)$  for all $\tuple a\in M_i$. Define $F\colon Y\to p_A$ such that $F(s)=c_i$ when $s(\tuple x)\in M_i$, where $c_i$ is the distribution defined as
\begin{align*}
c_i(a)\coloneqq
\begin{cases}
1 \text{ if } a=i,\\
0 \text{ otherwise}.
\end{cases}
\end{align*}
Let $\mathbb{Z}_i\coloneqq\X(M_i/\tuple x)(f_1/\tuple y_1)\dots(f_n/\tuple y_n)(c_i/z)$ and $k=|M_0|/|M|$. Now $\mathbb{Z}=\Y(F/z)=\mathbb{Z}_0\sqcup_k\mathbb{Z}_1$, and we have $\mathcal{A}\models_{\mathbb{Z}}\pci{\tuple x}{z}{z}$, $\mathcal{A}\models_{\mathbb{Z}_0}\Theta_0\wedge z=0$, and $\mathcal{A}\models_{\mathbb{Z}_1}\Theta_1\wedge \neg z=0$. By locality, this implies that $\mathcal{A}\models_Y\Theta$. 

Assume then that $\mathcal{A}\models_\Y\Theta$. 
Let $F\colon Y\to p_A$ be such that $\mathcal{A}\models_{\mathbb{Z}}\pci{\tuple x}{z}{z}\wedge((\Theta_0\wedge z=0))\lor(\Theta_1\wedge \neg z=0)$ for $\mathbb{Z}=\Y(F/z)$. 
Let then $k\mathbb{Z}_0'=\mathbb{Z}_{z=0}$ and $(1-k)\mathbb{Z}_1'=\mathbb{Z}_{z=1}$ for $k=|\mathbb{Z}_{z=0}|$. 
Now, we also have $\mathcal{A}\models_{\mathbb{Z}_i'}\Theta_i$ for $i=0,1$. 
Since $\mathcal{A}\models_{\mathbb{Z}}\pci{\tuple x}{z}{z}$, we have either $\mathbb{Z}_{\tuple x=\tuple a}=\mathbb{Z}_{\tuple xz=\tuple a0}$ or $\mathbb{Z}_{\tuple x=\tuple a}=\mathbb{Z}_{\tuple xz=\tuple a1}$ for all $a\in M$. 
We get that $\mathbb{Z}_{\z=0}=\mathbb{Z}_{\tuple x\in M_0}$ for some $M_0\subseteq M$. 
Thus, $\mathbb{Z}_0'=\frac{|M|}{|M_0|}(\X(M/\tuple x)(f_1/\tuple y_1)\dots(f_n/\tuple y_n))_{\tuple x\in M_0}=\X(M_0/\tuple x)(f_1/\tuple y_1)\dots(f_n/\tuple y_n)$. 
Hence, $(\mathcal{A},p,f_1,\dots,f_n)\models\theta_0(\tuple a)$ for all $\tuple a\in M_0$. 
We obtain $(\mathcal{A},p,f_1,\dots,f_n)\models\theta_1(\tuple a)$ for all $\tuple a\in M\setminus M_0$ by an analogous argument. 
As a result, we get that $(\mathcal{A},p,f_1,\dots,f_n)\models\theta(\tuple a)$ for all $\tuple a\in M$.
\end{enumerate}
\end{proof}

\section{Probabilistic logics and entropy atoms}
In this section we consider extending probabilistic team semantics with novel entropy atoms.
For a discrete random variable $X$, with possible outcomes $x_1, ..., x_n$ occuring with probabilities $\mathrm{P}(x_1), ..., \mathrm{P}(x_n)$, the Shannon entropy of $X$ is given as:
\[
\entropy{X} \dfn -\sum_{i=1}^n {\mathrm{P}(x_i) \log \mathrm{P}(x_i)},
\]
The base of the logarithm does not play a role in this definition (usually it is assumed to be $2$).
For a set of discrete random variables, the entropy is defined in terms of the vector-valued random variable it defines. Given three sets of discrete random variables $X,Y,Z$, it is known that $X$ is conditionally independent of $Y$ given $Z$ (written $X \perp\!\!\!\perp Y \mid Z$)  if and only if the conditional mutual information $\mathrm{I}(X;Y| Z)$ vanishes. 
Similarly, functional dependence of $Y$ from $X$ holds if and only if the conditional entropy $H(Y|X)$ of $Y$ given $X$ vanishes. Writing $UV$ for the union of two sets $U$ and $V$, we note that $\mathrm{I}(X;Y| Z)$ and $H(Y|X)$ can respectively be expressed as $H(ZX) + H(ZY) - H(Z) - H(ZXY)$ and $H(XY) - H(X)$.
Thus many familiar dependency concepts over random variables translate into linear equations over Shannon entropies. 
In what follows, we shortly consider similar information-theoretic approach to dependence and independence in probabilistic team semantics.

Let $\X\colon X \to [0,1]$ be a probabilistic team over a finite structure $\A$ with universe $A$. Let $\tuple x$ be a $k$-ary sequence of variables from the domain of $\X$.
Let $P_{\tuple x}$ be the vector-valued random variable, where $P_{\tuple x}(\tuple a)$ is the probability that $\tuple x$ takes value $\tuple a $ in the probabilistic team $\X$.
The \emph{Shannon entropy} of 
$\tuple x$ in $\X$ is defined as follows:
\begin{equation}\label{eq:entropy}
\entropyteam{\tuple x}{\X} \dfn -\sum_{\tuple a \in A^k} \mathrm{P}_{\tuple x}(\tuple a) \log \mathrm{P}_{\tuple x}(\tuple a).
\end{equation}
Using this definition we now define the concept of an entropy atom.
\begin{definition}[Entropy atom]
Let $\tuple x$ and $\tuple y$ be two sequences of variables from the domain of $\X$. These sequences may be of different lengths. The \emph{entropy atom} is an expression of the form $\entropyatom{\tuple x}{\tuple y}$, and it is given the following semantics:
\[
\A\models_\X \entropyatom{\tuple x}{\tuple y} \iff \entropyteam{\tuple x}{\X} = \entropyteam{\tuple y}{\X}.
\]	
\end{definition}

We then define \emph{entropy logic} $\FO(\eH)$ as the logic obtained by extending first-order logic with entropy atoms. The entropy atom is relatively powerful compared to our earlier atoms, since, as we will show next, it encapsulates many familiar dependency notions such as dependence and conditional independence. 
\begin{theorem}
The following equivalences hold over probabilistic teams of finite structures with two distinct constants $0$ and $1$:
\begin{enumerate}
\item $\dep(\tuple x, \tuple y)\equiv \entropyatom{\tuple x}{\tuple x \tuple y}$.
\item $\pmi{\tuple x}{\tuple y} \equiv \phi$, where $\phi$ is defined as
\begin{align*}
\forall z \exists \tuple u \tuple v  \Big( &\big[z=0 \to \big(\dep(\tuple u, \tuple x) \land \dep(\tuple x, \tuple u) \land \dep(\tuple v, \tuple x\tuple y) \land \dep(\tuple x\tuple y, \tuple v)\big)\big]  \land\\
& \big[z=1 \to \big(\dep(\tuple u, \tuple y) \land \dep(\tuple y, \tuple u)   \land \tuple v =\tuple 0\big)\big] \land \\
& \big[(z=0 \lor z=1 ) \to \entropyatom{\tuple uz}{\tuple vz}\big]\Big),
\end{align*}
where $|\tuple u|=\max\{|\tuple x|,\tuple y|\}$ and $|\tuple v | = |\tuple x \tuple y|$.
\end{enumerate}
\end{theorem}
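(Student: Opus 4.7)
\medskip
\noindent
\textbf{Proof plan.} The plan is to prove the two equivalences by reducing the team-semantic conditions to standard identities in information theory, namely the chain rule and the non-negativity of conditional entropy. For part (1), the key identity is $\eH(\tuple x \tuple y) = \eH(\tuple x) + \eH(\tuple y \mid \tuple x)$, together with the basic fact that $\eH(\tuple y \mid \tuple x) = 0$ iff, for every value $\tuple a$ with $P_{\tuple x}(\tuple a) > 0$, the conditional distribution of $\tuple y$ given $\tuple x = \tuple a$ is concentrated on a single point. I would translate the latter condition in one step to the statement that for all $s,s' \in \supp(\X)$ with $s(\tuple x) = s'(\tuple x)$ we have $s(\tuple y) = s'(\tuple y)$, which is precisely the team-semantic meaning of $\dep(\tuple x, \tuple y)$. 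This yields the equivalence $\dep(\tuple x, \tuple y) \equiv \entropyatom{\tuple x}{\tuple x\tuple y}$.

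For part (2), I would analyse the formula $\phi$ by tracking how the quantifier $\forall z$ and the implications $z = i \to \cdots$ split the team. The universal quantifier duplicates $\X$ uniformly over $A$ in the new variable $z$; the existential quantifiers $\exists \tuple u \tuple v$ then let us pick functions $F_{\tuple u}, F_{\tuple v}$, and the top-level conjuncts restrict attention to the sub-teams on which $z = 0$ or $z = 1$. On the $z=0$ sub-team, the dependence atoms $\dep(\tuple u, \tuple x) \land \dep(\tuple x, \tuple u)$ force a bijection between the supports of $\tuple u$ and $\tuple x$, so $\eH(\tuple u \mid z = 0) = \entropyteam{\tuple x}{\X}$, and similarly $\eH(\tuple v \mid z = 0) = \entropyteam{\tuple x \tuple y}{\X}$. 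On the $z=1$ sub-team, the atoms force $\eH(\tuple u \mid z = 1) = \entropyteam{\tuple y}{\X}$, while $\tuple v = \tuple 0$ gives $\eH(\tuple v \mid z = 1) = 0$.

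The final step is to apply the conditional-entropy decomposition on the sub-team where $z \in \{0,1\}$, normalized as a probability distribution (so that $P(z = 0) = P(z = 1) = 1/2$). This gives
\begin{align*}
\eH(\tuple u z) &= \eH(z) + \tfrac{1}{2}\bigl(\entropyteam{\tuple x}{\X} + \entropyteam{\tuple y}{\X}\bigr),\\
\eH(\tuple v z) &= \eH(z) + \tfrac{1}{2}\,\entropyteam{\tuple x \tuple y}{\X},
\end{align*}
so that $\entropyatom{\tuple u z}{\tuple v z}$ is equivalent to $\entropyteam{\tuple x}{\X} + \entropyteam{\tuple y}{\X} = \entropyteam{\tuple x \tuple y}{\X}$, which is the standard information-theoretic characterization of $\pmi{\tuple x}{\tuple y}$. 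Given this calculation, the forward direction follows by choosing $\tuple u$ and $\tuple v$ to be the natural bijective copies (padded with $\tuple 0$ to match the lengths $|\tuple u| = \max(|\tuple x|,|\tuple y|)$ and $|\tuple v| = |\tuple x \tuple y|$), and the backward direction by reading off the entropy equation from any choice of witnesses.

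The main obstacle I foresee is purely bookkeeping: showing that the semantics of the abbreviations $\forall^* z$ and $\to^*$ (the duality-based universals and implications defined via $\sim$) together with the existential quantifier really do produce the intended bilayered team in which the conditional distributions above are the claimed ones. In particular, one must verify that the implication $(z = 0 \lor z = 1) \to \entropyatom{\tuple u z}{\tuple v z}$ evaluates the entropy atom on the normalized restriction to $z \in \{0,1\}$, and that the choice of $F_{\tuple u}, F_{\tuple v}$ on the irrelevant part of the team (where $z \notin \{0,1\}$) can always be made to trivially satisfy the remaining conjuncts. Once this bookkeeping is in place, the information-theoretic computations above close both equivalences.
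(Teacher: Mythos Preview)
Your proposal is correct and matches the paper's approach closely: for (1) both reduce to the vanishing of $\eH(\tuple y \mid \tuple x)$, and for (2) the paper carries out the same entropy bookkeeping by directly computing $\eH(\tfrac{1}{2}X) = \tfrac{1}{2} + \tfrac{1}{2}\eH(X)$ for the stacked halved distributions, which is precisely your chain-rule identity $\eH(\tuple uz) = \eH(z) + \eH(\tuple u \mid z)$ unpacked. One small correction to your obstacle paragraph: the connectives in $\phi$ are the ordinary team-semantic $\forall$ and the split-disjunction abbreviation $\psi \to \chi \coloneqq \psi^\neg \lor (\psi \land \chi)$ (the formula lives in $\FO(\eH)$, where Boolean negation is not available), not the $\sim$-based $\forall^*$ and $\to^*$; your actual description of how the implications restrict to the $z=0$, $z=1$, and $z\in\{0,1\}$ subteams is already the correct one for these standard connectives.
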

\begin{proof}
The translation of the dependence atom simply expresses that the conditional entropy of $\vec{y}$ given $\vec{x}$ vanishes, which expresses that $\vec{y}$ depends functionally on $\vec{x}$.
 
 Consider the translation of the independence atom. Observe that $\phi$ essentially restricts attention to that subteam $\Y$ in which the universally quantified variable $z$ is either $0$ or $1$. 
There, the weight distribution of $\tuple uz$ is obtained by vertically stacking together halved weight distributions of $\tuple x$ and $\tuple y$. Similarly,  $\tuple vz$ corresponds to halving and vertical stacking of $\tuple x \tuple y$ and a dummy constant distribution $\tuple 0$. Consider now the effect of halving the weights of the entropy function given in \eqref{eq:entropy}:
\begin{align*}
\entropy{\frac{1}{2}X} & =  -\sum_{i=1}^n {\frac{1}{2}\mathrm{P}(x_i) \log \frac{1}{2}\mathrm{P}(x_i)}\\
&= -\frac{1}{2}\sum_{i=1}^n {\mathrm{P}(x_i) (\log \frac{1}{2} + \log \mathrm{P}(x_i))}\\
& =  -\frac{1}{2}\sum_{i=1}^n \mathrm{P}(x_i) \log \frac{1}{2})   -\frac{1}{2}\sum_{i=1}^n \mathrm{P}(x_i) \log \mathrm{P}(x_i)\\
& = \frac{1}{2} + \frac{1}{2}\entropy{X}.
\end{align*}
Let us turn back to our subteam $\Y$, obtained by quantification and split disjunction from some initial team $\X$. This subteam has to satisfy $\entropyatom{\tuple uz}{\tuple vz}$. What this amounts to, is the following
\begin{align*}
\entropyteam{\tuple uz}{\Y} = \entropyteam{\tuple yz}{\Y} &\iff \entropyteam{\frac{1}{2}\tuple x}{\X} + \entropyteam{\frac{1}{2}\tuple y}{\X} = \entropyteam{\frac{1}{2}\tuple xy}{\X} + \entropyteam{\frac{1}{2}\tuple 0}{\X}\\
& \iff 1 + \frac{1}{2}\entropyteam{\tuple x}{\X} + \frac{1}{2}\entropyteam{\tuple y}{\X} = \frac{1}{2} + \frac{1}{2}\entropyteam{\tuple x\tuple y}{\X} + \frac{1}{2}\entropyteam{\tuple 0}{\X}\\
& \iff \entropyteam{\tuple x}{\X} + \entropyteam{\tuple y}{\X} = \entropyteam{\tuple x \tuple y}{\X}.
\end{align*}
Thus, the translation captures the entropy condition of the independence atom. 
\end{proof}

Since conditional independence can be expressed with marginal independence, i.e.,  $\FO(\cpind)\equiv\FO(\pind)$ \cite[Theorem 11]{HannulaHKKV19}, we obtain the following corollary:

\begin{corollary}\label{cor:FO-ind-eh}
$\FO(\cpind)\leq \FO(\eH)$.
\end{corollary}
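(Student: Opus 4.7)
The plan is to obtain the corollary by combining the preceding theorem with the cited normal form result for probabilistic independence. First I would recall that by \cite[Theorem 11]{HannulaHKKV19} we have $\FO(\cpind) \equiv \FO(\pind)$, so that every conditional independence atom $\pci{\tuple x}{\tuple y}{\tuple z}$ can be eliminated in favour of marginal independence atoms $\pmi{\tuple u}{\tuple v}$ at the cost of adding existential quantification and conjunctions. Consequently it suffices to show $\FO(\pind) \leq \FO(\eH)$.

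Next, given an $\FO(\pind)$-formula $\phi$, I would translate it to an $\FO(\eH)$-formula $\phi^*$ by induction on the structure of $\phi$. For first-order atoms and negated first-order atoms, set $\phi^* \dfn \phi$. For the connectives $\wedge,\vee$ and the quantifiers $\exists,\forall$, the translation commutes in the obvious way, since both logics share the same first-order syntax and team-semantic clauses for these constructors. The only non-trivial case is the atom $\pmi{\tuple x}{\tuple y}$, for which I would plug in the $\FO(\eH)$-formula $\phi$ supplied by the second clause of the preceding theorem.

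A small caveat is that the preceding theorem assumes that the finite structure contains two distinct constants $0$ and $1$ (needed for the constants used in the translation of $\pmi{\tuple x}{\tuple y}$). This is a minor issue: one can either restrict attention to signatures containing such constants, or simulate them by existentially quantifying two fresh variables together with a formula forcing them to take distinct values. With locality (Proposition \ref{locality2}) one verifies that the inductive translation preserves satisfaction on every probabilistic team.

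The main obstacle is essentially absent here, since all the work has already been done in the preceding theorem; the corollary is just an inductive lifting of the atomic translations together with the chain $\FO(\cpind) \equiv \FO(\pind) \leq \FO(\eH)$. The only thing to keep track of is that the translation of $\pmi{\tuple x}{\tuple y}$ introduces fresh quantified variables $\tuple u, \tuple v, z$, which must be chosen disjoint from the free variables of the surrounding formula to avoid clashes — a routine matter of variable renaming.
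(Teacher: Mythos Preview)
Your proposal is correct and follows exactly the paper's approach: the paper derives the corollary in one line from the preceding theorem together with $\FO(\cpind)\equiv\FO(\pind)$ \cite[Theorem~11]{HannulaHKKV19}, and your inductive lifting is just a spelled-out version of that. The only point worth noting is that the formula $\phi$ in clause~(2) of the theorem itself uses dependence atoms, which you must first replace via clause~(1); otherwise your argument is complete.
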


It is easy to see at this point that entropy logic and its extension with negation are subsumed by second-order logic over the reals with exponentiation. 
\begin{theorem} \label{thm:H-ESO-H-neg-SO}
$\FO(\eH)\leq \ESO_{\mathbb{R}}(+,\times, \log)$ and  $\FO(\eH,\sim)\leq \SO_{\mathbb{R}}(+,\times, \log)$. 
\end{theorem}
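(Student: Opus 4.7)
The strategy is to extend inductively the translation from Theorem~\ref{thm_ind<SO}, adding a single new clause for the entropy atom. For each $\FO(\eH)$-formula $\phi(\tuple v)$ I would produce an $\ESO_{\mathbb{R}}(+,\times,\log)$-formula $\psi_\phi(f)$ with one free arity-$|\tuple v|$ function variable $f$, such that $\A\models_\X \phi$ iff $(\A,f_\X)\models\psi_\phi(f)$, where $f_\X(s(\tuple v)) = \X(s)$. Observe first that the proofs of Propositions~\ref{SUM+_prop} and~\ref{SOprop} carry over to the extensions by $\log$ since they never manipulate the $\log$ operator, yielding $\SO_{d[0,1]}(\SUM,\times,\log)\equiv\SO_{\mathbb{R}}(+,\times,\log)$ and the analogous equivalence for the existential fragments. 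It therefore suffices to target $\ESO_{d[0,1]}(\SUM,\times,\log)$ in the first inclusion and $\SO_{d[0,1]}(\SUM,\times,\log)$ in the second.

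All inductive cases for relation atoms, Boolean connectives, and first-order quantifiers are inherited verbatim from Theorem~\ref{thm_ind<SO}. For the Boolean negation $\sim\phi$ occurring in $\FO(\eH,\sim)$, I would take $\psi_{\sim\phi}(f) \dfn \neg\psi_\phi(f)$; after pushing the negation through existential function quantifiers this produces universal ones, which is precisely why the target logic in the second inclusion must be the full $\SO$ rather than its existential fragment.

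The genuinely new case is $\phi(\tuple v) = (\entropyatom{\tuple x}{\tuple y})$. I introduce two existentially quantified distributions $g_1\colon A^{|\tuple x|}\to[0,1]$ and $g_2\colon A^{|\tuple y|}\to[0,1]$, pinned to the marginals $\mathrm{P}_{\tuple x}$ and $\mathrm{P}_{\tuple y}$ determined by $f$ via the constraints $\forall\tuple u\,(g_1(\tuple u) = \SUM_{\tuple v\setminus\tuple x} f(\tuple v))$ and its analogue for $g_2$. Equation~\eqref{eq:entropy} then becomes the numerical identity
\[
\SUM_{\tuple u} h_1(\tuple u) \,=\, \SUM_{\tuple u'} h_2(\tuple u'),
\]
where $h_1,h_2$ are fresh existentially quantified functions intended to equal $g_1\cdot\log g_1$ and $g_2\cdot\log g_2$ respectively (the shared minus sign cancels on both sides).

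The main obstacle is encoding the convention $0\log 0 = 0$ implicit in~\eqref{eq:entropy}, since $\log$ is undefined at $0$. I would enforce it explicitly by imposing on $h_1$ the disjunctive constraint
\[
\forall\tuple u\bigl((g_1(\tuple u){=}0 \wedge h_1(\tuple u){=}0)\vee (\neg g_1(\tuple u){=}0 \wedge h_1(\tuple u)=g_1(\tuple u)\cdot\log g_1(\tuple u))\bigr),
\]
and analogously for $h_2$. Since any $f$ uniquely determines $g_1,g_2,h_1,h_2$ via these constraints, the value of $\SUM_{\tuple u} h_1(\tuple u)$ coincides with $-\entropyteam{\tuple x}{\X}$, and similarly on the right-hand side, so the displayed identity captures exactly $\entropyteam{\tuple x}{\X}=\entropyteam{\tuple y}{\X}$. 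With this clause in place, the inductive semantic equivalence argument proceeds exactly as in Theorem~\ref{thm_ind<SO}, establishing both inclusions.
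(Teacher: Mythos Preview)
Your approach coincides with the paper's: extend the inductive translation of Theorem~\ref{thm_ind<SO} by a single new clause for the entropy atom, expressing it via $\SUM$ and $\log$. The paper's own proof is a two-line sketch that simply writes the entropy identity as a numerical term equation, so your explicit treatment of the $0\log 0$ convention is in fact more careful than what the paper provides.

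One small inconsistency to tighten: you announce that it suffices to target $\ESO_{d[0,1]}(\SUM,\times,\log)$, but your auxiliary functions $h_1,h_2$ satisfy $h_i(\tuple u)=g_i(\tuple u)\log g_i(\tuple u)\leq 0$ for $g_i(\tuple u)\in(0,1]$, so they are neither $[0,1]$-valued nor distributions, and hence cannot be quantified in the $d[0,1]$-fragment. The fix is immediate: work directly in $\ESO_{\mathbb{R}}(+,\times,\log)$, which is the stated target anyway, and wherever the translation of Theorem~\ref{thm_ind<SO} implicitly relied on quantified functions being distributions (for instance in the clause for $\lor$), add the constraint $\SUM_{\tuple w}g(\tuple w)=1$ explicitly. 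Alternatively, since you have already argued that $\ESO_{d[0,1]}(\SUM,\times,\log)\equiv\ESO_{\mathbb{R}}(+,\times,\log)$, you may simply appeal to that equivalence after the fact.
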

\begin{proof}
The translation is similar to the one in Theorem \ref{thm_ind<SO}, so it suffices to notice that the entropy atom $\eH(\tuple x)=\eH(\tuple y)$ can be expressed as
\[
\SUM_{\tuple x}(\SUM_{\tuple z}f(\tuple x,\tuple z)\log \SUM_{\tuple z}f(\tuple x,\tuple z))=\SUM_{\tuple y}(\SUM_{\tuple z'}f(\tuple y,\tuple z')\log \SUM_{\tuple z'}f(\tuple y,\tuple z')).
\]
Since $\SUM$ can be expressed in $\ESO_{\mathbb{R}}(+,\times, \log)$ and $\SO_{\mathbb{R}}(+,\times, \log)$, we are done.
\end{proof}




\section{Logic for first-order probabilistic dependecies
}


Here, we define the logic $\FOPT$, which was introduced in \cite{HHK22}.\footnote{In \cite{HHK22}, two sublogics of $\FOPT$, called $\mathrm{FOPT}(\leq^\delta)$ and $\mathrm{FOPT}(\leq^{\delta},\cpind^{\delta})$, were also considered. Note that the results of this section also hold for these sublogics.} Let $\delta$ be a quantifier- and disjunction-free first-order formula, i.e., $\delta\Coloneqq\lambda \mid \neg\delta \mid (\delta\wedge\delta)$ for a first-order atomic formula $\lambda$ of the vocabulary $\tau$. Let $x$ be a first-order variable. The syntax for the logic $\FOPT$ over a vocabulary $\tau$ is defined as follows:
\[
\phi\Coloneqq\delta\mid (\delta|\delta)\leq(\delta|\delta) \mid\wcn\phi \mid (\phi\wedge\phi) \mid (\phi\vvee\phi) \mid \existso x\phi \mid \forallo x\phi.
\]

Let $\X\colon X\to\mathbb{R}_{\geq 0}$ be any probabilistic team, not necessarily a probability distribution. The semantics for the logic is defined as follows:
\begin{itemize}
\item[] $\A\models_{\X}\delta$ iff $\A\models_{s}\delta$ for all $s\in \supp(\X)$.
\item[] $\A\models_{\X}(\delta_0|\delta_1)\leq(\delta_2|\delta_3)$ iff $|\X_{\delta_0\wedge\delta_1}|\cdot|\X_{\delta_3}|\leq|\X_{\delta_2\wedge\delta_3}|\cdot|\X_{\delta_1}|$.
\item[] $\A\models_{\X}\wcn\phi$ iff $\A\not\models_{\X}\phi$ or $\X$ is empty.
\item[] $\A\models_{\X}\phi\wedge\psi$ iff $\A\models_{\X}\phi$ and $\A\models_{\X}\psi$.
\item[] $\A\models_{\X}\phi\vvee\psi$ iff $\A\models_{\X}\phi$ or $\A\models_{\X}\psi$.
\item[] $\A\models_{\X}\existso x\phi$ iff $\A\models_{\X(a/x)}\phi$ for some $a\in A$. 
\item[] $\A\models_{\X}\forallo x\phi$ iff $\A\models_{\X(a/x)}\phi$ for all $a\in A$.
\end{itemize}

Next, we present some useful properties of $\FOPT$. 

\begin{proposition}[\textbf{Locality}, {\cite[Prop.~3.2]{HHK22}}]\label{locality}
Let $\phi$ be any ${\FOPT}[\tau]$-formula. Then for any set of variables $V$, any $\tau$-structure $\A$, and any probabilistic team $\X\colon X\to\mathbb{R}_{\geq 0}$ such that $\Fr(\phi)\subseteq V\subseteq D$, 
\[
\A\models_{\X}\phi \iff \A\models_{\X\restriction{V}}\phi.
\]
\end{proposition}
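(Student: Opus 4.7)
The proof proceeds by structural induction on $\phi$, showing simultaneously that $\A\models_{\X}\phi$ iff $\A\models_{\X\restriction V}\phi$ whenever $\Fr(\phi)\subseteq V\subseteq D$. The plan rests on two simple observations: (i) for any assignment $s$ on $D$ and any first-order formula $\delta$ whose variables lie in $V$, $\A\models_s\delta$ iff $\A\models_{s\restriction V}\delta$; and (ii) $s'\in\supp(\X\restriction V)$ iff there exists $s\in\supp(\X)$ with $s\restriction V=s'$, which follows directly from $(\X\restriction V)(s')=\sum_{s\restriction V=s'}\X(s)$.

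For the base cases, the case $\phi=\delta$ is immediate from (i) and (ii). For $\phi=(\delta_0\mid\delta_1)\leq(\delta_2\mid\delta_3)$, since each $\delta_i$ uses only variables from $\Fr(\phi)\subseteq V$, a direct computation gives
\[
|\X_{\delta_i}|=\sum_{s\in X,\,s\models\delta_i}\X(s)=\sum_{s'\in X\restriction V,\,s'\models\delta_i}(\X\restriction V)(s')=|(\X\restriction V)_{\delta_i}|,
\]
so both sides of the defining inequality are preserved. The inductive steps for $\wedge$ and $\vvee$ are immediate from the induction hypothesis. For $\wcn\phi$, one additionally observes that $\supp(\X)=\emptyset$ iff $\supp(\X\restriction V)=\emptyset$, again by (ii).

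The quantifier cases are the interesting ones. Set $V'\dfn V\cup\{x\}$; since $\Fr(Q^1x\,\phi)=\Fr(\phi)\setminus\{x\}\subseteq V$, we have $\Fr(\phi)\subseteq V'$, so the induction hypothesis applies to $\phi$ and $V'$. The main technical lemma I would establish is the commutativity
\[
\X(a/x)\restriction V' \;=\; (\X\restriction V)(a/x) \quad\text{for every } a\in A,
\]
verified by unwinding the definition of $\X(a/x)$ and comparing weights pointwise. Given this, the equivalences
\[
\A\models_{\X}\existso x\phi \iff \exists a\,\A\models_{\X(a/x)}\phi \iff \exists a\,\A\models_{(\X\restriction V)(a/x)}\phi \iff \A\models_{\X\restriction V}\existso x\phi
\]
follow at once from the induction hypothesis applied to $\phi$ with $V'$, and the corresponding chain for $\forallo x$ is analogous.

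I expect the only delicate step to be the commutativity lemma, specifically when $x\in V$. In that case $\X(a/x)$ overwrites the previous $x$-value on each $s\in X$, and one has to check that summing over the fibres of the restriction map $t\mapsto t\restriction V'$ in $X(A/x)$ yields exactly the same weight as first projecting to $V$ and then inserting $a$ on $x$. This is a routine bookkeeping exercise, but it is the one spot where the definitions must be applied with care; the remainder of the argument is a standard structural induction of the sort familiar from locality proofs for other team-semantic logics.
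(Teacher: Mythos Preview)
Your proof is correct. The paper does not actually prove this proposition; it is imported wholesale from \cite{HHK22} (Prop.~3.2) and stated without argument. So there is no ``paper's own proof'' to compare against here.

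That said, your structural induction is exactly the standard route for locality results in team semantics, and the case analysis is sound. In particular, your commutativity lemma $\X(a/x)\restriction V' = (\X\restriction V)(a/x)$ holds in all cases (whether $x\in D$, $x\in V$, or neither): unwinding both sides shows that each equals $\sum_{t\in X:\, t\restriction(V\setminus\{x\})=t'\restriction(V\setminus\{x\})}\X(t)$ at any $t'$ with $t'(x)=a$, so the ``delicate'' case $x\in V$ goes through with the bookkeeping you describe. The weight-preservation argument for the comparison atom and the emptiness observation for $\wcn$ are also correct.
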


Over singleton traces the expressivity of $\FOPT$ coincides with that of $\FO$. For $\phi\in\FOPT$, let $\phi^*$ denote the $\FO$-formula obtained by replacing the symbols $\wcn, \vvee, \existso$, and $\forallo$ by $\neg,\lor,\exists$, and $\forall$, respectively, and expressions of the form $(\delta_0\mid\delta_1)\leq(\delta_2\mid\delta_3)$ by the formula $\neg \delta_0\lor \neg\delta_1\lor\delta_2\lor\neg\delta_3$.
\begin{proposition}[Singleton equivalence]\label{prop:singleton}
Let $\phi$ be a ${\FOPT}[\tau]$-formula, $\A$ a $\tau$ structure, and $\X$ a probabilistic team of $\A$ with support $\{s\}$. Then $\A\models_{\X}\phi$ iff $\A\models_{s}\phi^*$. 
\end{proposition}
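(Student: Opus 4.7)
The plan is to proceed by induction on the structure of $\phi$, writing $w \dfn \X(s) > 0$ throughout (well-defined since $s \in \supp(\X)$). A useful preliminary observation is that for any first-order formula $\chi$ built from atoms using $\neg$ and $\wedge$, we have $|\X_\chi| \in \{0, w\}$, and $|\X_\chi| = w$ iff $\A \models_s \chi$ by Tarski semantics. This will immediately handle the base cases and reduce the weighted inequality atom to a purely first-order condition on $s$.

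For the base cases: if $\phi = \delta$, then $\phi^* = \delta$, and $\A \models_\X \delta$ unfolds to $\A \models_{s'} \delta$ for all $s' \in \supp(\X) = \{s\}$, i.e., to $\A \models_s \delta$. If $\phi = (\delta_0|\delta_1) \leq (\delta_2|\delta_3)$, then $\phi^* = \neg\delta_0 \lor \neg\delta_1 \lor \delta_2 \lor \neg\delta_3$, and I will verify that the semantic inequality $|\X_{\delta_0\wedge\delta_1}|\cdot|\X_{\delta_3}| \leq |\X_{\delta_2\wedge\delta_3}|\cdot|\X_{\delta_1}|$ fails at $\X$ iff $\A \models_s \delta_0 \wedge \delta_1 \wedge \delta_3$ and $\A \not\models_s \delta_2$. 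Indeed, since all four quantities lie in $\{0, w\}$, a violation requires the left side to equal $w^2$ (forcing $\delta_0, \delta_1, \delta_3$ to hold at $s$) and the right side to vanish, which given $\A \models_s \delta_1, \delta_3$ is possible only if $\A \not\models_s \delta_2$. This is precisely the negation of $\phi^*$ at $s$.

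For the inductive cases: the Boolean connectives $\phi \wedge \psi$ and $\phi \vvee \psi$ follow immediately from the inductive hypothesis, since their semantics are conjunctive/disjunctive in the obvious way. The weak negation $\wcn\phi$ semantically reduces to $\A \not\models_\X \phi$ (the empty-team clause is void here because $\supp(\X) = \{s\} \neq \varnothing$), which by IH is $\A \not\models_s \phi^*$, i.e., $\A \models_s \neg\phi^*$. For the quantifier cases I first check that $\X(a/x)$ has singleton support $\{s(a/x)\}$: by the definition of $\X(a/x)$, the weight at any $s'(a/x)$ is $\sum_{t \in X,\, t(a/x) = s'(a/x)} \X(t)$, which can be nonzero only if the unique element of $\supp(\X)$, namely $s$, contributes, forcing $s'(a/x) = s(a/x)$. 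Applying the IH to the subformula then yields $\A \models_{\X(a/x)} \phi \iff \A \models_{s(a/x)} \phi^*$, and the clauses for $\existso x$ and $\forallo x$ match those of Tarski's $\exists x$ and $\forall x$ exactly.

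The only step that requires more than a direct unfolding of definitions is the atomic case $(\delta_0|\delta_1)\leq(\delta_2|\delta_3)$; I view this as the main (though modest) obstacle, because it is where the specific shape of the translation $\neg\delta_0 \lor \neg\delta_1 \lor \delta_2 \lor \neg\delta_3$ must be justified by the $\{0,w\}$-dichotomy. All other cases are mechanical and collapse onto Proposition~\ref{locality} and standard first-order reasoning.
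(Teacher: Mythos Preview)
Your proof is correct and follows essentially the same approach as the paper: both proceed by induction on the formula, note that the quantifier clauses preserve singleton support, and identify the conditional-probability atom as the only nontrivial case, reducing it via the $\{0,w\}$-dichotomy (the paper normalises to $\{0,1\}$ using a truth-value notation $\lVert\delta\rVert_{\A,s}$, but the arithmetic is identical). One small remark: your closing reference to Proposition~\ref{locality} is superfluous here---locality is not actually invoked anywhere in the argument.
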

\begin{proof}
The proof proceeds by induction on the structure of formulas. The cases for literals and Boolean connectives are trivial. The cases for quantifiers are immediate once one notices that interpreting the quantifiers $\existso$ and $\forallo$ maintain singleton supportness. We show the case for $\leq$. Let $\lVert \delta \rVert_{\A,s}=1$ if $\A\models_s\delta$, and $\lVert \delta \rVert_{\A,s}=0$ otherwise. Then
\begin{align*}
\A\models_{\X}(\delta_0\mid\delta_1)\leq(\delta_2\mid\delta_3)&\iff |\X_{\delta_0\wedge\delta_1}|\cdot|\X_{\delta_3}|\leq|\X_{\delta_2\wedge\delta_3}|\cdot|\X_{\delta_1}|\\
&\iff \lVert \delta_0\wedge\delta_1 \rVert_{\A,s} \cdot \lVert \delta_3 \rVert_{\A,s} 
\leq \lVert \delta_2\wedge\delta_3 \rVert_{\A,s} \cdot \lVert\delta_1 \rVert_{\A,s}\\
&\iff \A\models_s \neg \delta_0\lor \neg\delta_1\lor\delta_2\lor\neg\delta_3.
\end{align*}
The first equivalence follows from the semantics of $\leq$ and the second follows from the induction hypotheses after observing that the support of $\X$ is $\{s\}$. The last equivalence follows via a simple arithmetic observation.
\end{proof}



The following theorem follows directly from Propositions \ref{locality} and \ref{prop:singleton}.
\begin{theorem}\label{thm:FOPT}
For sentences we have that $\FOPT\equiv \FO$.	
\end{theorem}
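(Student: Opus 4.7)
The plan is to derive the theorem as an almost immediate consequence of Propositions~\ref{locality} and~\ref{prop:singleton}, exploiting the fact that sentence evaluation in $\FOPT$ starts from the probabilistic team $\X_\emptyset$, whose support $\{\emptyset\}$ is a singleton, so that the hypothesis of Proposition~\ref{prop:singleton} is trivially met at the top level.

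For the inclusion $\FOPT \leq \FO$ on sentences, I would take an arbitrary $\FOPT$-sentence $\phi$ and form the first-order sentence $\phi^*$ as defined in the excerpt (replacing $\wcn,\vvee,\existso,\forallo$ by $\neg,\lor,\exists,\forall$, and rewriting the quantitative atoms $(\delta_0\mid\delta_1)\leq(\delta_2\mid\delta_3)$ as $\neg\delta_0\lor\neg\delta_1\lor\delta_2\lor\neg\delta_3$). Then by definition $\A\models\phi$ iff $\A\models_{\X_\emptyset}\phi$, and since $\supp(\X_\emptyset)=\{\emptyset\}$, Proposition~\ref{prop:singleton} gives $\A\models_{\X_\emptyset}\phi$ iff $\A\models_\emptyset\phi^*$, which is exactly $\A\models\phi^*$ in the Tarskian sense. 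Locality (Proposition~\ref{locality}) is not needed here, but it confirms that no additional auxiliary variables in the team's domain can influence truth of a sentence.

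For the converse $\FO\leq\FOPT$, I would define an inverse translation $\psi\mapsto\psi^\circ$ that sends an $\FO$-sentence to an $\FOPT$-sentence: atomic formulas and their negations (and conjunctions thereof) stay as the $\delta$-fragment of $\FOPT$, while $\neg,\lor,\exists,\forall$ applied to composite subformulas become $\wcn,\vvee,\existso,\forallo$ respectively. By construction $(\psi^\circ)^*=\psi$ modulo the trivial rewriting of propositional connectives, so applying Proposition~\ref{prop:singleton} to $\psi^\circ$ with the team $\X_\emptyset$ gives $\A\models\psi^\circ$ iff $\A\models\psi$.

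The only conceivable obstacle is verifying that the inverse translation is syntactically well-defined: every first-order atomic formula, negated atomic formula, and finite conjunction of such belongs to the $\delta$-fragment of $\FOPT$ by inspection of the grammar, so the translation can be performed by a straightforward induction that pushes negations to the atomic level when needed. Since Proposition~\ref{prop:singleton} has already done the real inductive work of matching the singleton-team semantics of $\FOPT$ with the Tarskian semantics of $\FO$, the theorem itself requires no further induction.
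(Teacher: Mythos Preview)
Your proposal is correct and matches the paper's approach: the paper's proof is literally the single sentence ``follows directly from Propositions~\ref{locality} and~\ref{prop:singleton},'' and your write-up is a faithful unpacking of that sentence. Your remark that locality is not strictly needed (given that $\A\models\phi$ is defined via the singleton-support team $\X_\emptyset$) is accurate; locality only becomes relevant if one wants the stronger statement that $\A\models_\X\phi\iff\A\models\phi^*$ for \emph{every} nonempty team $\X$, which is implicit in the paper's citation of Proposition~\ref{locality}.
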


For a logic $L$, we write $\MC(L)$ for the following variant of the model checking problem: given a \textit{sentence} $\phi\in L$ and a structure $\A$, decide whether $\A\models\phi$. 
The above result immediately yields the following corollary.

\begin{corollary}\label{cor:sent-mcfopt-pspace}
$\MC(\FOPT)$ is $\PSPACE$-complete.	
\end{corollary}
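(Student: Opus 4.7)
The plan is to derive the corollary directly from Theorem~\ref{thm:FOPT} together with Vardi's classical result that model checking of $\FO$-sentences on finite structures is $\PSPACE$-complete. To transfer both the upper and the lower bound, I only need to verify that the implicit $\FOPT\leftrightarrow\FO$ translations are computable in polynomial time; once this is in place, $\PSPACE$-completeness carries over in both directions.

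For the upper bound, given an $\FOPT$-sentence $\phi$ and a finite $\tau$-structure $\A$, I would first compute the $\FO$-sentence $\phi^*$ defined just before Proposition~\ref{prop:singleton}: replace $\wcn,\vvee,\existso,\forallo$ by $\neg,\lor,\exists,\forall$, and every conditional-inequality atom $(\delta_0\mid\delta_1)\leq(\delta_2\mid\delta_3)$ by the propositional combination $\neg\delta_0\lor\neg\delta_1\lor\delta_2\lor\neg\delta_3$. This is a purely syntactic, linear-size rewriting, and by Theorem~\ref{thm:FOPT} we have $\A\models\phi$ iff $\A\models\phi^*$. One then invokes any standard $\PSPACE$ procedure for $\FO$-sentence model checking on $(\A,\phi^*)$.

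For the lower bound, I would polynomial-time reduce $\FO$-sentence model checking to $\MC(\FOPT)$. Given an $\FO$-sentence $\psi$, I first rewrite it in polynomial time into an equivalent prenex negation normal form $Q_1 x_1 \cdots Q_n x_n\,\theta$, where $\theta$ is a Boolean combination of literals using $\wedge$ and $\lor$ only. I then construct an $\FOPT$-sentence $\phi$ by replacing each $\exists,\forall,\lor$ with $\existso,\forallo,\vvee$ and leaving literals and $\wedge$-nodes untouched. Since literals are quantifier- and disjunction-free, they are admissible $\delta$-formulas, so $\phi$ is syntactically well-formed; moreover, applying the $^*$-translation returns $\psi$ verbatim, giving $\A\models\phi$ iff $\A\models\psi$ via Theorem~\ref{thm:FOPT}. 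The only subtlety worth pointing out is that $\delta$-formulas may not contain $\lor$, which is precisely why I first move to prenex NNF so that all disjunctions sit strictly above the literal level; apart from this, no real obstacle arises and the corollary is genuinely immediate, as the paper indicates.
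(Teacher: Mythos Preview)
Your proposal is correct and follows essentially the same approach as the paper: the paper's proof is just the one-line observation that the linear translation of Theorem~\ref{thm:FOPT} together with the $\PSPACE$-completeness of $\FO$ model checking yields the result. You spell out both directions more carefully (in particular the reverse translation for the lower bound, which the paper leaves implicit), and your handling of the $\delta$-syntax restriction via prenex NNF is a sensible way to make the hardness reduction precise; the only tiny imprecision is that $\phi^*$ returns the NNF of $\psi$ rather than $\psi$ ``verbatim'', but since they are equivalent this does not affect the argument.
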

\begin{proof} This follows directly from the linear translation of $\FOPT$-sentences into   equivalent   $\FO$ -sentences  of Theorem \ref{thm:FOPT} and the well-known fact that the model-checking problem of $\FO$ is $\PSPACE$-complete.
\end{proof}

\begin{theorem}\label{thm:openforms-fopt-fpind-neg}
$\FOPT\le \FO(\cpind,\sim)$ and $\FOPT$ is non-comparable to $\FO(\cpind)$ for open formulas.
\end{theorem}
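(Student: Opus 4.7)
The plan is to establish the three component claims in turn. For the inclusion $\FOPT \le \FO(\cpind,\sim)$, I would proceed by structural induction on the $\FOPT$ formula, exploiting scale-invariance of $\FOPT$ to restrict attention to distributions. First-order literals $\delta$ translate to themselves, since $\models_{\X}\delta$ is support-satisfaction in both logics. The weak negation $\wcn$ maps to the Boolean negation $\bn$, its emptiness clause being vacuous on distributions. Conjunction is preserved and the Boolean-style disjunction $\vvee$ translates to $\bn(\bn \phi' \wedge \bn \psi')$. The classical existential $\existso x\phi$ becomes $\exists x(\dep(\emptyset, x) \wedge \phi')$: the dependence atom $\dep(\emptyset,x)$, expressible in $\FO(\cpind)$ by Proposition~\ref{prop_depcpind}, forces $x$ to take a single value across the support, precisely matching the $\X(a/x)$ semantics of $\existso$. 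The classical universal $\forallo x\phi$ is then handled dually as $\bn \exists x(\dep(\emptyset, x) \wedge \bn \phi')$.

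The delicate step is the comparison atom $(\delta_0 | \delta_1) \le (\delta_2 | \delta_3)$, which amounts to the polynomial inequality $|\X_{\delta_0 \wedge \delta_1}| \cdot |\X_{\delta_3}| \le |\X_{\delta_2 \wedge \delta_3}| \cdot |\X_{\delta_1}|$. I would encode each weight $|\X_\delta|$ by introducing a fresh existentially quantified variable that, constrained via a dependence atom, serves as a Boolean indicator mirroring the truth of $\delta$ assignment-by-assignment, and then use marginal identity atoms $\approx$ (available in $\FO(\cpind)$ by Proposition~\ref{prop_approxcpind}) to compare the appropriate marginalizations. Alternatively, one may simply appeal to Theorem~\ref{thm_indSO} and write the polynomial inequality directly in $\SO_{\mathbb{R}}(+, \times)$.

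For $\FOPT \not\le \FO(\cpind)$, the plan is to exploit $\wcn$ to define a non-closed set. The $\FOPT$ formula $\wcn\bigl((x = y \mid y = y) \le (\neg y = y \mid y = y)\bigr)$ unpacks to the strict inequality $|\X_{x = y}| > 0$, which defines an open subset of distributions. By Proposition~\ref{prop:FO-ind-strict-FO-ind-sim}, every open formula of $\FO(\cpind)$ defines a closed subset of $\mathbb{R}^n$ over any fixed universe, so no $\FO(\cpind)$ formula captures this $\FOPT$ formula.

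The main obstacle is the reverse direction $\FO(\cpind) \not\le \FOPT$: I need to exhibit an $\FO(\cpind)$ formula that is not $\FOPT$-expressible. The plan is to take a probabilistic (conditional) independence atom as the candidate and show that no $\FOPT$ formula captures exactly the same class of distributions. The argument relies on identifying a semantic invariance of $\FOPT$-satisfaction, rooted in its close-to-first-order character (as witnessed on sentences by Theorem~\ref{thm:FOPT}), that probabilistic independence violates; concretely, one would construct two distributions over a small universe that agree on every $\FOPT$ formula in the relevant free variables but differ on the target independence atom, thereby separating the two logics.
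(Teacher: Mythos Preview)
Your treatment of the inclusion $\FOPT \le \FO(\cpind,\sim)$ and of $\FOPT \not\le \FO(\cpind)$ is sound in outline, though it differs from the paper's route. The paper does not induct directly on $\FOPT$-syntax; instead it translates $\FOPT$ into $\ESOr$ (using characteristic functions $\chi_\delta$ to encode the comparison atom as a $\SUM/\times$ inequality) and then invokes $\ESOr \le \SO_{\mathbb{R}}(+,\times) \equiv \FO(\cpind,\sim)$. Your fallback suggestion to ``appeal to Theorem~\ref{thm_indSO}'' is essentially this. Your direct clause-by-clause translation of $\vvee$, $\wcn$, $\existso$, $\forallo$ via $\bn$ and $\dep(\emptyset,x)$ is correct and arguably more elementary, but your handling of the comparison atom is not worked out; the paper's $\ESOr$ detour buys a clean uniform treatment of that atom at the cost of some indirection. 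For the non-inclusion $\FOPT \not\le \FO(\cpind)$, the paper uses the formula $\sim\, x\pind_y z$ rather than your strict-inequality formula, but the underlying argument (open set versus closed-set definability) is the same.

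The genuine gap is in $\FO(\cpind) \not\le \FOPT$. Your plan---search for a ``semantic invariance'' of $\FOPT$ and exhibit two distributions agreeing on all $\FOPT$ formulas but disagreeing on an independence atom---is not carried out, and it is not clear what invariance you have in mind or why such distributions exist. The paper's argument is far more direct and uses results already in hand: by Theorem~\ref{thm:FOPT}, $\FOPT$ collapses to $\FO$ on sentences, while by Proposition~\ref{prop_depcpind} we have $\FO(\dep(\dots)) \le \FO(\cpind)$, and dependence logic is known to capture existential second-order logic on sentences. Hence there is an $\FO(\cpind)$-sentence expressing graph $2$-colourability, a property not expressible in $\FO$ and therefore not in $\FOPT$. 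You should replace your unspecified invariance argument with this sentence-level separation.
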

\begin{proof} 

We begin the proof of the first claim by showing that $\FOPT\leq\ESOr$. Note that we may use numerical terms of the form $i\leq j$ in $\ESOr$, because they can be expressed by the formula $\exists f\exists g(g\times g=f\wedge i+f=j)$.

Let formula $\phi(\tuple v)\in\FOPT$ be such that its free-variables are from $\tuple v=(v_1,\dots,v_k)$. Then there is a formula $\psi_{\phi}(f)\in\ESOr$ with exactly one free function variable such that for all structures $\A$ and all probabilistic teams $\X\colon X\to\mathbb{R}_{\geq 0}$, $\A\models_\X\phi(\tuple v)$ if and only if $(\A,f_\X)\models\psi_\phi(f)$, where $f_\X\colon A^k\to\mathbb{R}_{\geq 0}$ is a function such that $f_\X(s(\tuple v))=\X(s)$ for all $s\in X$.

We may assume that 
the formula is in the form $\phi=Q_1^1x_1\dots Q_n^1x_n\theta(\tuple v,\tuple x)$, where $Q_i\in\{\exists,\forall\}$ and $\theta$ is quantifier-free. We begin by defining inductively a formula $\theta^*(f,\tuple x)$ for the subformula $\theta(\tuple v,\tuple x)$. Note that in the following $\chi_\delta$ refers to the characteristic function of $\delta$, i.e., $\chi_\delta\colon A^{k+n}\to\{0,1\}$ such that $\chi_\delta(\tuple a)=1$ if and only if $\A\models\delta(\tuple a)$. For simplicity, we only write $\theta^*(f,\tuple x)$ despite the fact that $\theta^*$ may contain free function variables $\chi_\delta$ in addition to the variables $f,\tuple x$.

\begin{enumerate}[1.]
\item If $\theta(\tuple v,\tuple x)=\delta(\tuple v,\tuple x)$, then $\theta^*(f,\tuple x)\coloneqq\forall\tuple v(f(\tuple v)=0\lor\chi_\delta(\tuple v,\tuple x)=1)$.
\item If $\theta(\tuple v,\tuple x)=(\delta_0\mid\delta_1)\leq(\delta_2\mid\delta_3)(\tuple v,\tuple x)$, then
\begin{align*}
\theta^*(f,\tuple x)\coloneqq&\SUM_{\tuple v}(f(\tuple v)\times\chi_{\delta_0\wedge\delta_1}(\tuple v,\tuple x))\times\SUM_{\tuple v}(f(\tuple v)\times\chi_{\delta_3}(\tuple v,\tuple x))\\
&\leq\SUM_{\tuple v}(f(\tuple v)\times\chi_{\delta_2\wedge\delta_3}(\tuple v,\tuple x))\times\SUM_{\tuple v}(f(\tuple v)\times\chi_{\delta_1}(\tuple v,\tuple x)).
\end{align*}
\item If $\theta(\tuple v,\tuple x)=\wcn\theta_0(\tuple v,\tuple x)$, then $\theta^*(f,\tuple x)\coloneqq\theta_0^{*\neg}(f,\tuple x)\lor\forall\tuple vf(\tuple v)=0$, where $\theta_0^{*\neg}$ is obtained from $\neg\theta_0^*$ by pushing the negation in front of atomic formulas.
\item If $\theta(\tuple v,\tuple x)=(\theta_0\circ\theta_1)(\tuple v,\tuple x)$, where $\circ\in\{\wedge,\vvee\}$, then $\theta^*(f,\tuple x)\coloneqq(\theta_0^*\star\theta_1^*)(\tuple x)$, where $\star\in\{\wedge,\lor\}$, respectively.
\end{enumerate}

For each $\delta$, we define a formula $\xi_{\delta}\in\ESOr$, which says that $\chi_\delta$ is the characteristic function of $\delta$. Let $\tuple y=(y_1,\dots,y_{k+n})$ and define $\xi_\delta$ as follows:
\begin{enumerate}[1.]
\item If $\delta(\tuple y)=R(y_{i_1},\dots,y_{i_l})$, where $1\leq i_1,\dots, i_l\leq k+n$, then $\xi_\delta\coloneqq\forall\tuple y((\chi_\delta(\tuple y)=1\leftrightarrow R(y_{i_1},\dots,y_{i_l}))\wedge(\chi_\delta(\tuple y)=0\leftrightarrow \neg R(y_{i_1},\dots,y_{i_l}))$.
\item If $\delta(\tuple y)=\neg\delta_0(\tuple y)$, then $\xi_\delta\coloneqq\forall\tuple y(\chi_{\delta_0}(\tuple y)+\chi_{\neg\delta_0}(\tuple y)=1)$.
\item If $\delta(\tuple y)=(\delta_0\wedge\delta_1)(\tuple y)$, then $\xi_\delta\coloneqq\forall\tuple y(\chi_{\delta_0\wedge\delta_0}(\tuple y)=\chi_{\delta_0}(\tuple y)\times\chi_{\delta_1}(\tuple y))$
\end{enumerate}

Let $\delta_1,\dots,\delta_m$ be a list such that each $\delta_i$, $1\leq i\leq m$, is a subformula of some formula $\delta$ that appears in a function symbol $\chi_{\delta}$ of the formula $\theta^*(f,\tuple x)$. Now, we can define
\[
\psi_\phi(f)\coloneqq\bigexists_{1\leq i\leq m}\chi_{\delta_i}\left(Q_1 x_1\dots Q_k x_k\theta^*(f,\tuple x)\wedge\bigwedge_{1\leq i\leq m}\xi_{\delta_i}(\chi_{\delta_1},\dots,\chi_{\delta_m})\right).
\]
This shows that $\FOPT\leq\ESOr$.  The first claim now follows, since $\ESOr\leq\SO_{\mathbb{R}}(+,\times)\equiv \FO(\cpind,\sim)$.

We will prove the second claim now. In the proof of Proposition \ref{prop:FO-ind-strict-FO-ind-sim}, it was noted that the formula $\sim x \pind_y z$ cannot be expressed in $\FO(\cpind)$. This is not the case for $\FOPT$ as it contains the Boolean negation, and thus the formula $\sim x \pind_y z$ can be expressed in  $\FOPT$ by the results of Section 4.2 in \cite{HHK22}.

On the other hand, we have $\FO(\dep(\dots))\leq\FO(\cpind)$ (Prop. \ref{prop_depcpind}). Since on the level of sentences, $\FO(\dep(\dots))$ is equivalent to existential second-order logic \cite{vaananen07}, there is a sentence $\phi\in\FO(\cpind)$ such that for all $\X\colon X\to[0,1]$, $\A\models_\X\phi$ iff a undirected graph $\A=(V,E)$ is 2-colourable. Since over singleton traces the expressivity of $\FOPT$ coincides with $\FO$, the sentence $\phi$ cannot be expressed in $\FOPT$, as 2-colourability cannot be expressed in $\FO$.

\end{proof}

\section{Complexity of satisfiability, validity and model checking
 }

We now define satisfiability and validity in the context of probabilistic team semantics. Let $\phi\in\FO(\cpind,\sim,\approx)$. The formula $\phi$ \textit{is satisfiable in a} \textit{structure} $\A$ if $\A\models_{\X}\phi$ for some probabilistic team $\X$, and $\phi$ \textit{is valid in a structure} $\A$ if $\A\models_{\X}\phi$ for all probabilistic teams $\X$ over $\Fr(\phi)$. The formula $\phi$ is \textit{satisfiable} if there is a structure $\A$ such that $\phi$ is satisfiable in $\A$, and $\phi$ is \textit{valid} if $\phi$ is valid in $\A$ for all structures $\A$.

For a logic $L$, the satisfiability problem  $\SAT(L)$ and the validity problem $\VAL(L)$ are defined as follows: given a formula $\phi\in L$, decide whether $\phi$ is satisfiable (or valid, respectively). For the model checking problem $\MC(L)$, we consider the following variant: given a \textit{sentence} $\phi\in L$ and a structure $\A$, decide whether $\A\models\phi$.


\begin{theorem}\label{thm:MC-inc}
	$\MC(\probinclogic)$ is in $\EXPTIME$ and $\PSPACE$-hard.
\end{theorem}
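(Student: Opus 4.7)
The $\EXPTIME$ upper bound will be obtained by reducing sentence model checking to the feasibility of a linear program of exponential size. Given a sentence $\phi \in \probinclogic$ and a structure $\A$ with $|A|=n$, I would introduce, for each subformula occurrence $\psi$ of $\phi$, a block of non-negative real variables $\X_\psi(s)$ indexed by assignments $s$ on the quantified variables in scope; the total number of variables is bounded by $|\phi|\cdot n^{|\phi|}$, hence at most exponential in the input. Each semantic clause becomes a \emph{linear} constraint: an $\FO$-literal forces $\X_\psi(s)=0$ whenever $s$ falsifies the literal; a marginal identity atom $\tuple x \approx \tuple y$ equates the corresponding marginal sums; a universal $\forall x\psi$ enforces $\X_\psi(s(a/x)) = \X_{\forall x\psi}(s)/n$; an existential $\exists x\psi$ enforces $\sum_a \X_\psi(s(a/x)) = \X_{\exists x\psi}(s)$, the choice of the distribution $F$ being absorbed by the non-negativity of each $\X_\psi(s(a/x))$; and a disjunction $\psi_1 \lor \psi_2$ reduces to $\X_{\psi_1}(s) + \X_{\psi_2}(s) = \X_{\psi_1 \lor \psi_2}(s)$ once one works with weighted rather than normalised teams. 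With the initial constraint $\X_\phi(s_\emptyset) = 1$, feasibility of the resulting linear program with $0/1$ coefficients is decidable in time polynomial in its description, yielding an $\EXPTIME$ procedure overall.

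For the $\PSPACE$-hardness, I would reduce from $\MC(\FO)$, which is known to be $\PSPACE$-complete. Every $\FO$-sentence is syntactically a $\probinclogic$-sentence, so it suffices to verify that $\A \models \phi$ in Tarski semantics iff $\A \models_{\X_\emptyset}\phi$ in probabilistic team semantics for each $\FO$-sentence $\phi$. This standard flatness property is established by a straightforward induction on $\phi$, using in particular that the trivial distribution $\X_\emptyset$ admits only the trivial splits $\X_\emptyset = \X_\emptyset \sqcup_k \X_\emptyset$, and that universal/existential quantification from singleton-support teams yields teams whose support is a set of pointwise extensions, so that the probabilistic clauses collapse to the usual Tarski clauses.

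The principal obstacle will be the linear encoding of the scaled-union disjunction, since the defining condition $\X = \Y \sqcup_k \Z$ is bilinear in $k$ and the team weights. The resolution is to reparametrise by $\Y' \dfn k\Y$ and $\Z' \dfn (1-k)\Z$: these are non-negative weighted teams satisfying $\Y' + \Z' = \X$ and $|\Y'|+|\Z'|=1$, and together with $k$ are in bijection with the triples $(k,\Y,\Z)$ of distributions. One then checks, by induction on $\FO(\approx)$-formulas, that the semantics is invariant under positive scaling, so that the $\Y',\Z'$ formulation is equivalent and all constraints remain linear. Scale-invariance is immediate for $\FO$-literals (which depend only on the support) and marginal identity atoms, and propagates cleanly through conjunction, disjunction, and the quantifier clauses by a routine verification.
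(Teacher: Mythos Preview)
Your proposal is correct and follows essentially the same route as the paper: the $\PSPACE$-hardness is obtained in both cases by observing that $\probinclogic$ conservatively extends $\FO$ over singleton teams, and the $\EXPTIME$ upper bound by reducing model checking to the feasibility of an exponential-size system of linear inequalities. The only difference is presentational: the paper obtains the linear system by invoking two results of Hannula and Virtema (a translation of $\probinclogic$ into an almost-conjunctive fragment of $\ESO_{\mathbb R}(+,\leq,\SUM)$, and a subsequent reduction of that fragment to families of linear systems), whereas you construct the LP directly from the compositional semantics; your treatment of disjunction via the reparametrisation $\Y' = k\Y$, $\Z' = (1-k)\Z$ and scale-invariance is exactly what makes this direct encoding linear, and it is the same device that underlies the translation in the paper's Lemma~\ref{probind_lemma}.
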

\begin{proof}
First note that $\probinclogic$ is clearly a conservative extension of $\FO$, as it is easy to check that probabilistic semantics and Tarski semantics agree on first-order formulas over singleton traces. The hardness now follows from this and the fact that model checking problem for $\FO$ is $\PSPACE$-complete.

For upper bound, notice first that any $\probinclogic$-formula $\phi$ can be reduced to an almost conjunctive formula $\psi^*$ of $\ESO_R(+,\leq,\SUM)$~\cite[Lem,~17]{HannulaV22}.
Then the desired bounds follow due to the reduction from Proposition~3 in~\cite{HannulaV22}.
The mentioned reduction yields families of systems of linear inequalities $\mathcal S$ from a structure $\A$ and assignment $s$ such that a system $S\in\mathcal S$ has a solution if and only if $\A\models_s\phi$.
For a $\probinclogic$-formula $\phi$, this transition requires exponential time and this yields membership in $\EXPTIME$.
	
	
\end{proof}

We now prove the following lemma, which will be used to prove the upper-bounds in the next three theorems.
\begin{lemma}\label{probind_lemma}
Let $\A$ be a finite structure and $\phi\in\FO(\cpind,\sim)$. Then there is a first-order sentence $\psi_{\phi,\A}$ over vocabulary $\{+,\times,\leq,0,1\}$ such that $\phi$ is satisfiable in $\A$ if and only if $(\mathbb{R},+,\times,\leq,0,1)\models \psi_{\phi,\A}$.
\end{lemma}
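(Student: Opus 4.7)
The plan is to combine the translation from $\FO(\cpind,\sim)$ into $\SO_{d[0,1]}(\SUM,\times)$ given by Theorem~\ref{thm_ind<SO} with a straightforward ``unrolling'' that exploits the fact that $\A$ is a fixed finite structure. The resulting sentence over the real closed field vocabulary will encode the existence of a probabilistic team witnessing satisfaction of $\phi$ in $\A$.

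First, let $\tuple v=(v_1,\dots,v_k)$ enumerate the free variables of $\phi$. By Theorem~\ref{thm_ind<SO}, there is a formula $\psi_\phi(f)\in \SO_{d[0,1]}(\SUM,\times)$ with a single free function variable $f$ of arity $k$ such that $\A\models_\X \phi$ iff $(\A,f_\X)\models \psi_\phi(f)$, where $f_\X(s(\tuple v))=\X(s)$. Hence $\phi$ is satisfiable in $\A$ iff
\[
\A\models \exists f\bigl(\SUM_{\tuple v}f(\tuple v)=1\wedge \psi_\phi(f)\bigr),
\]
the existential ranging over distributions on $A^k$. Call this $\SO_{d[0,1]}(\SUM,\times)$-sentence $\Psi_\phi$.

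Next, since $A=\{a_1,\dots,a_n\}$ is finite and fixed, I flatten $\Psi_\phi$ to a first-order sentence $\psi_{\phi,\A}$ over $(\mathbb{R},+,\times,\leq,0,1)$ as follows. Every function variable $g$ of arity $m$ appearing in $\Psi_\phi$ is replaced by a block of $n^m$ ordinary first-order real variables $g_{\tuple a}$, one for each $\tuple a\in A^m$; existential (resp.\ universal) quantification over such $g$ becomes a quantifier block over these reals, together with the constraints $g_{\tuple a}\geq 0$ for every $\tuple a$ and $\sum_{\tuple a}g_{\tuple a}=1$ (enforcing that $g$ is a distribution). First-order quantifiers $\exists x\,\theta$ and $\forall x\,\theta$ are replaced by $\bigvee_{i\le n}\theta[a_i/x]$ and $\bigwedge_{i\le n}\theta[a_i/x]$ respectively; relational atoms $R(\tuple a)$ are evaluated in $\A$ and replaced by the constants $\top$ or $\bot$. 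Numerical terms become polynomial expressions: a term $g(\tuple x)$ with $\tuple x$ bound to constants $\tuple a$ becomes the variable $g_{\tuple a}$, the product $i\times j$ is unchanged, and $\SUM_{\tuple y}i(\tuple x,\tuple y)$ becomes the finite sum $\sum_{\tuple b\in A^{|\tuple y|}}i(\tuple x,\tuple b)$, i.e., an iterated $+$. Numerical identities $i=j$ and negated identities survive unchanged, and the Boolean connectives are preserved.

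By construction, $\psi_{\phi,\A}$ is a sentence in the vocabulary $\{+,\times,\leq,0,1\}$ (the $\leq$ is needed for the nonnegativity constraints), and its satisfaction in $(\mathbb{R},+,\times,\leq,0,1)$ is equivalent, step by step, to satisfaction of $\Psi_\phi$ in $\A$ under the metafinite semantics, hence to satisfiability of $\phi$ in $\A$. The only subtle point is bookkeeping: one must carry out the unrolling uniformly, making sure that bound function variables in $\psi_\phi$ are re-quantified as distributions (not arbitrary $\mathbb{R}$-valued tuples), and that the translation of $\SUM$-terms respects the arity of the surrounding term. This is routine but the main source of notational overhead; no genuine mathematical difficulty arises, and the size of $\psi_{\phi,\A}$ is at most exponential in $|\A|+|\phi|$, which is adequate for the intended complexity-theoretic applications.
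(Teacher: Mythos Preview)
Your argument is correct, but it takes a different route from the paper. The paper gives a \emph{direct} inductive translation: for each subformula $\theta(\tuple v)$ of $\phi$ it introduces a tuple $\tuple s=(s_{\tuple v=\tuple a})_{\tuple a\in A^{|\tuple v|}}$ of real variables encoding the weights of the current team, and builds $\theta^*(\tuple s)$ by structural recursion (literals become conjunctions $\bigwedge_{s\not\models\theta}s=0$; independence becomes a conjunction of polynomial identities over marginal sums; $\lor$, $\exists$, $\forall$ each introduce a fresh block of existentially quantified weight variables with the appropriate splitting/marginal/uniformity constraints; $\sim$ becomes classical $\neg$). The outer wrapper just existentially quantifies the weights of the initial team and asserts nonnegativity together with nonemptiness. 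By contrast, you factor through Theorem~\ref{thm_ind<SO}: first pass to $\SO_{d[0,1]}(\SUM,\times)$, then exploit finiteness of $\A$ to unroll all first-sort quantifiers into finite Boolean combinations and all function quantifiers into blocks of real quantifiers over the values $g_{\tuple a}$, with distribution constraints conjoined (for $\exists g$) or appearing as an antecedent (for $\forall g$). Both arguments yield a sentence of size $n^{O(|\phi|)}$ and both preserve the feature needed for Theorem~\ref{thm:MCFO-ind}, namely that in the absence of $\sim$ the output is purely existential (in your case because Theorem~\ref{thm_ind<SO} then lands in $\ESO_{d[0,1]}$ and unrolling universal first-sort quantifiers produces only conjunctions). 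The paper's direct construction is more self-contained and makes the quantifier shape of $\psi_{\phi,\A}$ explicit at a glance; your approach is more modular and avoids re-deriving the compositional translation, at the cost of carrying the intermediate $\SO$ formalism through an extra stage.
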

\begin{proof} Let $\phi$ be such that its free variables are from $\tuple v=(v_1,\dots ,v_k)$. By locality (Prop. \ref{locality2}), we may restrict to the teams over the variables $\{v_1,\dots ,v_k\}$. Define a fresh first-order variable $s_{\tuple v=\tuple a}$ for each $\tuple a\in A^k$. The idea is that the variable $s_{\tuple v=\tuple a}$ represents the weight of the assignment $s$ for which $s(\tuple v)=\tuple a$. For notational simplicity, assume that $A=\{1,\dots,n\}$. Thus, we can write $\tuple s=(s_{\tuple v=\tuple 1},\dots  ,s_{\tuple v=\tuple n})$ for the tuple that contains the variables for all the possible assignments over $\tuple v$. Define then
\[
\psi_{\phi,\A}\coloneqq\exists s_{\tuple v=\tuple 1}\dots  s_{\tuple v=\tuple n}\left(\bigwedge_{\tuple a} 0\leq s_{\tuple v=\tuple a}\wedge \neg 0=\sum_{\tuple a} s_{\tuple v=\tuple a}\wedge\phi^*(\tuple s)\right) ,
\]
where $\phi^*(\tuple s)$ is constructed as follows:
\begin{itemize}
\item[$\bullet$] If $\phi(\tuple v)=R(v_{i_1},\dots,v_{i_l})$ or $\phi(\tuple v)=\neg R(v_{i_1},\dots,v_{i_l})$ where $1\leq i_1,\dots,i_l\leq k$, then $\phi^*(\tuple s)\coloneqq\bigwedge_{s\not\models\phi}s=0$.
\item[$\bullet$] If $\phi(\tuple v)=\pci{\tuple v_0}{\tuple v_1}{\tuple v_2}$ for some $\tuple v_3$ such that $\tuple v={\tuple v_0}{\tuple v_1}{\tuple v_2}\tuple v_3$, then 
\begin{align*}
\phi^*(\tuple s)\coloneqq\bigwedge_{\tuple a_0\tuple a_1\tuple a_2}\Big(
&\sum_{\tuple b_2\tuple b_3} s_{\tuple v=\tuple a_0\tuple a_1\tuple b_2\tuple b_3}\times \sum_{\tuple b_1\tuple b_3} s_{\tuple v=\tuple a_0\tuple b_1\tuple a_2\tuple b_3}=\\
& \sum_{\tuple b_3} s_{\tuple v=\tuple a_0\tuple a_1\tuple a_2\tuple b_3} \times \sum_{\tuple b_1\tuple b_2\tuple b_3} s_{\tuple v=\tuple a_0\tuple b_1\tuple b_2\tuple b_3}
\Big),
\end{align*}
\item[$\bullet$] If $\phi(\tuple v)=\bn\theta_0(\tuple v)$ or $\phi(\tuple v)=\theta_0(\tuple v)\wedge\theta_1(\tuple v)$, then $\phi^*(\tuple s)\coloneqq\neg\theta_0^*(\tuple s)$ or $\phi^*(\tuple s)\coloneqq\theta_0^*(\tuple s)\wedge\theta_1^*(\tuple s)$, respectively.
\item[$\bullet$] If $\phi(\tuple v)=\theta_0(\tuple v)\lor\theta_1(\tuple v)$, then 
\begin{align*}
\phi^*(\tuple s)\coloneqq\exists t_{\tuple v=\tuple 1}r_{\tuple v=\tuple 1} \dots t_{\tuple v=\tuple n}r_{\tuple v=\tuple n} \biggl(&\bigwedge_{\tuple a}( 0\leq t_{\tuple v=\tuple a}\wedge0\leq r_{\tuple v=\tuple a}\wedge\\
& s_{\tuple v=\tuple a}=t_{\tuple v=\tuple a}+r_{\tuple v=\tuple a})\wedge\theta_0^*(\tuple t)\wedge\theta_1^*(\tuple r)\biggr).
\end{align*}
\item[$\bullet$] If $\phi(\tuple v)=\exists x\theta_0(\tuple v,x)$, then 
\begin{align*}
\phi^*(\tuple s)\coloneqq\exists t_{\tuple vx=\tuple 11} \dots t_{\tuple vx=\tuple nn}\biggl(\bigwedge_{\tuple ab}( 0\leq t_{\tuple vx=\tuple ab}\wedge s_{\tuple v=\tuple a}=\sum_{c=1}^nt_{\tuple vx=\tuple ac})\wedge\theta_0^*(\tuple t)\biggr).
\end{align*}
\item[$\bullet$] If $\phi(\tuple v)=\forall x\theta_0(\tuple v,x)$, then 
\begin{align*}
\phi^*(\tuple s)\coloneqq\exists t_{\tuple vx=\tuple 11} \dots t_{\tuple vx=\tuple nn}\biggl(&\bigwedge_{\tuple ab}( 0\leq t_{\tuple vx=\tuple ab}\wedge s_{\tuple v=\tuple a}=\sum_{c=1}^nt_{\tuple vx=\tuple ac}\wedge\\
&\bigwedge_{cd}t_{\tuple vx=\tuple ac}=t_{\tuple vx=\tuple ad})\wedge\theta_0^*(\tuple t)\biggr).
\end{align*}
\end{itemize}
\end{proof}

\begin{theorem}\label{thm:MCFO-ind}
	$\MC(\FO(\cpind))$ is in $\EXPSPACE$ and $\NEXPTIME$-hard.
\end{theorem}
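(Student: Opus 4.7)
The plan is to prove the two bounds separately, invoking Lemma \ref{probind_lemma} for the upper bound and a chain of polynomial-time translations through $\FO(\dep(\cdot))$ for the lower bound.

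For the $\EXPSPACE$ upper bound, I apply Lemma \ref{probind_lemma} to the input sentence $\phi\in\FO(\cpind)$ and structure $\A$. Since $\phi$ has no free variables, the only distribution over the empty set of variables is $\X_\emptyset$, so $\A\models\phi$ is equivalent to $\phi$ being satisfiable in $\A$. By the lemma this reduces to deciding whether $(\mathbb{R},+,\times,\leq,0,1)\models\psi_{\phi,\A}$. I then perform a careful size analysis: the construction introduces weight variables $s_{\tuple v=\tuple a}$ indexed by valuations, so at quantifier depth $d$ it allocates at most $|A|^{d+1}$ new real variables, giving $|\psi_{\phi,\A}|=|A|^{O(|\phi|)}=2^{O(|\phi|\log|A|)}$, singly exponential in the input. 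The crucial second observation is that since $\phi$ contains no Boolean negation $\bn$, the clause of the inductive construction that produces $\neg\theta_0^*$ is never invoked; all other clauses introduce only existential quantification and positive Boolean combinations of polynomial (in)equalities. Thus $\psi_{\phi,\A}$ lies in the \emph{existential} theory of the ordered reals, which by Canny's theorem is decidable in polynomial space in the formula size. Polynomial space in an exponentially sized formula yields $\EXPSPACE$ overall.

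For the $\NEXPTIME$-hardness, I reduce from the combined-complexity model-checking problem for existential second-order logic $\ESO$, which is $\NEXPTIME$-complete by Fagin's theorem. By V\"a\"an\"anen's characterisation of dependence logic, every $\ESO$-sentence admits a polynomial-time translation to an equivalent sentence of $\FO(\dep(\cdot))$ over finite structures. Composing this with the polynomial-time translation of Proposition \ref{prop_depcpind}, which embeds $\FO(\dep(\cdot))$ into $\FO(\cpind)$, yields a polynomial-time many-one reduction $\MC(\ESO)\leq_p\MC(\FO(\cpind))$.

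The main obstacle I anticipate is the combined size-and-shape analysis of $\psi_{\phi,\A}$: one must simultaneously verify the singly exponential size bound and the \emph{existential} nature of the produced arithmetic formula. Missing the latter would force the use of the full first-order theory of the reals, whose best known bounds would push the upper bound beyond $\EXPSPACE$. Conversely, the size blow-up is unavoidable because weight variables must be introduced for every valuation of the currently bound variables at each depth; this is why the bound is $\EXPSPACE$ rather than $\PSPACE$.
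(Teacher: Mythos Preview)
Your proof is correct and follows essentially the same approach as the paper: the upper bound uses Lemma~\ref{probind_lemma}, observes that without $\sim$ the resulting arithmetic sentence is purely existential, and invokes the $\PSPACE$ bound for the existential theory of the reals on an exponentially sized formula; the lower bound goes through dependence logic via Proposition~\ref{prop_depcpind}. The only minor difference is that for the lower bound the paper cites directly the $\NEXPTIME$-completeness of $\MC(\FO(\dep(\dots)))$ from Gr\"adel, whereas you unwind this one step further to $\MC(\ESO)$---this is the same argument with one extra link in the chain (though note that the $\NEXPTIME$-completeness of combined $\MC(\ESO)$ is not ``Fagin's theorem'' per se; Fagin's theorem concerns data complexity).
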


\begin{proof}
For the lower bound, we use the fact that dependence atoms can be expressed by using probabilistic independence atoms.
Let $\A$ be a structure and $\X$ be a probabilistic team over $\A$. 
Then $\A\models_\X \dep(\tuple x ,\y) \iff  \A\models_\X \y\pind_\x \y$~\cite[Prop.~3]{HannulaHKKV19}.
The $\NEXPTIME$-hardness follows since the model checking problem for $\FO(\dep(\dots))$ is $\NEXPTIME$-complete~\cite[Thm.~5.2]{Gradel13}. 


The upper-bound follows from the fact that when restricted to $\FO(\cpind)$, the exponential translation in Lemma \ref{probind_lemma} is an existential sentence, and the existential theory of the reals is in $\PSPACE$. 
\end{proof}

\begin{theorem}\label{thm:sent-mc-sim-ind-expspace-aexp-poly}
$\MC(\FO(\sim,\cpind))$ is in $\threeEXPSPACE$ and $\AEXPTIME[\poly]$-hard.
\end{theorem}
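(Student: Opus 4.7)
For the upper bound, I plan to combine Lemma~\ref{probind_lemma} with a decision procedure for the first-order theory of the ordered reals. Given a sentence $\phi\in\FO(\sim,\cpind)$ and a finite structure $\A$, the lemma yields a first-order sentence $\psi_{\phi,\A}$ over $(\mathbb{R},+,\times,\leq,0,1)$ such that $\A\models\phi$ iff $(\mathbb{R},+,\times,\leq,0,1)\models\psi_{\phi,\A}$. An inductive inspection of the construction shows that each nested quantifier or disjunction introduces at most $|A|^{q}$ fresh real-valued variables, with $q$ bounding the number of team variables in scope, so that $|\psi_{\phi,\A}|\leq|\A|^{O(|\phi|)}$; its prenex form has $O(|\phi|)$ alternating quantifier blocks, each of size at most $|\A|^{O(|\phi|)}$. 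Applying Renegar's bounds for deciding the first-order theory of a real closed field, whose running time for a prenex sentence with $\omega$ blocks of sizes $k_1,\dots,k_\omega$, $s$ polynomial predicates, and degree $d$ is $(sd)^{2^{O(\omega)}\prod_i k_i}$, would then give a triple-exponential-time decision procedure in $|\phi|+|\A|$, placing the problem in $\threeEXPSPACE$.

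For the lower bound, the plan is a polynomial-time reduction from the truth-evaluation problem for alternating dependency-quantified Boolean formulas (ADQBF), which is $\AEXPTIME[\poly]$-complete by \cite{DBLP:journals/corr/HannulaKLV16}. Given an instance
\[
\Phi \;=\; Q_1\tuple z_1/\tuple S_1\;\cdots\;Q_m\tuple z_m/\tuple S_m\,\psi,
\]
where $Q_i\in\{\forall,\exists\}$ alternates and $\tuple S_i\subseteq\tuple z_1\cdots\tuple z_{i-1}$ prescribes the Henkin-style dependencies of $\tuple z_i$, I would construct in polynomial time an $\FO(\sim,\cpind)$-sentence $\phi_\Phi$ over the fixed two-element structure $\A=(\{0,1\},0,1)$ by translating each quantifier block into its first-order counterpart together with a dependence atom $\dep(\tuple S_i,\tuple z_i)$, which is expressible in $\FO(\cpind)$ by Proposition~\ref{prop_depcpind}. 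Since plain team-quantifier alternation in $\FO(\cpind)$ collapses adversarial choices into a single existential second-order quantification (as witnessed by the $\NEXPTIME$ upper bound in Theorem~\ref{thm:MCFO-ind}), the alternating character of ADQBF must be reintroduced by interleaving the quantifier blocks with Boolean negations $\sim$, adapting the propositional team-logic encoding of \cite{DBLP:journals/corr/HannulaKLV16}. A routine semantic argument would then show $\A\models\phi_\Phi$ iff $\Phi$ is true.

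The main obstacle I expect is in the lower bound: one must argue that the interleaving of $\sim$ with $\cpind$-dependence atoms faithfully simulates the adversarial Henkin quantification of ADQBF, even though $\sim$ only globally negates team satisfaction rather than switching players pointwise. A secondary concern arises in the upper bound, where careless bookkeeping of the alternations introduced by $\sim$ after prenexing $\psi_{\phi,\A}$ would yield only a tower-of-exponentials bound; the analysis must keep the alternation count linear in $|\phi|$ and the block sizes bounded by $|\A|^{O(|\phi|)}$ in order to extract exactly $\threeEXPSPACE$.
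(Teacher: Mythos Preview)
Your upper-bound argument is essentially the paper's: translate via Lemma~\ref{probind_lemma} to a first-order sentence over the ordered real field and then invoke a decision procedure. The paper simply observes that the translation is of exponential size and that the full first-order theory of $(\mathbb{R},+,\times,\leq,0,1)$ is decidable in $\twoEXPSPACE$, which immediately yields $\threeEXPSPACE$ without any block-by-block analysis of Renegar's procedure. Your more careful accounting of alternations and block sizes is not wrong, and in fact could be pushed to a sharper bound than the paper states, but for the stated theorem the coarse argument suffices and avoids the bookkeeping you flag as a secondary concern.

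For the lower bound, however, the paper takes a much shorter route that sidesteps your ``main obstacle'' entirely. Instead of reducing from ADQBF and having to simulate Henkin-style adversarial quantification via interleaved $\sim$ and dependence atoms, the paper reduces from $\SAT(\PL(\sim))$, the satisfiability problem for propositional team logic with Boolean negation, which is already $\AEXPTIME[\poly]$-complete by~\cite{HannulaKVV18}. The reduction is almost syntactic: given a $\PL(\sim)$-formula $\phi$ over propositions $p_1,\dots,p_n$, replace each $p_i$ by a first-order variable $x_i$, prefix with $\exists x_1\cdots\exists x_n$, and evaluate over the two-element structure $\{0,1\}$. No independence or dependence atoms are needed at all; the hardness already holds for $\MC(\FO(\sim))$ and transfers to the larger logic. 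The point is that $\PL(\sim)$ already lives in a team-semantic world, so there is nothing to simulate---the alternating complexity is baked into the source problem rather than reconstructed on the target side. Your ADQBF approach is plausible but would require exactly the delicate argument you worry about, and the paper shows it is unnecessary.
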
 
\begin{proof} 
	We first prove the lower bound through a reduction from the satisfiability problem for propositional team-based logic, that is, $\SAT(\PL(\sim))$.
	Given a $\PL(\sim)$-formula $\phi$, the problems asks whether there is a team $T$ such that $T\models\phi$?
	Let $\phi$ be a $\PL(\sim)$-formula over propositional variables $p_1,\ldots, p_n$.
	For $i\leq n$, let $x_i$ denote a variable corresponding to the proposition $p_i$.
	Let $\A=\{0,1\}$ be the structure over empty vocabulary.
	Then, $\phi$ is satisfiable iff $\exists p_1\ldots \exists p_n \phi$ is satisfiable iff $\A \models_{\{\emptyset\}} \exists x_1\ldots \exists x_n \phi'$, where $\phi'$ is a $\FO(\sim)$-formula obtained from $\phi$ by simply replacing each proposition $p_i$ by the variable $x_i$.
	This gives $\AEXPTIME$-hardness of $\MC(\FO(\sim))$ (and consequently, of $\MC(\FO(\sim,\cpind))$) since the satisfiability for $\PL(\sim)$ is $\AEXPTIME$-complete~\cite{HannulaKVV18}.

The upper-bound follows from the exponential translation from $\FO(\sim,\cpind)$ to real arithmetic in Lemma \ref{probind_lemma} and the fact that the full theory of the reals is in $\twoEXPSPACE$.
\end{proof}

\begin{theorem} \label{thm:ind-sim-val-sat}
	$\SAT(\FO(\cpind,\sim))$ is $\RE$- and $\VAL(\FO(\cpind,\sim))$ is $\coRE$-complete.
\end{theorem}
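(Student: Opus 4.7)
\textbf{Upper bounds.} The strategy is to combine Lemma~\ref{probind_lemma} with Tarski's decidability of the first-order theory of the real closed field $(\mathbb{R},+,\times,\leq,0,1)$. Given an input $\phi \in \FO(\cpind,\sim)$ and a finite structure $\A$ over the vocabulary of $\phi$, the lemma produces an effectively computable first-order sentence $\psi_{\phi,\A}$ over $\{+,\times,\leq,0,1\}$ such that $\phi$ is satisfiable in $\A$ iff $(\mathbb{R},+,\times,\leq,0,1)\models \psi_{\phi,\A}$. Since the theory of real closed fields is decidable, the predicate ``$\phi$ is satisfiable in $\A$'' is computable. A semi-decision procedure for $\SAT(\FO(\cpind,\sim))$ then enumerates all finite structures (up to isomorphism) over the vocabulary of $\phi$ and halts as soon as one is found in which $\phi$ is satisfiable; this places $\SAT$ in $\RE$. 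For $\VAL$, note that by the semantics of $\sim$, $\phi$ fails to be valid iff there exist a finite structure $\A$ and a probabilistic team $\X$ over $\Fr(\phi)$ with $\A\models_{\X}\sim\phi$, i.e.\ iff $\sim\phi$ is satisfiable. Hence non-validity is r.e., placing $\VAL$ in $\coRE$.

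\textbf{Lower bounds.} Here the plan is to reduce from Trakhtenbrot's theorem, which states that the finite satisfiability problem of (relational) first-order logic is $\RE$-complete and, dually, the finite validity problem is $\coRE$-complete. For any FO sentence $\phi$ viewed syntactically as a $\FO(\cpind,\sim)$-sentence, the semantic clauses together with locality (Proposition~\ref{locality2}) give $\A\models_{\X_{\emptyset}}\phi$ iff $\A\models\phi$ in the classical Tarski sense: the initial trivial team propagates through the first-order quantifier clauses in the usual way, and on atomic formulas the probabilistic team semantics reduces to Tarski satisfaction over the support. Consequently, the identity map $\phi\mapsto\phi$ is a polynomial-time reduction from finite satisfiability of FO to $\SAT(\FO(\cpind,\sim))$, giving $\RE$-hardness, and the same identity map reduces finite validity of FO to $\VAL(\FO(\cpind,\sim))$, giving $\coRE$-hardness.

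\textbf{Main obstacle.} The only nontrivial ingredient is Lemma~\ref{probind_lemma}, which has already been established. Everything else is a routine combination of Tarski's theorem on real closed fields (for the upper bounds) with Trakhtenbrot's theorem on finite (un)satisfiability of first-order logic (for the lower bounds), together with the observation that on FO-sentences the probabilistic team semantics of $\FO(\cpind,\sim)$ collapses to classical Tarski semantics.
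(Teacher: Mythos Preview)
Your proposal is correct and matches the paper's proof essentially line for line: Lemma~\ref{probind_lemma} plus Tarski's decidability of real arithmetic gives the $\RE$ upper bound by enumerating finite structures, the Boolean negation $\sim$ yields the $\coRE$ upper bound for validity by duality, and the lower bounds come from the fact that $\FO(\cpind,\sim)$ is a conservative extension of $\FO$ together with Trakhtenbrot's theorem. The only cosmetic difference is that the paper derives both the upper and lower bound for $\VAL$ in one stroke from the $\SAT$ result via $\VAL(\phi)\Leftrightarrow\neg\SAT(\sim\phi)$, whereas you spell out the validity lower bound separately via the identity embedding of $\FO$; both routes are equivalent.
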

\begin{proof}

It suffices to prove the claim for $\SAT(\FO(\cpind,\sim))$, since the claim for $\VAL(\FO(\cpind,\sim))$ follows from the fact that $\FO(\cpind,\sim)$ has the Boolean negation.

For the lower bound, note that $\FO(\cpind,\sim)$ is a conservative extension of $\FO$, and hence the claim follows from the r.e.-hardness of $\textsc{SAT}(\logicFont{FO})$ over the finite.

For the upper-bound, we use Lemma \ref{probind_lemma}. Let $\phi$ be a satisfiable formula of $\FO(\cpind,\sim)$. We can verify that $\phi\in\textsc{SAT}(\FO(\cpind,\sim))$ by going through all finite structures until we come across a structure in which $\phi$ is satisfiable. Hence, it suffices to show that for any finite structure $\A$, it is decidable to check whether $\phi$ is satisfiable in $\A$.  For this, construct a sentence $\psi_{\A,\phi}$ as in Lemma \ref{probind_lemma}. Then $\psi_{\A,\phi}$ is such that $\phi$ is satisfiable in $\A$ iff $(\mathbb{R},+,\times,\leq,0,1)\models\psi_{\A,\phi}$. Since real arithmetic is decidable, we now have that $\SAT(\FO(\cpind,\sim))$ is $\RE$-complete. 
\end{proof}

\begin{corollary}
$\SAT(\FO(\approx))$ and $\SAT(\FO(\cpind))$ are $\RE$- and $\VAL(\FO(\approx))$ and $\VAL(\FO(\cpind))$ are $\coRE$-complete.
\end{corollary}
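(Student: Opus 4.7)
The plan is to reduce both claims to Theorem~\ref{thm:ind-sim-val-sat} for the richer logic $\FO(\cpind,\sim)$, using the translations already available in the paper. First I would note that by Proposition~\ref{prop_approxcpind} we have $\FO(\approx)\leq \FO(\cpind)$, and trivially $\FO(\cpind)\leq\FO(\cpind,\sim)$; moreover both embeddings are effective on formulas. Consequently, given any input formula $\phi\in\FO(\approx)$ or $\phi\in\FO(\cpind)$, we can in polynomial time produce an equivalent formula $\phi'\in\FO(\cpind,\sim)$.

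For the upper bounds, the $\SAT$ case is immediate: $\phi$ is satisfiable iff $\phi'$ is, so $\SAT(\FO(\approx)),\SAT(\FO(\cpind))\leq_m\SAT(\FO(\cpind,\sim))$, which is in $\RE$ by Theorem~\ref{thm:ind-sim-val-sat}. For $\VAL$, the key observation is that although the sublogics lack Boolean negation, one can still build $\sim\phi'\in\FO(\cpind,\sim)$ and note that $\phi$ is valid iff $\sim\phi'$ is unsatisfiable. Thus $\VAL(\FO(\approx)),\VAL(\FO(\cpind))$ reduce to the complement of $\SAT(\FO(\cpind,\sim))$, which lies in $\coRE$. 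Alternatively, one can apply Lemma~\ref{probind_lemma} directly to enumerate finite structures $\A$ and, for each $\A$, use decidability of real arithmetic on the sentence $\psi_{\phi,\A}$ (resp.\ on the corresponding sentence asserting the existence of a refuting team) to semi-decide (non-)satisfiability of $\phi$ in $\A$.

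For the lower bounds, I would argue that $\FO(\approx)$ and $\FO(\cpind)$ are conservative extensions of $\FO$: on a first-order formula $\phi$ (built without marginal identity or conditional independence atoms), probabilistic team semantics is flat, that is, $\A\models_\X\phi$ iff $\A\models_s\phi$ in Tarski semantics for every $s\in\supp(\X)$. From this, $\phi$ is satisfiable (resp.\ valid) in the probabilistic sense over finite structures iff it is so in the classical Tarski sense over finite structures. Trakhtenbrot's theorem then yields $\RE$-hardness of $\SAT$ and $\coRE$-hardness of $\VAL$ for both logics.

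The only step that requires a small amount of care is the flatness claim used to transfer the classical finite satisfiability/validity lower bounds; this is a routine induction on $\FO$-formulas showing that the probabilistic semantic clauses for literals, $\wedge$, (split) $\lor$, $\exists$, and $\forall$ collapse to pointwise Tarski satisfaction on the support of $\X$, analogous to the well-known flatness property of ordinary team semantics. With that in hand, the corollary follows by combining the effective inclusions with Theorem~\ref{thm:ind-sim-val-sat}.
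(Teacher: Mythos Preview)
Your proposal is correct and follows essentially the same approach as the paper: upper bounds via the effective inclusions into $\FO(\cpind,\sim)$ and Theorem~\ref{thm:ind-sim-val-sat}, lower bounds via the conservative extension of $\FO$ (flatness) and Trakhtenbrot. The only minor difference is that for the $\VAL$ upper bound you detour through $\sim\phi'$ and unsatisfiability, whereas the paper simply cites that $\VAL(\FO(\cpind,\sim))\in\coRE$ from Theorem~\ref{thm:ind-sim-val-sat} and reduces $\VAL$ to $\VAL$ directly; both are correct and amount to the same thing, since the paper's $\coRE$ bound for $\VAL(\FO(\cpind,\sim))$ is itself obtained via the Boolean negation.
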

\begin{proof}
The lower bound follows from the fact that $\FO(\approx)$ and $\FO(\cpind)$ are both conservative extensions of $\FO$. We obtain the upper bound from the previous theorem, since $\FO(\cpind,\sim)$ includes both $\FO(\approx)$ and $\FO(\cpind)$.
\end{proof} 




\section{Conclusion}
We have studied the expressivity and  complexity of various logics in probabilistic team semantics with the Boolean negation. 
Our results give a quite comprehensive picture of the relative expressivity of these logics and their relations to numerical variants of (existential) second-order logic. 
An interesting question for further study is to determine the exact complexities of the decision problems studied in Section 8. 
Furthermore, dependence atoms based on various notions of entropy deserve further study, as do the connections of probabilistic team semantics to the field of information theory.

\subsubsection{Acknowledgements} 
The first author is supported by the ERC grant 101020762.
The second author is supported by Academy of Finland grant 345634.
The third author is supported by Academy of Finland grants 338259 and 345634. 
The fifth author appreciates funding by the German Research Foundation (DFG), project ME 4279/3-1. 
The sixth author is partially funded by the German Research Foundation (DFG), project VI 1045/1-1.

\bibliographystyle{splncs04}
\bibliography{main}

\end{document}